\documentclass[11pt]{article}
\usepackage{jheppub}
\usepackage{bm,bbm}
\usepackage{booktabs,amsmath}
\usepackage{accents}
\usepackage{multirow}
\usepackage{graphics}
\usepackage{braket}
\usepackage{mathtools}
\usepackage{comment}
\usepackage{autobreak}
\usepackage{subcaption}
\usepackage{enumerate}
\usepackage{blkarray}
\usepackage{longtable}
\usepackage{url}

%%tikz set up
\usepackage{tikz}
\usetikzlibrary{patterns}
\usetikzlibrary{arrows,shapes,positioning}
\usetikzlibrary{decorations.markings}
\usetikzlibrary{decorations.pathmorphing}
\tikzset{snake it/.style={decorate, decoration=snake}}

\hypersetup{
    colorlinks=true,
    linkcolor=blue,
    filecolor=magenta,      
    urlcolor=cyan,
    pdfpagemode=FullScreen,
    }

%% Matrix environment
\usepackage{arydshln}
\usepackage{mathtools}

%% Set spacing between paragraphs
\edef\restoreparindent{\parindent=\the\parindent\relax}
\usepackage{parskip}
\restoreparindent
\usepackage{ascmac}
\usepackage{mathrsfs}

%% Theorem environment
\usepackage{amsthm}
\newtheoremstyle{break}
  {\topsep}{\topsep}%
  {\upshape}{}%
  {\bfseries}{}%
  {\newline}{}%
\theoremstyle{break}
\newtheorem{proposition}{Proposition}[section]
\newtheorem{theorem}[proposition]{Theorem}
\newtheorem{corollary}[proposition]{Corollary}

%% Definitions of Symbols

\def\Tr{{\rm Tr}}

\def\i{{\rm i}}
%Calligraphic letters

\def\CH{{\cal H}}

\def\CN{{\cal N}}
\def\CO{{\cal O}}

%Bold letters

\def\BF{\mathbb{F}}
\def\BN{\mathbb{N}}

\def\BR{\mathbb{R}}

\def\BZ{\mathbb{Z}}

%Non italic letters

%Mathematical symbols

%Group symbols

%White circle

%Underbarred numbers

% Comment

%
\setcounter{MaxMatrixCols}{36}

%%%%%%% title page %%%%%%%%%
\title{%\boldmath
Fermionic CFTs from classical codes over finite fields}

\author[a,b]{Kohki Kawabata}
\author[a]{and Shinichiro Yahagi}

\affiliation[a]{Department of Physics, Faculty of Science,
The University of Tokyo,\\
Bunkyo-Ku, Tokyo 113-0033, Japan}

\affiliation[b]{Department of Physics, Osaka University,\\
Machikaneyama-Cho 1-1, Toyonaka 560-0043, Japan}

% e-mail addresses: one for each author, in the same order as the authors
%\emailAdd{first@one.univ}

\preprint{OU-HET-1178}

\abstract{We construct a class of chiral fermionic CFTs from classical codes over finite fields whose order is a prime number. We exploit the relationship between classical codes and Euclidean lattices to provide the Neveu–Schwarz sector of fermionic CFTs.
On the other hand, we construct the Ramond sector using the shadow theory of classical codes and Euclidean lattices.
We give various examples of chiral fermionic CFTs through our construction.
We also explore supersymmetric CFTs in terms of classical codes by requiring the resulting fermionic CFTs to satisfy some necessary conditions for supersymmetry.
}

%%%%%%%%  document begins  %%%%%%%%%
\begin{document}
\maketitle
\flushbottom

\section{Introduction}
In this paper, we construct a class of chiral fermionic CFTs from classical codes.
The construction of CFTs from error-correcting codes has a long history, which started with chiral bosonic CFTs from a specific class of classical binary codes \cite{frenkel1984natural,frenkel1989vertex,Dolan:1994st}.\footnote{More recently, the construction of non-chiral CFTs has been developed using quantum error-correcting codes \cite{Dymarsky:2020bps,Dymarsky:2020qom,Kawabata:2022jxt} and a class of classical codes \cite{Dymarsky:2021xfc,Yahagi:2022idq,Angelinos:2022umf}. The further progress in this direction can be found in \cite{Dymarsky:2020pzc,Henriksson:2021qkt,Henriksson:2022dnu,Buican:2021uyp,Furuta:2022ykh,Henriksson:2022dml,Dymarsky:2022kwb}.}
Recently, another type of classical code called ternary codes, which are based on the finite field $\BF_3=\{0,1,2\}$, have been exploited to construct chiral fermionic CFTs \cite{Gaiotto:2018ypj}.
This paper generalizes the construction of fermionic CFTs to classical $p$-ary codes.
Classical $p$-ary codes are a natural generalization of binary and ternary codes corresponding to $p=2$ and $p=3$, respectively.
For a prime number $p$, classical $p$-ary codes are based on the finite field $\BF_p=\{0,1,2,\cdots,p-1\}$ and can be formulated in the same manner as the binary and ternary case.
The goal is to extend the construction from ternary codes to classical codes over $\BF_p$ where $p$ is a prime number.

In analogy with the ternary case \cite{Gaiotto:2018ypj}, one of the main tools for building up chiral fermionic CFTs is the relationship between classical codes and Euclidean lattices \cite{leech1971sphere,conway2013sphere}.
We exploit the relationship to construct Euclidean lattices from classical codes.
Each Euclidean lattice can be associated with a set of vertex operators and provides the Neveu-Schwarz (NS) sector of the corresponding fermionic CFT.
On the other hand, constructing the Ramond (R) sector requires us to introduce the shadow of a lattice \cite{conway1990new}, which is a kind of half-shifts of a Euclidean lattice.
The shadow of a lattice can be uniquely determined for each Euclidean lattice constructed from classical codes.
Although the shadow has been defined purely in coding theory, it helps us to yield the R sector of our fermionic CFTs.
In fact, we endow with the R sector by associating each element of the shadow with a vertex operator as in the NS sector (see the left panel of Fig.\ref{fig:overview}).

In particular, the binary case allows us to construct the R sector more directly from classical codes.
While some classical binary codes have been known to provide chiral bosonic CFTs~\cite{frenkel1984natural,frenkel1989vertex,Dolan:1994st}, we employ a different type of binary code to construct chiral fermionic CFTs.
For that class of binary codes, one can introduce the shadow of a code~\cite{conway1990newcode}, which is the analogous notion with the shadow of a lattice. 
It is known that the shadow of a binary code can be lifted to the shadow of the lattice constructed from a code \cite{elkies1999lattices}.
Combining this relationship with the one between the shadow of a lattice and the R sector, we can endow with the R sector from the shadow of a binary code (see the right panel of Fig.\ref{fig:overview}).

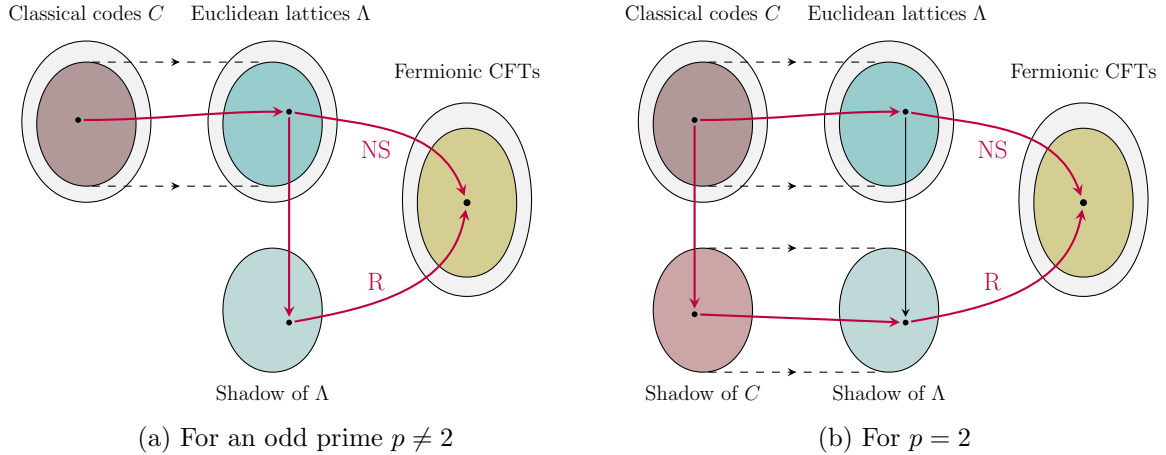
\begin{figure}
    \centering
    \begin{tabular}{cc}
    \begin{minipage}[t]{0.5\linewidth}
    \begin{tikzpicture}[transform shape, scale=0.55, >=stealth]
    \centering
        \begin{scope}[xshift=0cm]
            \begin{scope}[scale=1.3, yshift=-0.3cm]
                \filldraw[fill=lightgray!20] (0,0) to[out=180,in=-90] (-1.2,1.5) to[out=90,in=180] (0,3) to [out=0,in=90] (1.2,1.5) to[out=-90,in=0] (0,0);
            \end{scope}
            
            \filldraw[fill=pink!40!gray] (0,0) to[out=180,in=-90] (-1.2,1.5) to[out=90,in=180] (0,3) to [out=0,in=90] (1.2,1.5) to[out=-90,in=0] (0,0);
            
            \node (lefttop) at (0,3) {};
            \node (leftbot) at (0,0) {};
            
            \fill (-0.2,1.6) node (ul) {} circle[radius=0.07cm];
        \end{scope}
        
        \begin{scope}[xshift=4.5cm]
            \begin{scope}[scale=1.3, yshift=-0.3cm]
                \filldraw[fill=lightgray!20] (0,0) to[out=180,in=-90] (-1.2,1.5) to[out=90,in=180] (0,3) to [out=0,in=90] (1.2,1.5) to[out=-90,in=0] (0,0);
            \end{scope}
            
            \filldraw[fill=teal!40] (0,0) to[out=180,in=-90] (-1.2,1.5) to[out=90,in=180] (0,3) to [out=0,in=90] (1.2,1.5) to[out=-90,in=0] (0,0);
            
            \node (midtop) at (0,3) {};
            \node (midbot) at (0,0) {};
            
            \fill (0.4,1.8) node (um) {} circle[radius=0.07cm];
        \end{scope}
        
    \begin{scope}[xshift=9.2cm,yshift=-2.2cm, scale=1.2]
        \begin{scope}[scale=1.3, yshift=-0.3cm]
            \filldraw[fill=lightgray!20] (0,0) to[out=180,in=-90] (-1.0,1.5) to[out=90,in=180] (0,3) to [out=0,in=90] (1.0,1.5) to[out=-90,in=0] (0,0);
        \end{scope}
        \filldraw[fill=olive!40] (0,0) to[out=180,in=-90] (-1.0,1.5) to[out=90,in=180] (0,3) to [out=0,in=90] (1.0,1.5) to[out=-90,in=0] (0,0);
            
        \node (righttop) at (0,3) {};
        \node (rightbot) at (0,0) {};
        
        \fill (0.,1.5) node (ur) {} circle[radius=0.07cm];
    \end{scope}
    
    \begin{scope}[yshift=-4.5cm]
        
        \begin{scope}[xshift=4.5cm]
            \filldraw[fill=lightgray!50!teal!40] (0,0) to[out=180,in=-90] (-1.2,1.5) to[out=90,in=180] (0,3) to [out=0,in=90] (1.2,1.5) to[out=-90,in=0] (0,0);
            
            \node (smidtop) at (0,3) {};
            \node (smidbot) at (0,0) {};
            
            \fill (0.4,1.2) node (dm) {} circle[radius=0.07cm];
        \end{scope}
    \end{scope}
        
        \draw[->, >=stealth, purple, thick] (ul) to[out=0,in=180] (um);
        \draw[->, >=stealth, purple, thick] (um) to[out=-10,in=110] (ur);
        \draw[->, >=stealth, purple, thick] (dm) to[out=10,in=260] (ur);
        \draw[->, >=stealth, purple, thick] (um) to[out=-90,in=90] (dm);
    
        \begin{scope}[decoration={markings, mark=at position 0.5 with {\arrow{>}}}]
            \draw[dashed, postaction=decorate] (lefttop)--(midtop);
            \draw[dashed, postaction=decorate] (leftbot)--(midbot);
        \end{scope}

        \draw[thick] (0,4.2) node {\Large \textrm{Classical codes $C$}};
        \draw[thick] (4.7,4.2) node {\Large \textrm{Euclidean lattices $\Lambda$}};
        \draw[thick] (9.2,2.8) node {\Large\textrm{Fermionic CFTs}};
        
        \draw (7,0.9) node[very thick,purple] {\LARGE{$\mathrm{NS}$}};
        \draw (7,-2.3) node[very thick,purple] {\LARGE{$\mathrm{R}$}};
        
        \draw[thick] (4.5,-5) node {\Large\textrm{Shadow of $\Lambda$}};
        
        \end{tikzpicture}
        \subcaption{For an odd prime $p\neq2$}
    \end{minipage} &
    \begin{minipage}[t]{0.47\linewidth}
    \begin{tikzpicture}[transform shape, scale=0.55, >=stealth]
    \centering
        \begin{scope}[xshift=0cm]
            \begin{scope}[scale=1.3, yshift=-0.3cm]
                \filldraw[fill=lightgray!20] (0,0) to[out=180,in=-90] (-1.2,1.5) to[out=90,in=180] (0,3) to [out=0,in=90] (1.2,1.5) to[out=-90,in=0] (0,0);
            \end{scope}
            
            \filldraw[fill=pink!40!gray] (0,0) to[out=180,in=-90] (-1.2,1.5) to[out=90,in=180] (0,3) to [out=0,in=90] (1.2,1.5) to[out=-90,in=0] (0,0);
            
            \node (lefttop) at (0,3) {};
            \node (leftbot) at (0,0) {};
            
            \fill (-0.2,1.6) node (ul) {} circle[radius=0.07cm];
        \end{scope}
        
        \begin{scope}[xshift=4.5cm]
            \begin{scope}[scale=1.3, yshift=-0.3cm]
                \filldraw[fill=lightgray!20] (0,0) to[out=180,in=-90] (-1.2,1.5) to[out=90,in=180] (0,3) to [out=0,in=90] (1.2,1.5) to[out=-90,in=0] (0,0);
            \end{scope}
            
            \filldraw[fill=teal!40] (0,0) to[out=180,in=-90] (-1.2,1.5) to[out=90,in=180] (0,3) to [out=0,in=90] (1.2,1.5) to[out=-90,in=0] (0,0);
            
            \node (midtop) at (0,3) {};
            \node (midbot) at (0,0) {};
            
            \fill (0.4,1.8) node (um) {} circle[radius=0.07cm];
        \end{scope}
        
    \begin{scope}[xshift=9.2cm,yshift=-2.2cm, scale=1.2]
        \begin{scope}[scale=1.3, yshift=-0.3cm]
            \filldraw[fill=lightgray!20] (0,0) to[out=180,in=-90] (-1.0,1.5) to[out=90,in=180] (0,3) to [out=0,in=90] (1.0,1.5) to[out=-90,in=0] (0,0);
        \end{scope}
        \filldraw[fill=olive!40] (0,0) to[out=180,in=-90] (-1.0,1.5) to[out=90,in=180] (0,3) to [out=0,in=90] (1.0,1.5) to[out=-90,in=0] (0,0);
            
        \node (righttop) at (0,3) {};
        \node (rightbot) at (0,0) {};
        
        \fill (0.,1.5) node (ur) {} circle[radius=0.07cm];
    \end{scope}
    
    \begin{scope}[yshift=-4.5cm]
    
        \filldraw[fill=pink!60!gray] (0,0) to[out=180,in=-90] (-1.2,1.5) to[out=90,in=180] (0,3) to [out=0,in=90] (1.2,1.5) to[out=-90,in=0] (0,0);
        
        \node (slefttop) at (0,3) {};
        \node (sleftbot) at (0,0) {};
        
        \fill (-0.2,1.4) node (dl) {} circle[radius=0.07cm];
        
        \begin{scope}[xshift=4.5cm]
            \filldraw[fill=lightgray!50!teal!40] (0,0) to[out=180,in=-90] (-1.2,1.5) to[out=90,in=180] (0,3) to [out=0,in=90] (1.2,1.5) to[out=-90,in=0] (0,0);
            
            \node (smidtop) at (0,3) {};
            \node (smidbot) at (0,0) {};
            
            \fill (0.4,1.2) node (dm) {} circle[radius=0.07cm];
        \end{scope}
    \end{scope}
        
        \draw[->, >=stealth, purple, thick] (ul) to[out=0,in=180] (um);
        \draw[->, >=stealth, purple, thick] (ul) to (dl);
        \draw[->, >=stealth, purple, thick] (dl) to (dm);
        \draw[->, >=stealth, purple, thick] (um) to[out=-10,in=110] (ur);
        \draw[->, >=stealth, purple, thick] (dm) to[out=10,in=260] (ur);
        \draw[->, >=stealth, black] (um) to[out=-90,in=90] (dm);
    
        \begin{scope}[decoration={markings, mark=at position 0.5 with {\arrow{>}}}]
            \draw[dashed, postaction=decorate] (lefttop)--(midtop);
            \draw[dashed, postaction=decorate] (leftbot)--(midbot);
        \end{scope}
        
        \begin{scope}[decoration={markings, mark=at position 0.5 with {\arrow{>}}}]
            \draw[dashed, postaction=decorate] (slefttop)--(smidtop);
            \draw[dashed, postaction=decorate] (sleftbot)--(smidbot);
        \end{scope}
        
        \draw[thick] (0,4.2) node {\Large \textrm{Classical codes $C$}};
        \draw[thick] (4.7,4.2) node {\Large \textrm{Euclidean lattices $\Lambda$}};
        \draw[thick] (9.2,2.8) node {\Large\textrm{Fermionic CFTs}};
        
        \draw (7,0.9) node[very thick,purple] {\LARGE $\mathrm{NS}$};
        \draw (7,-2.3) node[very thick,purple] {\LARGE$\mathrm{R}$};
        
        \draw[thick] (0,-5) node {\Large\textrm{Shadow of $C$}};
        \draw[thick] (4.5,-5) node {\Large\textrm{Shadow of $\Lambda$}};
        
        \end{tikzpicture}
        \subcaption{For $p=2$}
    \end{minipage}
    \end{tabular}
    \caption{An illustration of our construction of fermionic CFTs from classical codes for an odd prime $p\neq2$ (the left panel) and $p=2$ (the right panel). We employ the relationship between classical codes and Euclidean lattices to provide the NS sector of fermionic CFTs. Each Euclidean lattice can be associated with a unique shadow, which helps us to provide the R sector. In particular, for the binary case ($p=2$), we can introduce the shadow of a code and make use of it to yield the shadow of a lattice and the R sector.
    }
    \label{fig:overview}
\end{figure}

Our construction of fermionic CFTs establishes the relation between the spectrum of classical codes, Euclidean lattices, and fermionic CFTs.
While the spectrum of a lattice and a CFT are captured by the lattice theta function and torus partition function, its counterpart for a classical code $C$ is the complete weight enumerator $W_C(\{x_a\})$.
Let us construct a fermionic CFT from a classical code $C$ through our construction. Then the torus partition functions depending on the choice of spin structures $(\alpha,\beta)\in\BZ_2\times\BZ_2$ can be written as
\begin{align}
    Z^{\alpha,\beta}(\tau) = \frac{1}{\eta(\tau)^n}\,W_C\left(\left\{f_a^{\alpha,\beta}(\tau)\right\}\right)\,,
\end{align}
where $\tau$ is the modulus of the torus, $\eta(\tau)$ is the Dedekind eta function, and $f^{\alpha,\beta}_a(\tau)$ are functions of $\tau$ depending on spin structures $(\alpha,\beta)$.
Regardless of the choice of spin structures, the partition functions with four sectors can be expressed in terms of the complete weight enumerator.
From the property of the complete weight enumerator, the torus partition functions show the expected modular transformations.
A dictionary between codes, lattices, and CFTs for an odd prime $p\neq 2$ and $p=2$ can be found in Table \ref{table:dictionary_odd_prime} and \ref{table:dictionary_2}, respectively.

To figure out the profiles of our class of fermionic CFTs, we explore fermionic CFTs with supersymmetry.
Instead of constructing supercurrents explicitly, we consider some necessary conditions to be supersymmetric.
Along the lines of \cite{Bae:2020xzl,Bae:2021lvk}, we impose the three supersymmetry conditions:
One requires the NS sector to contain a spin-$\frac{3}{2}$ primary operator.
Another requires the energy of the Ramond sector to be positive.
The other requires the Ramond-Ramond torus partition function to be constant.
These conditions are insufficient to guarantee the existence of supersymmetry, but they strongly suggest it.
Our fermionic CFTs enable us to rewrite the supersymmetry conditions in the language of classical codes and verify them easily.
We apply the supersymmetry conditions to various classical codes and obtain non-trivial CFTs likely to be supersymmetric in our construction.

The paper is organized as follows.
In section \ref{sec:classicalcode}, we review the fundamentals of classical linear codes over $\BF_p$ focusing on the mathematical framework of coding theory. We also define an important class of linear codes called self-dual codes underlying our construction.
In section \ref{sec:fermionic_code_CFTs}, we start with reviewing the construction of odd self-dual lattices from self-dual codes, which can be leveraged to the NS sector.
We also explicitly demonstrate the construction of the R sector using a characteristic vector and the shadow of a lattice. When computing the torus partition functions, we give a canonical choice of characteristic vectors for an odd prime $p$ and $p=2$. In particular, for $p=2$, we can exploit the shadow of a code to provide the shadow of a lattice and the R sector. Then, we illustrate our construction using some simple examples of self-dual codes. In section \ref{sec:SUSY}, we explore fermionic CFTs with supersymmetry. After reviewing supersymmetric CFTs, we rewrite the supersymmetry conditions for our fermionic CFTs and reduce them to some simple constraints on classical codes. Using various examples of self-dual codes, we find fermionic CFTs satisfying all the conditions. Section \ref{sec:discussion} concludes with discussions and future directions. Appendix \ref{sec:list} shows the list of our notations used throughout this paper.
In Appendix \ref{app:nonconstant}, we discuss non-constant Ramond-Ramond partition functions focusing on binary self-dual codes.

\section{Classical linear codes over $\BF_p$}
\label{sec:classicalcode}
In this paper, we focus on classical codes over the finite field, especially $\BF_p=\{0,1,\cdots,p-1\}$ where $p$ is a prime number.
This section introduces mathematical fundamentals of classical codes, such as self-duality, the complete weight enumerator, and the MacWilliams identity following \cite{nebe2006self,justesen2004course} (see \cite{macwilliams1977theory,welsh1988codes,conway2013sphere,elkies2000lattices1,elkies2000lattices} for more details).

Mathematically, a classical linear code can be formulated as a vector space over a finite field.
For a prime number $p$, a $p$-ary linear code $C$ of length $n$ is a subspace of $\BF_p^n$. 
Each element $c\in C$ is called a codeword of $C$.
We denote the dimension of $C$ by $k\leq n$.
The standard inner product of codewords taking values in $\BF_p$ can be introduced by 
\begin{align}
\label{eq:metric}
    g = \mathrm{diag}\,(1,1,\cdots,1)\,.
\end{align}
We denote the inner product between elements $x$, $x'\in\BF_p^n$ by $x\cdot x'$.
Then, for a pair of elements $x = (x_1,\cdots,x_n)\in \BF_p^n$, $x'= (x_1',\cdots,x_n')\in \BF_p^n$, we have $x\cdot x' = \sum_i x_i\,x_i'\in\BF_p$.

One can define a linear code $C$ in terms of a generator matrix $G$ and a parity check matrix $H$.
We assume that they satisfy the relation
\begin{align}
\label{eq:generator_parity}
    G\,H^T = 0 \;\;\, \mathrm{mod}\;\,p\,,
\end{align}
where the generator matrix $G$ and parity check matrix $H$ are a full rank $k\times n$ matrix over $\BF_p$ and full rank $ (n-k)\times n$ matrix over $\BF_p$, respectively.
A generator matrix $G$ generates all codewords $c\in C$:
\begin{align}
    c = x \,G\;\;\,\mathrm{mod}\;\;p\,,\qquad x\in \BF_p^k\,,
\end{align}
where $x$ is a $k$-dimensional row vector. Then, a linear code $C$ can be written as
\begin{align}
    C = \left\{c\in\BF_p^n\;\left|\; c = x\,G,\;\;x\in\BF_p^k\right.\right\}\,.
\end{align}
Note that a generator matrix $G$ is not uniquely determined by a linear code $C$. Each generator matrix corresponds to a different choice of a basis of a linear subspace $C$.

Alternatively, we can use a parity check matrix to characterize a linear code $C$.
From the condition \eqref{eq:generator_parity}, a parity check matrix $H$ vanishes all codewords $c$: $c\,H^T = 0$ mod $p$. 
Since a parity check matrix $H$ is full rank, its kernel space is $k$-dimensional.
Then, a parity check matrix determines a linear code $C$ by
\begin{align}
    C = \left\{y\in\BF_p^n\;\left|\; y\,H^T = 0\;\;\mathrm{mod}\;\;p\right.\right\}\,.
\end{align}
As well as a generator matrix, a parity check matrix $H$ is also not unique to a linear code.

Let us consider a linear code $C^\perp$ generated by a parity check matrix $H$ of $C$:
\begin{align}
    C^\perp = \left\{c'\in\BF_p^n\;\left|\; c' = x\,H,\;\;x\in\BF_p^{n-k}\right.\right\}\,.
\end{align}
From the relation \eqref{eq:generator_parity}, the codewords $c'\in C^\perp$ satisfy $c'\,G^T = 0$ mod $p$. Since the matrix $G$ is full rank, the linear code $C^\perp$ can be written as
\begin{align}
    C^\perp = \left\{y\in\BF_p^n\;\left|\; y\,G^T = 0\;\;\mathrm{mod}\;\;p\right.\right\}\,.
\end{align}
Therefore, while the original code $C$ has the generator matrix $G$ and the parity check matrix $H$, the linear code $C^\perp$ has the generator matrix $H$ and the parity check matrix $G$.
We call a linear code $C^\perp$ the dual code of $C$.
For a linear code $C\subset\BF_p^n$ with the dimension $k$, the dual code $C^\perp$ has length $n$ and the dimension $n-k$. 

We can introduce dual codes without the help of generator matrices and parity check matrices.
Generally, codewords of the dual code $C^\perp$ are orthogonal to ones of $C$: $c\cdot c' = 0$ mod $p$ for $c\in C$ and $c'\in C^\perp$.
For each linear code $C$, we can define its dual code $C^\perp$ by
\begin{align}
    C^\perp = \left\{\left.c'\in\BF_p^n \;\right|\; c\cdot c' = 0 \,\;\mathrm{mod}\;\,p\,,\;\,c\in C\right\}\,.
\end{align}
In this formulation, it is obvious that the dual code $C^\perp$ does not depend on a choice of a generator matrix and parity check matrix.

Let us introduce some important classes of linear codes.
We call a linear code $C$ to be self-orthogonal if $C\subset C^\perp$ and self-dual if $C=C^\perp$.
As explained above, for a linear code $C\subset\BF_p^n$ with the dimension $k$, its dual code has the dimension $n-k$. Then, the dimension of a self-orthogonal code should be $k\leq \frac{n}{2}$, and the dimension of a self-dual code must be $k = n/2$, which also implies that the length of self-dual codes should be even: $n\in2\BZ$.

In the binary case ($p=2$), we can classify self-orthogonal codes further.
All codewords of a binary self-orthogonal code satisfy $c\cdot c\in2\BZ$.
If $C$ satisfies $c\cdot c\in 4\BZ$ for $c\in C$, $C\subset\BF_2^n$ is called doubly-even. 
If a self-orthogonal code $C\subset\BF_2^n$ is not doubly-even, $C$ is called singly-even.

To measure the error-correcting ability of a classical code, it is useful to define the Hamming distance on $\BF_p^n$.
The Hamming distance between elements $x=(x_1,\cdots,x_n)$, $y=(y_1,\cdots,y_n)\in\BF_p^n$ is defined as
\begin{align}
    d_H(x,y) = \left|\left\{i\in\{1,2,\cdots,n\}\mid x_i \neq y_i\right\}\right|\,.
\end{align}
The minimum distance $d(C)$ of a code $C\subset\BF_p^n$ is
\begin{align}
\label{def:minimum_dis}
    d(C) = \min\{d_H(x,y)\mid x,\,y\in C\,,\;\;x\neq y\}\,.
\end{align}
A classical code $C$ with the minimum distance $d(C)=d$ can correct up to $\lfloor(d-1)/2\rfloor$, so the minimum distance can capture the error-correcting property.
It is customary to call a $k$-dimensional linear code $C\subset \BF_p^n$ with the minimum distance $d$ an $[n,k,d]_p$ code. 

If a linear code $C$ contains $x,y\in \BF_p^n$, then it also has $x-y\in C$ due to linearity.
Thus, for linear codes, the minimum distance $d(C)$ reduces to
\begin{align}
    d(C) = \min\{d_H(c,0)\mid c\in C\,,\;\;c\neq 0\}\,.
\end{align}
Then it is useful to define
\begin{align}
\label{eq:def_hamming_wt}
    \mathrm{wt}(x) = d_H(x,0) = \left|\{i\in\{1,2,\cdots,n\}\mid x_i\neq 0\}\right|\,,
\end{align}
which is called the Hamming weight of $x\in\BF_p^n$. Of course, the minimum Hamming weight of a linear code $C$ gives the minimum distance of $C$.

More generally, for $a\in\BF_p$, we define
\begin{align}
\label{eq:composition}
    \mathrm{wt}_a(c) = \left|\{i\mid c_i= a\}\right|\,,
\end{align}
which is called the composition of $c\in C$ in \cite{nebe2006self,macwilliams1977theory}. Then we have $\mathrm{wt}(c) = n - \mathrm{wt}_0(c)$.

Furthermore, we introduce the Lee weight and Euclidean norm \cite{nebe2006self} for later convenience. For $a\in\BF_p$, the Lee weight and Euclidean norm are
\begin{align}
    \mathrm{Lee}(a) &= \min\{a,p-a\}\,,\\
    \mathrm{Norm}(a) &= \left(\mathrm{Lee}(a)\right)^2\,.
\end{align}
For a codeword $c = (c_1,\cdots,c_n)$, the associated Lee weight and Euclidean norm are
\begin{align}
\label{eq:def_Lee_wt}
    \mathrm{Lee}(c) &= \sum_{i=1}^n \mathrm{Lee}(c_i)\,,\\
    \mathrm{Norm}(c) &= \sum_{i=1}^n \mathrm{Norm}(c_i)\,.
    \label{eq:def_norm_c}
\end{align}

Let us illustrate a difference between the Hamming weight, the Lee weight, and the Euclidean norm by using an example $c = (1,2,4,3)\in \BF_5^4$.
Each weight of $c = (1,2,4,3)\in \BF_5^4$ is given by as follows:
\begin{itemize}
    \item The Hamming weight is $\mathrm{wt}(c) = |\{i\in\{1,2,3,4\}\mid c_i\neq0\}| = |\{1,2,3,4\}|=4$.
    \item The Lee weight is $\mathrm{Lee}(c) = \sum_{i=1}^4 \mathrm{Lee}(c_i) = 1 + 2 + (5-4) + (5-3) = 6$.
    \item The Euclidean norm is $\mathrm{Norm}(c) = \sum_{i=1}^4 \mathrm{Lee}(c_i)^2 =  1^2 +2^2 + (5-4)^2 + (5-3)^2 = 10$.
\end{itemize}
From this example, these weights generally have different values for an element $c\in\BF_p^n$.
Note that, for the binary case ($p=2$), the Lee weight and Euclidean norm reduce to the Hamming weight: $\mathrm{Lee}(c) = \mathrm{Norm}(c) = \mathrm{wt}(c)$.

Let us define the complete weight enumerator of a classical code $C$ by
\begin{align}
\label{eq:def_complete_weight}
    W_C(\{x_a\}) = \sum_{c\,\in\, C} \prod_{a\,\in\,\BF_p} x_a^{\mathrm{wt}_a(c)}\,,
\end{align}
where $x_a$ ($a\in\BF_p$) are formal variables and $\mathrm{wt}_a(c)$ are given in \eqref{eq:composition}.
The complete weight enumerator of the dual code $C^\perp$ is uniquely determined by the one of $C$. The relation between them is given by the MacWilliams identity \cite{macwilliams1962combinatorial,macwilliams1963theorem} (see Theorem 10 of Chapter 5 in \cite{macwilliams1977theory})
\begin{align}
\label{eq:MacWilliams_id}
    W_{C^\perp}(\{x_{a}\}) = \frac{1}{|C|} W_C(\{\Tilde{x}_{a}\})\,,\qquad
    \Tilde{x}_a = \sum_{b\,\in\,\BF_p}\,e^{2\pi\i \frac{ ab}{p}}\, x_b\,.
\end{align}
We have an alternative representation
\begin{align}
\label{eq:macwilliams_alt}
    W_C(\{x_a\}) = \frac{|C|}{p^n}\,W_{C^\perp}(\{\tilde{x}_a\})\,,\qquad
    \tilde{x}_a = \sum_{b\,\in\,\BF_p}\,e^{-2\pi\i \frac{ ab}{p}}\,x_b\,.
\end{align}
For a self-dual code $C = C^\perp$, the MacWilliams identity implies $W_C(\{x_a\}) = \frac{1}{|C|} W_C(\{\Tilde{x}_a\})$. Then the complete weight enumerator of a self-dual code is invariant under the change of variables $x_a\to\Tilde{x}_a$ followed by dividing $|C|=p^{\frac{n}{2}}$.

Finally, we give some examples of $p$-ary linear codes.
Let us take a classical binary code $C\subset\BF_2^2$ generated by the generator matrix
\begin{align}
    G = \left[
    \begin{array}{cc}
        1 & 1
    \end{array}
    \right].
\end{align}
We can choose a parity check matrix of the code as $H = G$ since it satisfies $G\,H^T=0$ mod $2$.
The linear code consists of only two codewords: $C=\{(0,0),(1,1)\}$.
The dual code $C^\perp$, which is generated by $H$, is also $C=\{(0,0),(1,1)\}$ because $G = H$.
Then, the linear code is self-dual.
Furthermore, since $c\cdot c \in2\BZ$ for $c = (1,1)$, it is a singly-even self-dual code.

The minimum distance $d(C)$ is obviously $d(C)=2$. Then the linear code is an $[2,1,2]_2$ code. The complete weight enumerator is
\begin{align}
    W_C(\{x_a\}) = x_0^2 + x_1^2\,.
\end{align}
Under the transformation $x_0\mapsto\tilde{x}_0 = x_0 + x_1$ and $x_1\mapsto\tilde{x}_1 = x_0 - x_1$, the complete weight enumerator becomes $W_C(\{\tilde{x}_a\}) = 2(x_0^2 +x_1^2) = |C|\, W_C(\{x_a\})$, which reproduces the MacWilliams identity for a self-dual code $C\subset\BF_2^2$.

Next, let us consider a ternary code $C\subset\BF_3^4$ generated by
\begin{align}
    G = \left[
    \begin{array}{cccc}
        1 & 0 & 1 & 1 \\
        0 & 1 & 1 & 2
    \end{array}
    \right].
\end{align}
We can verify $G\,G^T = 0 $ mod $3$, so a parity check matrix can be chosen as $H = G$, which means that the linear code is self-dual $C=C^\perp$.
The complete weight enumerator is
\begin{align}
    W_C(\{x_a\}) = x_0^4 + x_0x_1^3 + 3x_0x_1^2x_2 + 3 x_0x_1x_2^2 + x_0x_2^3\,.
\end{align}
From the complete weight enumerator, we can easily know that the minimum distance of the code is $d(C) = 3$. Then the linear code is an $[4,2,3]_3$ code.
By the transformation $x_a\mapsto \tilde{x}_a = \sum_{b=0}^2 e^{2\pi\i\frac{ab}{3}}\,x_b$, the complete weight enumerator behaves as $W_C(\{\tilde{x}_a\}) = 3^2\,W_C(\{x_a\})$, which is the MacWilliams identity for a self-dual code $C\subset\BF_3^4$.

\section{Fermionic code CFTs}
\label{sec:fermionic_code_CFTs}
In this section, we give a systematic construction of fermionic CFTs from self-dual codes over $\BF_p$. 
In section \ref{ss:Construction_A}, we construct self-dual lattices from self-dual codes over $\BF_p$ for a prime $p$ via Construction A \cite{leech1971sphere,conway1990new}.
Using the Euclidean lattices and their shadows, we define both the Neveu-Schwarz (NS) sector and the Ramond (R) sector of fermionic CFTs in section \ref{ss:fermionCFT}.
In section \ref{ss:toruspart}, we provide a canonical construction of the shadow of the Construction A lattice. In particular, we construct the shadow of a lattice from the shadow of a code for $p=2$.
We compute the partition functions depending on the choice of spin structures.
Our construction gives an explicit relation between the spectrum of classical codes and fermionic code CFTs.

\subsection{Odd self-dual lattices via Construction A}
\label{ss:Construction_A}

In this section, we review the construction of odd self-dual lattices from classical self-dual codes over $\BF_p$ following \cite{nebe2006self,elkies2000lattices} (refer to Section 3 of Chapter 8 in~\cite{conway2013sphere} for more general construction).
Let $C$ be a classical code over $\BF_p$ of length $n$ where $p$ is a prime number.
The construction of lattices from classical codes has been studied to seek dense sphere packings, and the simplest construction is called Construction A \cite{leech1971sphere}.
It defines the Construction A lattice $\Lambda(C)$ as follows:
\begin{align}
\label{eq:def_constA_lattice}
    \Lambda(C) = \left\{v/\sqrt{p}\mid v\,\in\,\BZ^n,\;\,v=c\;\;\mathrm{mod}\;\, p\,,\;\,c\in C\right\}\,.
\end{align}
The Construction A lattice $\Lambda(C)$ is a Euclidean lattice with respect to the standard Euclidean metric $g$ given in \eqref{eq:metric}. We simply denote the inner product with respect to the metric $g$ by~$\cdot$. While we use the same symbol for the inner product on classical codes, it would be clear from the context which one is being used.

This construction ensures that $\Lambda(C) = \Lambda(C')$ if and only if $C=C'$.
This follows from that the Construction A lattice $\Lambda(C)$ returns to the classical code $C$ by identifying $\lambda\sim\lambda + \sqrt{p}\,\BZ^n$ where $\lambda\in\Lambda(C)$.

By analogy with classical linear codes, we can define the dual lattice.
For each lattice $\Lambda\subset\BR^n$, one can associate its dual $\Lambda^*$:
\begin{align}
\label{eq:dual_lattice}
    \Lambda^* = \left\{\lambda'\in\BR^n\mid \lambda\cdot \lambda'\in\BZ\, ,\,\;\lambda\in\Lambda\right\}\,.
\end{align}
A lattice $\Lambda$ is integral if $\Lambda\subset\Lambda^*$ and self-dual if $\Lambda=\Lambda^*$.
A lattice $\Lambda$ is even if $\lambda\cdot\lambda\in2\BZ$ for $\lambda\in\Lambda$.
Note that every even lattice is integral.
A lattice $\Lambda$ is called odd if it is integral but not even.
This implies that an odd lattice $\Lambda$ contains a vector $\lambda$ whose norm is odd: $\lambda\cdot\lambda\in2\BZ+1$.
Also, a self-dual lattice that contains a vector with an odd norm is called an odd self-dual lattice.

The following proposition guarantees the construction of self-dual lattices from self-dual codes over $\BF_p$ for a prime number $p$ (including $p=2$).

\begin{proposition}
For a prime number $p$, the Construction A lattice $\Lambda(C)$ is self-dual if and only if a classical code $C$ over $\BF_p$ is self-dual.
\label{prop:self-dual}
\end{proposition}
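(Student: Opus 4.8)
The plan is to reduce the whole statement to the single lattice identity
\begin{equation}
\Lambda(C)^{*} = \Lambda(C^{\perp})\,,
\label{eq:dual_identity_plan}
\end{equation}
which I expect to hold for \emph{every} linear code $C$ over $\BF_p$, self-dual or not. Once \eqref{eq:dual_identity_plan} is in hand, the proposition follows immediately from the injectivity property recorded above (namely $\Lambda(C)=\Lambda(C')$ if and only if $C=C'$): the lattice $\Lambda(C)$ is self-dual exactly when $\Lambda(C)=\Lambda(C)^{*}=\Lambda(C^{\perp})$, and by injectivity this is equivalent to $C=C^{\perp}$, i.e.\ to $C$ being self-dual. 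So all the content sits in \eqref{eq:dual_identity_plan}, which I would prove by a double inclusion.

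For the inclusion $\Lambda(C^{\perp})\subseteq\Lambda(C)^{*}$, I would take a generic $\lambda=v/\sqrt{p}\in\Lambda(C)$ with $v\in\BZ^n$, $v\equiv c\ (\mathrm{mod}\ p)$, $c\in C$, and a generic $\lambda'=w/\sqrt{p}\in\Lambda(C^{\perp})$ with $w\equiv c'\ (\mathrm{mod}\ p)$, $c'\in C^{\perp}$, and compute $\lambda\cdot\lambda'=\tfrac1p\,v\cdot w$. Reducing the integer $v\cdot w$ modulo $p$ gives $v\cdot w\equiv c\cdot c'\equiv 0\ (\mathrm{mod}\ p)$ because $c'\in C^{\perp}$, so $\lambda\cdot\lambda'\in\BZ$. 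The reverse inclusion $\Lambda(C)^{*}\subseteq\Lambda(C^{\perp})$ is the substantive step. Here I would first control denominators: since $0\in C$, one has $\sqrt{p}\,\BZ^n\subseteq\Lambda(C)$, so for $\lambda'\in\Lambda(C)^{*}$ the pairing with $\sqrt{p}\,e_i\in\Lambda(C)$ forces $\sqrt{p}\,\lambda'_i\in\BZ$, whence $\lambda'=w/\sqrt{p}$ for some $w\in\BZ^n$. Writing $c'=w\ \mathrm{mod}\ p\in\BF_p^n$, I would then lift an arbitrary $c\in C$ to $v\in\BZ^n$ with $v\equiv c$; the membership $v/\sqrt{p}\in\Lambda(C)$ gives $\tfrac1p\,v\cdot w\in\BZ$, i.e.\ $c\cdot c'\equiv v\cdot w\equiv 0\ (\mathrm{mod}\ p)$. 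Since this holds for every $c\in C$, we get $c'\in C^{\perp}$ and therefore $\lambda'\in\Lambda(C^{\perp})$, completing \eqref{eq:dual_identity_plan}.

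The main obstacle is exactly this backward inclusion, and within it the denominator-pinning step: one must establish \emph{a priori} that every vector of $\Lambda(C)^{*}$ already lies in $\tfrac1{\sqrt{p}}\BZ^n$ before the reduction modulo $p$ (and hence the code-theoretic dual) can even be invoked. Everything else is a routine passage between the integral pairing of lattice vectors and the $\BF_p$-pairing of codewords. As an independent consistency check I would verify the covolumes: $\Lambda(C)$ is obtained from the index-$p^{\,n-k}$ sublattice $\{v\in\BZ^n: v\ \mathrm{mod}\ p\in C\}$ of $\BZ^n$ rescaled by $1/\sqrt{p}$, giving covolume $p^{\,n/2-k}$, which equals $1$ precisely when $k=n/2$ — exactly the dimension of a self-dual code, as expected. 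Finally, I would note that primality of $p$ enters only through the code-theoretic side (so that $\BF_p$ is a field, $C^{\perp}$ is a genuine complement with $\dim C+\dim C^{\perp}=n$, and the dual-code notions of the previous section apply); the lattice manipulations themselves are insensitive to it.
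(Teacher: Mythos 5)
Your proposal is correct and follows essentially the same route as the paper: both establish the identity $\Lambda(C)^{*}=\Lambda(C^{\perp})$ by double inclusion, using the vectors $\sqrt{p}\,e_i\in\Lambda(C)$ to pin the denominators of elements of $\Lambda(C)^{*}$ before reducing modulo $p$, and then conclude via the injectivity of the Construction A map. The covolume check and the remark on where primality enters are nice additions but not substantive departures.
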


\begin{proof}
In what follows, we prove $\Lambda(C)^* = \Lambda(C^\perp)$.
We begin to prove $\Lambda(C)^*\subset \Lambda(C^\perp)$.
Let $\lambda'$ be an element of $\Lambda(C)^*$. Then it satisfies $\lambda\cdot\lambda'\in \BZ$ for $\lambda\in\Lambda(C)$.
An element of the Construction A lattice $\Lambda(C)$ can be written as
\begin{align}
    \lambda = \frac{c + p \,m}{\sqrt{p}}\,,\qquad \mathrm{for}\quad c\in C\,,\quad m\in \BZ^n\,.
\end{align}
Let us choose $c = 0$, $m = e_i$ where $e_i = (0,\cdots,0,1,0,\cdots,0)$ is an $n$-dimensional unit vector with $1$ at the $i$-th component and $0$s elsewhere.
Then the condition $\lambda\cdot\lambda' \in \BZ$ implies $\sqrt{p}\,e_i\cdot\lambda'\in \BZ$. Therefore, $\lambda'\in \left(\BZ/\sqrt{p}\right)^n$ and the elements $\lambda'\in \Lambda(C)^*$ should be written by
\begin{align}
    \lambda' = \frac{c' + p\,m'}{\sqrt{p}}\,,\qquad \mathrm{for}\quad c'\in \BF_p^n\,,\quad m'\in \BZ^n\,.
\end{align}
Then the condition $\lambda\cdot\lambda'\in \BZ$ reduces to
\begin{align}
    \lambda\cdot\lambda' = \frac{c\cdot c'}{p} + m'\cdot c + c' \cdot m + p\, m\cdot m' \in \BZ\,,
\end{align}
which means $c\cdot c' = 0$ mod $p$ for $c\in C$. In other words, $c' \in C^\perp = \{c'\in \BF_p^n\mid c\cdot c' = 0\;\;\mathrm{mod}\;p\}$. Thus $\Lambda(C)^*\subset \Lambda(C^\perp)$.

Let us move on to the proof of $\Lambda(C)^*\supset\Lambda(C^\perp)$.
Suppose that $\lambda'\in \Lambda(C^\perp)$. Then we can write down
\begin{align}
    \lambda' = \frac{c' + p\,m'}{\sqrt{p}}\,,\qquad \mathrm{for}\quad c'\in C^\perp \,,\quad m'\in \BZ^n\,.
\end{align}
The inner products between $\lambda\in\Lambda(C)$ and $\lambda'\in\Lambda(C^\perp)$ are $\lambda\cdot\lambda'\in\BZ$ since $c\cdot c' = 0$ mod $p$. Then $\Lambda(C)^*\supset \Lambda(C^\perp)$.

Eventually, we get $\Lambda(C)^* = \Lambda(C^\perp)$. If we start with a self-dual code $C = C^\perp$, then the Construction A lattice satisfies $\Lambda(C)^* = \Lambda(C^\perp) = \Lambda(C)$, which implies $\Lambda(C)$ is a self-dual lattice. On the other hand, for a self-dual lattice $\Lambda(C) = \Lambda(C)^*$, the relation $\Lambda(C)^* = \Lambda(C^\perp)$, which we have shown, leads to $\Lambda(C) = \Lambda(C^\perp)$.
Since $\Lambda(C) = \Lambda(C')$ if and only if $C = C'$, we conclude $C = C^\perp$ (self-dual).
Thus $\Lambda(C)$ is self-dual if and only if $C$ is self-dual.
\end{proof}

The following propositions clarify the conditions to construct odd lattices from linear codes over $\BF_p$. Note that there is a subtle difference between the binary case $(p=2)$ and the other cases.

\begin{proposition}
For an odd prime $p$, the Construction A lattice $\Lambda(C)$ is odd if and only if a linear code $C\subset\BF_p^n$ is self-orthogonal.
\label{prop:odd_p}
\end{proposition}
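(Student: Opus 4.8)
The plan is to reduce the statement to two facts: that integrality of $\Lambda(C)$ is equivalent to self-orthogonality of $C$, and that for an \emph{odd} prime $p$ the lattice $\Lambda(C)$ automatically contains a vector of odd norm, so that integrality alone already forces oddness.

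First I would write a general pair of lattice vectors as $\lambda = (c+pm)/\sqrt p$ and $\lambda' = (c'+pm')/\sqrt p$ with $c,c'\in C$ and $m,m'\in\BZ^n$, and compute, exactly as in the proof of Proposition \ref{prop:self-dual},
\begin{align}
  \lambda\cdot\lambda' = \frac{c\cdot c'}{p} + m\cdot c' + c\cdot m' + p\,m\cdot m'\,.
\end{align}
Since the last three terms are integers, $\lambda\cdot\lambda'\in\BZ$ for all choices of $m,m'$ holds precisely when $c\cdot c'\equiv 0 \bmod p$. Ranging over all $c,c'\in C$ then shows that $\Lambda(C)$ is integral if and only if $C\subset C^\perp$, i.e.\ $C$ is self-orthogonal. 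This already settles the backward direction: an odd lattice is by definition integral, so $C$ must be self-orthogonal.

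For the forward direction I would exhibit an explicit odd-norm vector. Because $C$ is a linear subspace it contains $0$, so taking $c=0$ and $m=e_i$ gives $\lambda = \sqrt p\, e_i \in \Lambda(C)$ with $\lambda\cdot\lambda = p$. This is where the hypothesis that $p$ is an \emph{odd} prime enters: $p\in 2\BZ+1$, so this vector has odd norm. Hence, once $C$ is self-orthogonal (so that $\Lambda(C)$ is integral), the presence of $\sqrt p\, e_i$ witnesses that $\Lambda(C)$ is not even, and therefore odd by definition. Combining the two directions gives the claimed equivalence.

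The argument has no serious obstacle beyond bookkeeping; the single point that genuinely requires care is the parity of $p$. The same construction for $p=2$ produces the vector $\sqrt2\, e_i$ of norm $2$, which is \emph{even}, so there integrality no longer forces oddness — this is exactly the ``subtle difference'' flagged before the proposition and the reason the binary case needs a separate singly-even hypothesis. I would also note, for cleanliness, that the inner products $c\cdot c'$ appearing above are the ordinary integer inner products of integer representatives of the codewords, and that changing a representative only shifts $m$; thus the criterion $c\cdot c'\equiv0\bmod p$ is well defined and independent of that choice.
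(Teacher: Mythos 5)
Your proof is correct and follows essentially the same route as the paper: decompose $\lambda=(c+pm)/\sqrt p$, reduce integrality to a condition mod $p$ on the code, and use $\sqrt p\,e_i$ (norm $p$, odd) as the witness that integrality already forces oddness when $p$ is odd. The only difference is that you work with the full bilinear inner product $\lambda\cdot\lambda'$ and obtain $C\subset C^\perp$ directly, whereas the paper argues only via the norms $c\cdot c\equiv 0\bmod p$ and leaves implicit the (odd-$p$) polarization step identifying that diagonal condition with self-orthogonality; your version is marginally more complete on that point.
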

\begin{proof}
From the construction, $\lambda\in\Lambda(C)$ can be written as $\lambda = \frac{c}{\sqrt{p}}+\sqrt{p}\,m$ where $c\in C$ and $m\in\BZ^n$.
The norm of $\lambda$ is given by
\begin{align}
    \lambda\cdot\lambda = \frac{c\cdot c}{p} + 2\,m\cdot c + p\,m\cdot m\,.
\end{align}
Suppose that the Construction A lattice $\Lambda(C)$ is odd. Then the norm $\lambda\cdot\lambda$ should be an integer and there must exist an element $\lambda\in\Lambda(C)$ whose norm is odd.
The first condition reduces to $c\cdot c = 0$ mod $p$ for any $c\in C$, which implies that $C$ is self-orthogonal.
Then the second condition is automatically satisfied for an odd prime $p$ since the Construction A lattice $\Lambda(C)$ contains $\lambda = \sqrt{p}\,e_i$ where $e_i$ is an $n$-dimensional unit vector with 1 at the $i$-th component and with $0$s elsewhere. On the other hand, suppose that $c\cdot c = 0$ mod $p$ for any $c\in C$. Then $\Lambda(C)$ is integral and contains an element whose norm is odd. Thus the Construction A lattice $\Lambda(C)$ is odd.
\end{proof}

\begin{proposition}
For $p=2$, the Construction A lattice $\Lambda(C)$ is odd if and only if a linear code $C\subset\BF_2^n$ is singly-even self-orthogonal.
\label{prop:odd_2}
\end{proposition}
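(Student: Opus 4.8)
The plan is to handle the two defining properties of an odd lattice---integrality, and the failure of evenness---separately, and then intersect them, keeping a close eye on the one feature that genuinely distinguishes $p=2$ from the odd-prime case of Proposition \ref{prop:odd_p}. Writing a general element as $\lambda = \tfrac{c}{\sqrt2} + \sqrt2\,m$ with $c$ the $\{0,1\}$-representative in $\BZ^n$ of a codeword of $C$ and $m\in\BZ^n$, and a second element $\lambda' = \tfrac{c'}{\sqrt2}+\sqrt2\,m'$, the first step is to compute
\begin{align}
    \lambda\cdot\lambda' = \frac{c\cdot c'}{2} + c\cdot m' + m\cdot c' + 2\,m\cdot m'\,.
\end{align}
Since the last three terms are always integers, $\lambda\cdot\lambda'\in\BZ$ for every choice of elements iff $c\cdot c'\equiv 0 \bmod 2$ for all pairs $c,c'\in C$. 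Hence integrality of $\Lambda(C)$ is equivalent to $C\subseteq C^\perp$, i.e. to $C$ being self-orthogonal. (Alternatively one may invoke $\Lambda(C)^*=\Lambda(C^\perp)$ from the proof of Proposition \ref{prop:self-dual}: integrality means $\Lambda(C)\subseteq\Lambda(C)^*=\Lambda(C^\perp)$, which holds iff $C\subseteq C^\perp$.)

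I would emphasize here the crucial contrast with odd $p$: in Proposition \ref{prop:odd_p} the diagonal condition $c\cdot c\equiv 0$ already forces self-orthogonality by polarization, because $2$ is invertible modulo an odd $p$; for $p=2$ this fails, so the off-diagonal products must be controlled directly, and the pairwise condition above is the correct characterization of integrality.

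The next step is to analyze evenness, assuming $C$ self-orthogonal so that $\Lambda(C)$ is already integral. Using the identity $c\cdot c=\mathrm{wt}(c)$ valid for $\{0,1\}$-representatives (each nonzero entry contributes $1^2=1$), I obtain
\begin{align}
    \lambda\cdot\lambda = \frac{\mathrm{wt}(c)}{2} + 2\,c\cdot m + 2\,m\cdot m\,,
\end{align}
which modulo $2$ reduces to $\tfrac{\mathrm{wt}(c)}{2}$. Thus $\lambda$ has even norm iff $\mathrm{wt}(c)\equiv 0\bmod 4$, so $\Lambda(C)$ is even iff every codeword satisfies $\mathrm{wt}(c)\equiv0\bmod 4$, i.e. iff $C$ is doubly-even. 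Equivalently, $\Lambda(C)$ fails to be even precisely when some codeword has $\mathrm{wt}(c)\equiv 2\bmod 4$, in which case $\lambda=c/\sqrt2$ is an explicit odd-norm vector with $\lambda\cdot\lambda=\mathrm{wt}(c)/2$ odd.

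Combining the two steps yields the statement: $\Lambda(C)$ is odd (integral but not even) iff $C$ is self-orthogonal and not doubly-even, which is exactly the condition that $C$ be singly-even self-orthogonal. The main obstacle---the point that must be handled differently from Proposition \ref{prop:odd_p}---is that for $p=2$ the candidate vector $\sqrt2\,e_i$ has \emph{even} norm $2$, so integrality no longer automatically supplies a vector of odd norm. The existence of such a vector is precisely what the ``not doubly-even'' (singly-even) hypothesis guarantees, and pinning down this equivalence is the crux of the argument.
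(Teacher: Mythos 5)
Your proof is correct and follows the same overall strategy as the paper's---writing $\lambda=\tfrac{c}{\sqrt{2}}+\sqrt{2}\,m$ and reading off integrality and the parity of norms by direct computation---but your treatment of the integrality step is genuinely sharper and in fact repairs a small gap in the paper's own argument. The paper derives only the diagonal condition $c\cdot c\equiv 0 \bmod 2$ from the requirement that norms be integers, and then asserts that this ``means $C$ is self-orthogonal.'' For $p=2$ that inference is a non sequitur: the code spanned by $(1,1,0,0)$ and $(0,1,1,0)$ has all weights even, yet $(1,1,0,0)\cdot(0,1,1,0)=1$, so it is not self-orthogonal (and correspondingly the associated lattice vectors have inner product $\tfrac{1}{2}$, so the lattice is not integral). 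Your version avoids this by characterizing integrality through all pairwise inner products $\lambda\cdot\lambda'$, which is what the definition $\Lambda\subset\Lambda^*$ actually demands, and your remark that the polarization argument available for odd $p$ breaks down at $p=2$ pinpoints exactly why the extra care is needed. The evenness analysis via $\lambda\cdot\lambda\equiv\mathrm{wt}(c)/2\bmod 2$, and the observation that $\sqrt{2}\,e_i$ has even norm so the odd-norm vector must be supplied by a singly-even codeword, coincide with the paper's reasoning.
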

\begin{proof}
We prove this proposition along the same line as the previous one.
An element $\lambda\in\Lambda(C)$ can be written as $\lambda = \frac{c}{\sqrt{2}}+\sqrt{2}\,m$ where $c\in C$ and $m\in\BZ^n$.
The norm of $\lambda$ is given by
\begin{align}
    \lambda\cdot\lambda = \frac{c\cdot c}{2} + 2\,m\cdot c + 2\,m\cdot m\,.
\end{align}
Suppose that the Construction A lattice $\Lambda(C)$ is odd. Then the norm $\lambda\cdot\lambda$ has to be an integer and there must exist an element $\lambda\in\Lambda(C)$ whose norm is odd. The first condition leads to $c\cdot c = 0$ mod $2$ for any $c\in C$. This means that $C$ is a self-orthogonal code over $\BF_2$. The second condition requires that there exists a codeword $c\in C$ such that $c\cdot c\in 4\BZ+2$, which concludes that $C$ is singly-even self-orthogonal.
On the other hand, let $C$ be a singly-even self-orthogonal code. Then, the Construction A lattice $\Lambda(C)$ is integral and contains an element $\lambda$ with an odd norm, which is the definition of an odd lattice.
\end{proof}

Combining Proposition \ref{prop:self-dual} and Proposition \ref{prop:odd_p} or \ref{prop:odd_2}, we arrive at the following theorems.
These theorems ensure that a class of $p$-ary linear codes provides odd self-dual lattices in both the binary and the other cases.

\begin{theorem}
For an odd prime $p$, the Construction A lattice $\Lambda(C)$ is odd self-dual if and only if a linear code $C\subset\BF_p^n$ is self-dual.
\end{theorem}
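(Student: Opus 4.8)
The plan is to derive the theorem as an immediate corollary of the two preceding propositions, since it simply packages the self-duality criterion of Proposition \ref{prop:self-dual} together with the oddness criterion of Proposition \ref{prop:odd_p}. The bridge between them is the elementary observation that over $\BF_p$ a self-dual code $C=C^\perp$ is in particular self-orthogonal, i.e. $C\subseteq C^\perp$, so both propositions may be invoked for one and the same code.

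For the forward direction I would start from a self-dual code $C=C^\perp$. Proposition \ref{prop:self-dual} then gives at once that $\Lambda(C)$ is self-dual. Since self-duality implies self-orthogonality, Proposition \ref{prop:odd_p} also applies and tells us that $\Lambda(C)$ is odd. Together these yield that $\Lambda(C)$ is an odd self-dual lattice, which is exactly the claim. I want to emphasize where the hypothesis that $p$ is odd enters: it is used only through Proposition \ref{prop:odd_p}, whose proof exhibits the odd-norm vector $\lambda=\sqrt{p}\,e_i$ of norm $p$, odd precisely because $p$ is an odd prime. This is the one place the parity of $p$ matters, and it is what makes oddness of $\Lambda(C)$ automatic once $C$ is self-orthogonal, rather than an extra condition.

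For the converse I would assume $\Lambda(C)$ is odd self-dual. By definition this entails in particular that $\Lambda(C)$ is self-dual, and Proposition \ref{prop:self-dual} immediately returns that $C$ is self-dual. Note that the oddness half of the hypothesis is not even needed here, so this direction is strictly weaker than what the propositions already supply.

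I do not anticipate any genuine obstacle: the analytic content has been fully discharged in the two propositions, and the only thing left to verify is the compatibility of their hypotheses, namely that the single assumption \emph{``$C$ is self-dual''} simultaneously triggers ``self-dual code'' for Proposition \ref{prop:self-dual} and ``self-orthogonal code'' for Proposition \ref{prop:odd_p}. The sole conceptual point worth stating explicitly is that for odd $p$ the oddness of the lattice comes for free from self-orthogonality, so no separate even/odd refinement on the code side is required—in contrast to the $p=2$ case handled via Proposition \ref{prop:odd_2}, where singly-evenness must be imposed.
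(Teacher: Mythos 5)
Your proposal is correct and matches the paper's own treatment: the paper derives this theorem exactly by combining Proposition \ref{prop:self-dual} with Proposition \ref{prop:odd_p}, with no further argument. Your added remarks (that self-duality implies self-orthogonality, and that the converse does not need the oddness hypothesis) are accurate elaborations of the same route.
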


\begin{theorem}
For $p=2$, the Construction A lattice $\Lambda(C)$ is odd self-dual if and only if a linear code $C\subset\BF_2^n$ is singly-even self-dual.
\end{theorem}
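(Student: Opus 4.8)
The plan is to obtain this theorem as a direct corollary by conjoining Proposition~\ref{prop:self-dual}, which characterises self-duality of $\Lambda(C)$ for any prime, with Proposition~\ref{prop:odd_2}, which characterises oddness of $\Lambda(C)$ in the binary case. The governing observation is that the lattice-side property ``odd self-dual'' factors cleanly into the independent clauses ``self-dual'' and ``odd,'' and that the two code-side conclusions supplied by the propositions recombine into exactly ``singly-even self-dual.'' No new lattice computation is needed; everything reduces to the two cited equivalences plus a definitional check.

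First I would treat the forward direction. Assuming $\Lambda(C)$ is odd self-dual, it is simultaneously self-dual and odd. Self-duality of $\Lambda(C)$ gives, via Proposition~\ref{prop:self-dual} with $p=2$, that $C=C^\perp$, i.e.\ $C$ is self-dual. Oddness of $\Lambda(C)$ gives, via Proposition~\ref{prop:odd_2}, that $C$ is singly-even self-orthogonal; in particular $C$ is not doubly-even. Combining the two, $C$ is a self-dual code that is not doubly-even, which is precisely what ``singly-even self-dual'' means.

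For the converse I would run the same argument backwards. Assume $C\subset\BF_2^n$ is singly-even self-dual. Since $C=C^\perp$ implies $C\subset C^\perp$, the code is in particular self-orthogonal, and by hypothesis it is not doubly-even; hence $C$ is singly-even self-orthogonal, and Proposition~\ref{prop:odd_2} yields that $\Lambda(C)$ is odd. Separately, self-duality of $C$ together with Proposition~\ref{prop:self-dual} yields that $\Lambda(C)$ is self-dual. Therefore $\Lambda(C)$ is odd self-dual, completing the equivalence.

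There is essentially no analytic obstacle here, as all of the lattice-theoretic content has already been discharged in the two cited propositions. The one point requiring care, and the one I would make explicit, is the bookkeeping of definitions: that ``singly-even self-dual'' is logically equivalent to the conjunction ``self-dual and singly-even self-orthogonal.'' This holds because self-duality is strictly stronger than self-orthogonality, so the self-orthogonality clause built into the definition of ``singly-even'' is automatically supplied by $C=C^\perp$, leaving ``not doubly-even'' as the sole additional content. Verifying this equivalence is what licenses the clean recombination of the two propositions in both directions.
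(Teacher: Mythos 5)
Your proposal is correct and matches the paper exactly: the paper derives this theorem by "combining Proposition \ref{prop:self-dual} and Proposition \ref{prop:odd_2}," which is precisely the conjunction-of-equivalences argument you give, with your definitional check that ``singly-even self-dual'' equals ``self-dual and singly-even self-orthogonal'' being the only (correctly handled) bookkeeping step.
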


\subsection{Construction of fermionic CFTs}
\label{ss:fermionCFT}

We have described the construction of odd self-dual lattices from linear codes over $\BF_p$ for a prime $p$.
In this section, we associate the odd self-dual lattices with fermionic CFTs and define the fermionic code CFTs.
We also illustrate that while an odd self-dual lattice directly gives the Neveu-Schwarz (NS) sector of fermionic CFTs, the Ramond (R) sector can be provided by the shadow of the lattice.
This is the generalization of the work \cite{Gaiotto:2018ypj}, which focuses on the ternary case ($p=3$).

Let $\Lambda$ be an odd self-dual lattice. A lattice vector $\chi\in\Lambda$ is called characteristic \cite{conway2013sphere,serre2012course,milnor1973symmetric} if, for all $\lambda\in\Lambda$, it satisfies
\begin{equation}
\label{eq:def_characteristic}
    \lambda\cdot\lambda = \chi\cdot\lambda \mod 2.
\end{equation}
Then, the following proposition follows immediately.

\begin{proposition}
For an odd self-dual lattice $\Lambda$, a characteristic vector $\chi$ satisfies $\frac{\chi}{2}\notin \Lambda$.
\label{prop:half_shift}
\end{proposition}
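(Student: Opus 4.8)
The plan is to argue by contradiction, exploiting the very feature that distinguishes an odd lattice from an even one: the existence of a vector of odd norm. Assume, contrary to the claim, that $\frac{\chi}{2}\in\Lambda$. The key preliminary observation is that since $\Lambda$ is self-dual it is in particular integral ($\Lambda\subset\Lambda^*=\Lambda$), so the inner product of any two of its elements lies in $\BZ$; in particular $\frac{\chi}{2}\cdot\lambda\in\BZ$ for every $\lambda\in\Lambda$ under our assumption.

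Next I would invoke the defining property of an odd lattice recalled just above the statement: there exists a vector $\lambda_0\in\Lambda$ with $\lambda_0\cdot\lambda_0\in2\BZ+1$. Testing the characteristic condition \eqref{eq:def_characteristic} against this $\lambda_0$ gives $\lambda_0\cdot\lambda_0\equiv\chi\cdot\lambda_0\pmod 2$. On the other hand, writing $\chi\cdot\lambda_0=2\left(\frac{\chi}{2}\cdot\lambda_0\right)$ with $\frac{\chi}{2}\cdot\lambda_0\in\BZ$ shows the right-hand side is even. Hence $\lambda_0\cdot\lambda_0$ would be even, contradicting its oddness. This contradiction forces $\frac{\chi}{2}\notin\Lambda$.

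There is essentially no computational obstacle here; the whole content is in recognizing that ``odd self-dual'' supplies exactly the two ingredients the argument needs—integrality (so that $\frac{\chi}{2}\cdot\lambda_0$ is an integer and therefore $\chi\cdot\lambda_0$ is manifestly even once $\frac{\chi}{2}\in\Lambda$) and the existence of an odd-norm vector $\lambda_0$ against which to evaluate the characteristic congruence. The only point worth stating carefully is that \eqref{eq:def_characteristic} may legitimately be applied to this particular $\lambda_0$, which it can, since the characteristic property holds for \emph{all} $\lambda\in\Lambda$. If anything, the mild ``gotcha'' is to resist trying to plug $\lambda=\frac{\chi}{2}$ itself into the condition (which only constrains $\chi\cdot\chi$ modulo $8$ and does not close the argument); the clean route is instead to play $\chi$ off against the odd vector $\lambda_0$.
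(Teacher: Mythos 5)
Your proof is correct and follows essentially the same route as the paper's: both arguments test the characteristic congruence against an odd-norm vector $\lambda_0$ and use integrality of the self-dual lattice to derive a parity contradiction if $\frac{\chi}{2}\in\Lambda$. The only difference is cosmetic — the paper phrases it as $\frac{\chi}{2}\cdot\lambda_0\in\BZ+\frac{1}{2}$, whereas you phrase it as $\chi\cdot\lambda_0$ being even — which is the same computation read in the opposite direction.
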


\begin{proof}
Let $\lambda\in\Lambda$ be an element of an odd self-dual lattice $\Lambda$ such that $\lambda\cdot\lambda\in2\BZ+1$.
Then, by definition \eqref{eq:def_characteristic}, we have $\frac{\chi}{2}\cdot \lambda = \frac{\lambda\cdot\lambda}{2} + k\in \BZ+\frac{1}{2}$ where $k\in\BZ$. 
Since $\Lambda$ is assumed to be integral, this is a contradiction unless $\frac{\chi}{2}\notin \Lambda$.
\end{proof}

The choice of a characteristic vector is not unique to $\Lambda$. In what follows, we choose a specific one for the Construction A lattice and denote it by $\chi\in\Lambda(C)$. While we do not explicitly write down the characteristic vector in this section, we will give a canonical choice for the Construction A lattice later.

Using the characteristic vector $\chi\in\Lambda(C)$, we can divide the Construction A lattice $\Lambda(C)$ into two disjoint subsets: $\Lambda(C) = \Lambda_0\cup \Lambda_2$ (following the notation of \cite{conway2013sphere}) where
\begin{align}
\begin{aligned}
\label{eq:lattice_NS}
    \Lambda_0 &=  \left\{ \lambda\in\Lambda(C) \mid \chi\cdot\lambda\equiv0 \mod 2 \right\}\,,
    \\
    \Lambda_2 &= \left\{ \lambda\in\Lambda(C) \mid \chi\cdot\lambda\equiv1 \mod 2 \right\}\,.
\end{aligned}
\end{align}
From the definition of characteristic vectors, the subset $\Lambda_0$ consists of lattice vectors with even norms in $\Lambda(C)$ and $\Lambda_2$ contains only lattice vectors with odd norms in $\Lambda(C)$. 

Let us introduce the following half-shifted subsets:
\begin{align}
\begin{aligned}
\label{eq:def_lambda1_lambda3}
    \Lambda_1 &= 
    \begin{dcases}
    \Lambda_0 + \frac{\chi}{2} & (\chi\cdot\chi\in4\BZ)\,,\\
    \Lambda_2 + \frac{\chi}{2} & (\chi\cdot\chi\in4\BZ+2)\,,
    \end{dcases}\\
    \Lambda_3 &= 
    \begin{dcases}
    \Lambda_2 + \frac{\chi}{2} & (\chi\cdot\chi\in4\BZ)\,,\\
    \Lambda_0 + \frac{\chi}{2} & (\chi\cdot\chi\in4\BZ+2)\,,
    \end{dcases}
\end{aligned}
\end{align}
where $\frac{\chi}{2}\notin \Lambda(C)$ by Proposition \ref{prop:half_shift}.
The union set $S(\Lambda(C)):=\Lambda_1\cup\Lambda_3$ is called the shadow of the lattice $\Lambda(C)$ (originally introduced in \cite{conway1990new}). It can also be written as 
\begin{equation}
    S(\Lambda(C)) = \Lambda(C)+\frac{\chi}{2} = \left\{ \lambda+\frac{\chi}{2} \;\middle|\; \lambda\in\Lambda(C) \right\}.
\end{equation}
The shadow $S(\Lambda(C))$ is not a lattice because it does not contain the origin. Also, it is not closed under addition. In fact, a sum of two shadow vectors $\lambda+\frac{\chi}{2}$, $\lambda'+\frac{\chi}{2}\in S(\Lambda(C))$ is in the original lattice $\Lambda(C)$: $\lambda+\lambda'+\chi\in\Lambda(C)$.
Note that the shadow does not depend on the choice of a characteristic vector $\chi$.

These four subsets $\Lambda_i$ ($i=0,1,2,3$) provide the corresponding Hilbert spaces of fermionic CFTs.
Let $X(z)$ be an $n$-dimensional chiral free boson.
We construct the Neveu-Schwarz (NS) sector of fermionic CFTs by lifting an odd self-dual lattice to a set of vertex operators~\cite{kac1998vertex}. For the Construction A lattice $\Lambda(C)$, the vertex operators are given by
\begin{align}
    V_\lambda(z) = \,: e^{\i\lambda\cdot X(z)}:\,,\qquad \lambda\in\Lambda(C)\,,
\end{align}
where we omit the cocycle factors because it does not matter for our purpose.
The correlation functions of the vertex operators are given by
\begin{align}
\begin{aligned}
    V_\lambda(z)\,V_{\lambda'}(w) &= {(z-w)^{\lambda\cdot\lambda'}}\,V_{\lambda+\lambda'}(w)\left(1+\CO(z-w)\right)\,,\\
    &= {(z-w)^{\lambda\cdot\lambda'}}\,V_{\lambda+\lambda'}(w) + \cdots\,.
\end{aligned}
\end{align}
The amplitude is a meromorphic function and does not have a branch cut since the Construction A lattice $\Lambda(C)$ is integral ($\lambda\cdot\lambda'\in\BZ$).

The vertex operators $V_\lambda(z)$ are mapped to the momentum states $\ket{\lambda}$ via the state-operator isomorphism. 
These states provide the Hilbert space of the NS sector:
\begin{align}
    \CH_{\mathrm{NS}}(C) = \left\{\left.\alpha_{-k_1}^{i_1}\cdots\alpha_{-k_r}^{i_r}\ket{\lambda}\;\right| \; \lambda\in\Lambda(C)\,,\;r\in\BZ_{\geq0}\right\}\,,
\end{align}
where $\alpha_k^i$ ($i=1,2,\cdots,n$) are the bosonic oscillators that satisfy the algebra
\begin{align}
    \left[\alpha_k^i,\,\alpha_l^j\right] = k\,\delta_{k+l,0}\,\delta^{i,j}\,.
\end{align}
The conformal weight of $\alpha_{-k_1}^{i_1}\cdots\alpha_{-k_r}^{i_r}\ket{\lambda}$ is given by $h=\frac{\lambda^2}{2}+\sum_{j=1}^r k_j$ and $\bar{h}=0$. Then its spin $s=h-\bar{h}$ is an integer ($s\in\BZ$) for $\lambda\in\Lambda_0$ and a half-integer ($s\in\BZ+\frac{1}{2}$) for $\lambda\in\Lambda_2$.
Then the resulting Hilbert space $\CH_\mathrm{NS}(C)$ contains not only bosonic states with $\lambda\in\Lambda_0$ but also fermionic ones with $\lambda\in\Lambda_2$:
\begin{align}
    \begin{aligned}
    \label{eq:NS_grading}
        \CH_{\mathrm{NS}}^+(C) &= \left\{\left.\alpha_{-k_1}^{i_1}\cdots\alpha_{-k_r}^{i_r}\ket{\lambda}\;\right| \; \lambda\in\Lambda_0\,,\;r\in\BZ_{\geq0}\right\}\,,\\
        \CH_{\mathrm{NS}}^-(C) &= \left\{\left.\alpha_{-k_1}^{i_1}\cdots\alpha_{-k_r}^{i_r}\ket{\lambda}\;\right| \; \lambda\in\Lambda_2\,,\;r\in\BZ_{\geq0}\right\}\,,
    \end{aligned}
\end{align}
where $\CH_\mathrm{NS}(C) = \CH_{\mathrm{NS}}^+(C)\cup\CH_{\mathrm{NS}}^-(C)$.
Thus $\Lambda_0$ and $\Lambda_2$ provide bosonic and fermionic states in the NS sector, respectively.
This means that the $\BZ_2$ grading fixed by the characteristic vector $\chi\in\Lambda(C)$ gives the one by the fermion parity $(-1)^F$.

On the other hand, the Ramond sector can be built from the vertex operators associated with the shadow $S(\Lambda(C))$
\begin{align}
    V_{\lambda+\frac{\chi}{2}}(z) = \,:e^{\i\left(\lambda+\frac{\chi}{2}\right)\cdot X(z)}:\,,\qquad \lambda\in\Lambda(C)\,.
\end{align}
These vertex operators have the following correlation function:
\begin{align}
    V_{\lambda+\frac{\chi}{2}}(z) \, V_{\lambda' + \frac{\chi}{2}} (w) = {(z-w)^{(\lambda+\frac{\chi}{2})\cdot(\lambda'+\frac{\chi}{2})}} \,V_{\lambda+\lambda' + \chi}(w) + \cdots\,,
\end{align}
where $\lambda$, $\lambda'\in\Lambda(C)$. Note that the NS sector operator appears on the right-hand side.
This reflects the fact that the product of two R sector operators returns an NS sector operator.

\begin{figure}
    \centering
    \begin{tikzpicture}[scale=1.5,cap=round,>=latex]
    \draw[decoration={markings, mark=at position 1 with {\arrow[very thick]{>}}}, postaction={decorate}] ({1/sqrt(2)},{1/sqrt(2)}) arc (45:395:1cm);
    \draw[decoration={markings, mark=at position 0.5 with {\arrow[very thick]{>}}}, postaction={decorate}] ({1/sqrt(2)},{1/sqrt(2)}) arc (45:395:1cm);
    \fill[fill=black!50!gray] (0,0) circle[radius=0.05cm];
    \draw (0,0) node [above left] {$w$};
    \draw[decorate,decoration={snake,amplitude=.4mm,segment length=2mm}] (2,0) -- (0,0);
    \fill[fill=black!50!gray] ({1/sqrt(2)},{1/sqrt(2)}) node [above right] {$z$} circle[radius=0.05cm];
    \end{tikzpicture}
    \caption{The operator product between the NS sector operator at the position $z$ and the R sector operator at $w$. Under the rotation around $w$, the NS sector operator is subject to the action of the fermion parity $(-1)^F$, which extends from the R sector operator.}
    \label{fig:NS_rotate}
\end{figure}
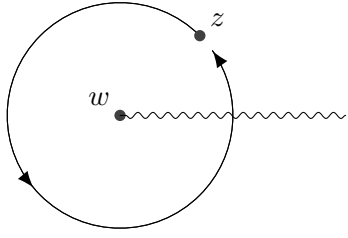

The operators in the Ramond sector change the periodicity of fermions, while they preserve the one of bosons \cite{Ginsparg:1988ui,Polchinski:1998rr}.
To see this property, consider the following operator product (a similar discussion has been done in \cite{Ashwinkumar:2021kav}):
\begin{align}
    V_{\lambda}(z) \, V_{\lambda'+\frac{\chi}{2}}(w) = {(z-w)^{\lambda\cdot(\lambda'+\frac{\chi}{2})}}\,V_{\lambda+\lambda'+\frac{\chi}{2}}(w) + \cdots\,,
\end{align}
where $V_{\lambda}(z)$ and $V_{\lambda'+\frac{\chi}{2}}(w)$ ($\lambda$, $\lambda'\in\Lambda(C)$) are in the NS and R sector, respectively.
Let us pick up one of them and move it around the other as in Fig.\ref{fig:NS_rotate}.
Under the rotation $z-w\to e^{2\pi\i}(z-w)$, the amplitude acquires the phase $e^{\pi\i\,\chi\cdot\lambda}=(-1)^{\chi\cdot\lambda}$. 
It tells us that while bosonic operators $V_{\lambda}(z)$ ($\lambda\in\Lambda_0$) are periodic under the rotation, fermionic ones $V_{\lambda}(z)$ ($\lambda\in\Lambda_2$) are anti-periodic. 
This can be understood from that R sector operators are non-local operators attached to the $(-1)^F$ line \cite{Lin:2019hks} (see Fig.\ref{fig:NS_rotate}).
Therefore, the NS operator $V_{\lambda}(z)$ receives the action of $(-1)^F$ as it goes around an operator in the Ramond sector.

Using the state-operator isomorphism, the Hilbert space $\CH_\mathrm{R}(C)$ of the Ramond sector is given by
\begin{align}
    \CH_{\mathrm{R}}(C) = \left\{\left.\alpha_{-k_1}^{i_1}\cdots\alpha_{-k_r}^{i_r}\ket{\tilde{\lambda}}\;\right| \; \tilde{\lambda}\in S(\Lambda(C))\right\}\,,
\end{align}
where $\alpha_k^i$ are the bosonic oscillators as in the NS sector.

We propose to define the $\BZ_2$ grading in the Ramond sector in analogy with the NS sector.
As in Fig.\ref{fig:NS_rotate}, we consider the product of two R sector operators and rotate an operator around the other.
The Ramond sector of the fermionic CFTs is well-defined only up to an overall fermion parity. 
We fix the ambiguity by considering the operator product
\begin{align}
    V_{\lambda+\frac{\chi}{2}}(z)\,V_{\frac{\chi}{2}}(w) = (z-w)^{(\lambda+\frac{\chi}{2})\cdot\frac{\chi}{2}}\, V_{\lambda+\chi}(w) +\cdots\,.
\end{align}
Under the rotation $z-w\to e^{2\pi\i}(z-w)$, the $\BZ_2$ line, which extends from $V_{\frac{\chi}{2}}(w)$, acts on the R sector operator $V_{\lambda+\frac{\chi}{2}}(z)$ and the above amplitude obtains the phase $\gamma = (-1)^{\chi\cdot(\lambda+\frac{\chi}{2})}$.
It is straightforward to know $\gamma = +1$ if $\lambda+\frac{\chi}{2}\in\Lambda_1$ and $\gamma=-1$ if $\lambda+\frac{\chi}{2}\in\Lambda_3$.
Therefore, the $\BZ_2$ grading of the Ramond sector is given by
\begin{align}
    \begin{aligned}
    \label{eq:R_grading}
        \CH_{\mathrm{R}}^+(C) &= \left\{\left.\alpha_{-k_1}^{i_1}\cdots\alpha_{-k_r}^{i_r}\ket{\tilde{\lambda}}\;\right| \; \tilde{\lambda}\in\Lambda_1\,,\;r\in\BZ_{\geq0}\right\}\,,\\
        \CH_{\mathrm{R}}^-(C) &= \left\{\left.\alpha_{-k_1}^{i_1}\cdots\alpha_{-k_r}^{i_r}\ket{\tilde{\lambda}}\;\right| \; \tilde{\lambda}\in\Lambda_3\,,\;r\in\BZ_{\geq0}\right\}\,.
    \end{aligned}
\end{align}
Combining \eqref{eq:NS_grading} and \eqref{eq:R_grading}, we see that the only difference between the four sectors is the underlying subset $\Lambda_i$.
Then we can interpret the four sectors of the Hilbert space as the four subsets $\Lambda_i$ defined in terms of lattices.
Along these lines, we show the Hilbert spaces of the fermionic CFTs (including the NS and R sectors) in Table \ref{table:sector}.
In the next section, we verify that the resulting partition functions exhibit the expected modular transformations.

\begin{table}
  \centering
  \begin{minipage}[t]{.45\linewidth}
  \centering
  \begin{tabular}{c|cc}
     & NS sector  &  R sector  \\
    \hline \\[-0.4cm]
    even  & $\CH_{\mathrm{NS}}^+$  & $\CH_{\mathrm{R}}^+$ \\[0.1cm]
    odd  & $\CH_{\mathrm{NS}}^-$   & $\CH_{\mathrm{R}}^-$
  \end{tabular}
  \end{minipage}
  \begin{minipage}[t]{.45\linewidth}
  \centering
  \begin{tabular}{c|cc}
      & NS sector  &  R sector  \\
    \hline\\[-0.4cm] 
    even  & $\Lambda_0$  & $\Lambda_1$ \\[0.1cm]
    odd  & $\Lambda_2$   & $\Lambda_3$ 
  \end{tabular}
  \end{minipage}
  \vspace{0.5cm}
  \caption{Each NS and R sector of fermionic code CFTs is divided into two sectors by the fermion parity (the left panel). The $\BZ_2$ grading \eqref{eq:NS_grading} and \eqref{eq:R_grading} tell us that each sector of fermionic code CFTs corresponds to a subset $\Lambda_i$ in terms of the lattice (the right panel).}
  \label{table:sector}
\end{table}

\subsection{Torus partition functions}
\label{ss:toruspart}
Let us consider fermionic code CFTs living on the torus with modulus $\tau = \tau_1 + \i\tau_2$.
The torus has two cycles, which we will call the spacial and timelike periodicity
\begin{align}
   \mathrm{spacial}: w \sim w + 2\pi\,,\quad \mathrm{timelike}:  w\sim w+2\pi\tau\,,
\end{align}
where $w=\sigma^1+\i\sigma^2$ is the cylindrical coordinate.
In analogy with the spacelike case, we call the timelike periodicity NS if fermion operators are antiperiodic, and R if fermions are periodic.
The torus has four sectors specified by (spatial, timelike) boundary conditions and we denote them by
\begin{equation}
\label{eq:def_sector_label}
    \mathrm{NS}: (\mathrm{NS}, \mathrm{NS})\,,\quad
    \widetilde{\mathrm{NS}}: (\mathrm{NS}, \mathrm{R})\,,\quad
    \mathrm{R}: (\mathrm{R}, \mathrm{NS})\,,\quad
    \widetilde{\mathrm{R}}: (\mathrm{R}, \mathrm{R})\,.
\end{equation}
These sectors correspond to the four different choices of spin structures associated with the torus.
The partition functions of each sector are ($q=e^{2\pi\i\tau}$)
\begin{align}
\begin{aligned}
\label{eq:def_partition_function_ff}
    Z_{\mathrm{NS}}(\tau;\Lambda(C)) &= \Tr_{\CH_{\mathrm{NS}}}\left[{q^{L_0-\frac{c}{24}}}\right] =  \frac{1}{\eta(\tau)^n} \sum_{\lambda\,\in\,\Lambda(C)}  q^{\frac{1}{2}\lambda^2}\,, \\
    Z_{\widetilde{\mathrm{NS}}}(\tau;\Lambda(C)) &= \Tr_{\CH_{\mathrm{NS}}}\left[{(-1)^F\,q^{L_0-\frac{c}{24}}}\right] = \frac{1}{\eta(\tau)^n} \sum_{\lambda\,\in\,\Lambda(C)} (-1)^{\chi\cdot\lambda} q^{\frac{1}{2}\lambda^2}\,, \\
    Z_{\mathrm{R}}(\tau;\Lambda(C)) &= \Tr_{\CH_{\mathrm{R}}}\left[{q^{L_0-\frac{c}{24}}}\right] = \frac{1}{\eta(\tau)^n} \sum_{\lambda\,\in\,\Lambda(C)} q^{\frac{1}{2}(\lambda+\frac{\chi}{2})^2}\,, \\
    Z_{\widetilde{\mathrm{R}}}(\tau;\Lambda(C)) &= \Tr_{\CH_{\mathrm{R}}}\left[{(-1)^F\,q^{L_0-\frac{c}{24}}}\right] = \frac{1}{\eta(\tau)^n} \sum_{\lambda\,\in\,\Lambda(C)} (-1)^{\chi\cdot(\lambda+\frac{\chi}{2})} q^{\frac{1}{2}(\lambda+\frac{\chi}{2})^2}\,.
    \end{aligned}
\end{align}
where $\chi$ is a characteristic vector of $\Lambda(C)$ and $\eta(\tau)$ is the Dedekind eta function
\begin{align}
    \eta(\tau) = q^{\frac{1}{24}}\prod_{i=1}^\infty \,(1-q^m)\,.
\end{align}

Alternatively, we can write down the partition functions in terms of the four subsets $\Lambda_i$ ($i=0,1,2,3$). Let us denote the associated partition functions by
\begin{align}
    Z(\tau;\Lambda_i) = \frac{1}{\eta(\tau)^n}\,\sum_{\lambda\,\in\,\Lambda_i}q^{\frac{1}{2}\lambda^2}\,,
\end{align}
where $\sum_{\lambda\,\in\,\Lambda_i}q^{\frac{1}{2}\lambda^2}$ is the theta function associated with $\Lambda_i$.
Our construction of fermionic CFTs naturally relates the $\BZ_2$ grading of the Hilbert space to four subsets $\Lambda_i$ as in Table~\ref{table:sector}.
Then, the partition functions of each sector are given by
\begin{align} \label{eq:sector_partition_functions}
    \begin{aligned}
    Z_\mathrm{NS}(\tau;\Lambda(C)) &= Z(\tau;\Lambda_0) + Z(\tau;\Lambda_2)\,,\\
    Z_{\widetilde{\mathrm{NS}}}(\tau;\Lambda(C)) &= Z(\tau;\Lambda_0) - Z(\tau;\Lambda_2)\,,\\
    Z_\mathrm{R}(\tau;\Lambda(C)) &= Z(\tau;\Lambda_1) + Z(\tau;\Lambda_3)\,,\\
    Z_{\widetilde{\mathrm{R}}}(\tau;\Lambda(C)) &= Z(\tau;\Lambda_1) - Z(\tau;\Lambda_3)\,.
    \end{aligned}
\end{align}

In the rest of this section, we explicitly check the modular transformations of the partition functions. 
While partition functions of bosonic CFTs should be modular invariant, we do not require modular invariance for fermionic CFTs.
Instead, we ask that partition functions associated with different spacial and timelike periodicity transform covariantly.
For this purpose, we denote the partition functions collectively as
\begin{align}
    \label{eq:collective_part}
    Z^{\alpha,\beta}(\tau;\Lambda(C)) = \frac{1}{\eta(\tau)^n} \sum_{\lambda\,\in\,\Lambda(C)} (-1)^{\beta\chi\cdot(\lambda+\alpha\frac{\chi}{2})} q^{\frac{1}{2}(\lambda+\alpha\frac{\chi}{2})^2}
\end{align}
where $(\alpha,\beta)$ corresponds to each sector by
\begin{align}
\label{eq:sector_notation}
    \mathrm{NS}: (0, 0)\,,\quad \widetilde{\mathrm{NS}}: (0, 1)\,,\quad
    \mathrm{R}: (1, 0)\,,\quad \widetilde{\mathrm{R}}: (1, 1)\,.
\end{align}
Note that $Z^{\alpha,\beta}(\tau;\Lambda(C))=Z^{\alpha',\beta'}(\tau;\Lambda(C))$ for integers $\alpha\equiv\alpha'$, $\beta\equiv\beta' \mod 2$.

\subsubsection{For odd prime $p\neq2$}
For an odd prime $p$, the canonical choice of a characteristic vector $\chi$ is given by the following proposition.
In what follows, we fix the choice of a characteristic vector by $\chi = \sqrt{p}\,(1,1,\cdots,1)$ for an odd prime case.

\begin{proposition}
Let $p$ be an odd prime and $C\subset\BF_p^n$ a self-dual code. Then $\chi=\sqrt{p}\,(1,1,\cdots,1)$ is a characteristic vector of the Construction A lattice $\Lambda(C)$.
\end{proposition}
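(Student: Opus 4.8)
The plan is to verify directly that $\chi = \sqrt{p}\,(1,1,\ldots,1)$ both lies in $\Lambda(C)$ and satisfies the defining congruence \eqref{eq:def_characteristic}. First I would check membership: writing $\chi = \frac{p\,(1,\ldots,1)}{\sqrt{p}}$ exhibits $\chi$ in the form $v/\sqrt{p}$ with $v = p\,(1,\ldots,1)\equiv 0 \pmod{p}$, so $\chi\in\Lambda(C)$ because $0\in C$. Then, to test \eqref{eq:def_characteristic}, I would take an arbitrary $\lambda\in\Lambda(C)$ and write $\lambda = \frac{c + p\,m}{\sqrt{p}}$ with $c$ the representative of a codeword in $\{0,\ldots,p-1\}^n$ and $m\in\BZ^n$.

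A direct expansion gives
\[
\lambda\cdot\lambda = \frac{c\cdot c}{p} + 2\,m\cdot c + p\,m\cdot m\,,
\qquad
\chi\cdot\lambda = \sum_{i=1}^n (c_i + p\,m_i)\,.
\]
Reducing modulo $2$ and using that $p$ is odd, the middle term $2\,m\cdot c$ drops out and $p\,m\cdot m\equiv m\cdot m$, while on the other side $p\sum_i m_i\equiv\sum_i m_i$. Since $m_i^2\equiv m_i\pmod{2}$, one has $m\cdot m\equiv\sum_i m_i\pmod{2}$, so the $m$-dependent contributions on the two sides cancel against each other. Thus the claim reduces to the single congruence $\frac{c\cdot c}{p}\equiv\sum_i c_i\pmod{2}$.

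The heart of the argument — and the one genuinely number-theoretic step, which I expect to be the main obstacle — is establishing this last congruence. Here I would invoke self-duality: since $C=C^\perp$ is in particular self-orthogonal, $c\cdot c\equiv 0\pmod{p}$, so $k:=\frac{c\cdot c}{p}$ is a genuine integer. The identity $\sum_i c_i^2 - \sum_i c_i = \sum_i c_i(c_i-1)$ is a sum of products of consecutive integers, hence even, which gives $c\cdot c\equiv\sum_i c_i\pmod{2}$. Combining this with $c\cdot c = p\,k$ and the oddness of $p$ (so that $p\,k\equiv k\pmod{2}$) yields $k\equiv\sum_i c_i\pmod{2}$, exactly what is needed. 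I expect the odd-prime hypothesis to enter precisely at two points — discarding $2\,m\cdot c$ and replacing every factor of $p$ by $1$ modulo $2$ — which is also why the binary case $p=2$ has to be treated separately with a different canonical choice of $\chi$.
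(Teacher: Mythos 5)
Your proof is correct and follows essentially the same route as the paper: a direct verification of the congruence $\lambda\cdot\lambda\equiv\chi\cdot\lambda\pmod 2$ using the parity identity $x^2\equiv x\pmod 2$ for integers, the oddness of $p$ to discard factors of $p$ modulo $2$, and self-duality to ensure the relevant quantities are integers. The paper streamlines the bookkeeping by writing $\lambda=s/\sqrt{p}$ with $s\in\BZ^n$ and noting $\lambda\cdot\lambda\equiv p\,\lambda\cdot\lambda=s\cdot s\pmod 2$ all at once, whereas you split $s=c+pm$ and track the two pieces separately, but the underlying argument is the same.
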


\begin{proof}
From the construction of $\Lambda(C)$, $\chi=\sqrt{p}\,(1,1,\cdots,1)\in\Lambda(C)$ and $\lambda\in\Lambda(C)$ can be written as $\lambda=\frac{1}{\sqrt{p}}\,s$ ($s\in\mathbb{Z}^n$). Since $\Lambda(C)$ is self-dual and $p$ is odd, $\lambda\cdot\lambda$ is an integer and
\begin{equation}
    \lambda\cdot\lambda \equiv p\,\lambda\cdot\lambda = s\cdot s \mod 2\,.
\end{equation}
On the other hand, for any integer $s_i \equiv s_i^2 \mod 2$, thus
\begin{equation}
    \chi\cdot\lambda = \sum_{i=1}^n s_i \equiv \sum_{i=1}^n s_i^2 \mod 2\,.
\end{equation}
Combining them, we get
\begin{equation}
    \lambda\cdot\lambda = \chi\cdot\lambda \mod 2\,,
\end{equation}
which is the definition of a characteristic vector.
\end{proof}

The above choice of a characteristic vector $\chi = \sqrt{p}\,(1,1,\cdots,1)$ reduces the partition functions \eqref{eq:collective_part} to the complete weight enumerator of a classical linear code $C$ as follows:
\begin{proposition}
Let $p$ be an odd prime and $C \subset \BF_p^n$ a self-dual code. Then the partition functions of each sector can be written by the complete weight enumerator $W_C(\{x_a\})$ as
\begin{equation}
    Z^{\alpha,\beta}(\tau;\Lambda(C)) = \frac{1}{\eta(\tau)^n} W_C \left( f^{\alpha,\beta}_{0,p}(\tau),\,f^{\alpha,\beta}_{1,p}(\tau),\,\cdots,\,f^{\alpha,\beta}_{p-1,p}(\tau) \right)
\end{equation}
where
\begin{equation}
\label{eq:def_f_function}
    f^{\alpha,\beta}_{a,p}(\tau) = \i^{\alpha\beta p} \sum_{k\,\in\,\BZ}\, (-1)^{\beta(k+a)}\, q^{\frac{p}{2}\left(k+\alpha\frac{1}{2}+\frac{a}{p}\right)^2}.
\end{equation}
\label{prop:part_weight}
\end{proposition}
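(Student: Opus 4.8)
The strategy is to start from the explicit sum definition of the partition function $Z^{\alpha,\beta}(\tau;\Lambda(C))$ in \eqref{eq:collective_part}, insert the canonical characteristic vector $\chi=\sqrt p\,(1,\dots,1)$, and factorize the lattice sum coordinate by coordinate so that the structure of the complete weight enumerator \eqref{eq:def_complete_weight} emerges. The key observation is that an arbitrary lattice vector in $\Lambda(C)$ decomposes as $\lambda=\frac{1}{\sqrt p}(c+p\,m)$ with $c\in C$ and $m\in\BZ^n$, so summing over $\lambda\in\Lambda(C)$ is the same as summing over all codewords $c\in C$ and, independently in each coordinate, over the integer shift $m_i\in\BZ$. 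Since the metric is the identity \eqref{eq:metric}, the quadratic form $\frac12(\lambda+\alpha\frac{\chi}{2})^2$ and the sign $(-1)^{\beta\chi\cdot(\lambda+\alpha\chi/2)}$ both split as sums over coordinates $i=1,\dots,n$, so the whole summand factorizes into a product over $i$.

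The plan is then to carry out the following steps in order. First I would write $\chi=\sqrt p\,(1,\dots,1)$ so that in the $i$-th coordinate $\lambda_i+\alpha\frac{\chi_i}{2}=\frac{1}{\sqrt p}(c_i+p\,m_i)+\frac{\alpha\sqrt p}{2}=\sqrt p\,(m_i+\frac{\alpha}{2}+\frac{c_i}{p})$, which makes the exponent $\frac12(\lambda_i+\alpha\frac{\chi_i}{2})^2=\frac{p}{2}(m_i+\frac{\alpha}{2}+\frac{c_i}{p})^2$ match the exponent appearing in \eqref{eq:def_f_function}. Second, I would treat the sign factor $(-1)^{\beta\chi\cdot(\lambda+\alpha\chi/2)}$: in the $i$-th coordinate this contributes $(-1)^{\beta\sqrt p\,(\lambda_i+\alpha\chi_i/2)}$, and substituting $\sqrt p\,(\lambda_i+\alpha\chi_i/2)=p(m_i+\frac{\alpha}{2})+c_i=p\,m_i+\frac{\alpha p}{2}+c_i$ lets me separate an $m_i$-dependent sign $(-1)^{\beta p\, m_i}=(-1)^{\beta m_i}$ (using that $p$ is odd), a codeword-dependent sign $(-1)^{\beta c_i}$, and an overall $i$-independent phase $(-1)^{\beta\alpha p/2}$, which accumulated over all $n$ coordinates and combined with the codeword sum reproduces the prefactor $\i^{\alpha\beta p}$ in \eqref{eq:def_f_function}. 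Third, I would group the product over coordinates according to the value $a=c_i\in\BF_p$: the number of coordinates with $c_i=a$ is exactly $\mathrm{wt}_a(c)$ from \eqref{eq:composition}, so the coordinate product becomes $\prod_{a\in\BF_p}\big(f^{\alpha,\beta}_{a,p}(\tau)\big)^{\mathrm{wt}_a(c)}$, where each factor is the single-coordinate sum $\sum_{m\in\BZ}(-1)^{\beta(m+a)}q^{\frac{p}{2}(m+\alpha/2+a/p)^2}$ after renaming the dummy index $m_i\to k$.

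Summing this product over all $c\in C$ and comparing with \eqref{eq:def_complete_weight} then yields exactly $W_C(\{f^{\alpha,\beta}_{a,p}(\tau)\})$, dividing by $\eta(\tau)^n$ as required. The main obstacle, and the step demanding the most care, is the bookkeeping of the various sign and phase factors: one must correctly track how the overall coordinate-independent phase builds up to $\i^{\alpha\beta p}$, verify that $(-1)^{\beta p m_i}=(-1)^{\beta m_i}$ uses the oddness of $p$ (so that the $m$-sum in $f^{\alpha,\beta}_{a,p}$ has the stated $(-1)^{\beta k}$ weight rather than a trivial one), and confirm that the residual sign $(-1)^{\beta c_i}=(-1)^{\beta a}$ matches the $(-1)^{\beta(k+a)}$ pattern inside \eqref{eq:def_f_function}. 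I expect the factorization itself to be routine once the decomposition $\lambda=\frac{1}{\sqrt p}(c+pm)$ is in place; the delicate point is purely the phase arithmetic modulo $2$ and modulo $4$, where the distinction between $\beta=0$ and $\beta=1$ and the oddness of $p$ enter.
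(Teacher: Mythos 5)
Your proposal is correct and follows essentially the same route as the paper's proof: a direct computation that inserts $\chi=\sqrt{p}\,(1,\dots,1)$, decomposes $\lambda=\frac{1}{\sqrt{p}}(c+p\,m)$, factorizes the lattice sum coordinate by coordinate, and groups the factors by $\mathrm{wt}_a(c)$ to recover the complete weight enumerator. Your phase bookkeeping — the per-coordinate factor $\i^{\alpha\beta p}$, the use of oddness of $p$ in $(-1)^{\beta p m_i}=(-1)^{\beta m_i}$, and the residual $(-1)^{\beta(m_i+c_i)}$ — matches the paper's computation exactly.
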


\begin{proof}
The proof is given by a straightforward computation.
For $\chi=\sqrt{p}\,(1,1,\cdots,1)$, the partition functions $Z^{\alpha,\beta}(\tau;\Lambda(C))$, excluding the Dedekind eta functions, are
\begin{align}
    \begin{aligned}
    &\sum_{\lambda\,\in\,\Lambda(C)}\, (-1)^{\beta\chi\cdot(\lambda+\alpha\frac{\chi}{2})} q^{\frac{1}{2}(\lambda+\alpha\frac{\chi}{2})^2}\,, \\
    =& \sum_{c\,\in\, C} \sum_{m\,\in\,\mathbb{Z}^n} (-1)^{\beta\sqrt{p}\sum_{i=1}^n \left(\frac{1}{\sqrt{p}}c_i+\sqrt{p}m_i+\alpha\frac{\sqrt{p}}{2}\right)} q^{\frac{1}{2}\sum_{i=1}^n\left(\frac{1}{\sqrt{p}}c_i+\sqrt{p}m_i+\alpha\frac{\sqrt{p}}{2}\right)^2}\,, \\
    =& \sum_{c\,\in\, C}\, \prod_{i=1}^n\, \i^{\alpha\beta p} \sum_{m_i\in\mathbb{Z}} (-1)^{\beta(c_i+m_i)} q^{\frac{p}{2}\left(\frac{1}{p}c_i+m_i+\alpha\frac{1}{2}\right)^2}\,, \\
    =& \sum_{c\,\in\, C}\, \prod_{i=1}^n\, f^{\alpha,\beta}_{c_i,p}
    = W_C \left (f^{\alpha,\beta}_{0,p}(\tau),f^{\alpha,\beta}_{1,p}(\tau),\cdots,f^{\alpha,\beta}_{p-1,p}(\tau) \right)\,.
    \end{aligned}
\end{align}
\end{proof}
Combining with \eqref{eq:sector_partition_functions}, Proposition~\ref{prop:part_weight} endows with the relationship between the spectrum of codes, lattices, and CFTs, which can be measured by the complete weight enumerator, the theta functions, and the partition functions, respectively.

We can use Proposition \ref{prop:part_weight} to calculate the modular transformations of the partition functions. Under $T: \tau\to\tau+1$ and $S: \tau\to -1/\tau$, $f^{\alpha,\beta}_{a,p}(\tau)$ becomes
\begin{align}
\label{eq:f_T_transform}
    f^{\alpha,\beta}_{a,p}(\tau+1) &= (-1)^\frac{\alpha(\alpha+1)}{2} e^{\i\pi\frac{p\alpha^2}{4}} e^{\i\pi \frac{p+1}{p} a^2} f^{\alpha,\alpha+\beta+1}_{a,p}(\tau)\,, \\
    f^{\alpha,\beta}_{a,p} (-1/\tau) &= \sqrt{-\i\tau} \,(-\i)^{\alpha\beta p}\, \frac{1}{\sqrt{p}} \,\sum_{b=0}^{p-1}\, e^{-2\pi\i \frac{ab}{p}} \,f^{\beta,\alpha}_{b,p}(\tau)\,.
\end{align}
This can be shown by direct computation using the Poisson summation formula. 

Let us see the modular $T$ transformations of the partition functions by using \eqref{eq:f_T_transform}.
Since all terms in the complete weight enumerator have the even degree $n$, the phase $(-1)^\frac{\alpha(\alpha+1)}{2}$ is canceled in the partition functions. In addition, the contribution of $e^{\i\pi \frac{p+1}{p} a^2}$ to each term from $c\in C$ is
\begin{align}
    \prod_{i=1}^n e^{\i\pi \frac{p+1}{p} {c_i}^2} = e^{\i\pi \frac{p+1}{p} c\cdot c} = 1
\end{align}
since $c\cdot c \in p\,\BZ$ from self-duality and $p+1\in2\BZ$. Then
\begin{align}
\label{eq:T_transformation}
\begin{aligned}
    Z^{\alpha,\beta}(\tau+1;\Lambda(C)) &= \frac{1}{\eta(\tau+1)^n}\, W_C(\{f^{\alpha,\beta}_{a,p}(\tau+1)\}) \,,\\
    &= e^{-\i\pi\frac{n}{12}} \frac{1}{\eta(\tau)^n}\, e^{\i\pi\frac{p \alpha^2}{4}n}\, W_C(\{f^{\alpha,\alpha+\beta+1}_{a,p}(\tau)\})\,, \\
    &= e^{\i\pi(3p\alpha^2-1)\frac{n}{12}}\, Z^{\alpha,\alpha+\beta+1}(\tau;\Lambda(C))\,, \\
    &= e^{\i\pi(3\alpha^2-1)\frac{n}{12}}\, Z^{\alpha,\alpha+\beta+1}(\tau;\Lambda(C))\,.
\end{aligned}
\end{align}
For the last equality, we used the fact that the length of a self-dual code over $\BF_p$ is $n\in2\BZ$ if $p\in4\BZ+1$ and $n\in4\BZ$ if $p\in4\BZ+3$. Then, the phase depends on $n$, which is the central charge of our fermionic CFTs, but not on an odd prime $p$.

Similarly, the modular $S$ transformation can be given by
\begin{align}
\begin{aligned}
\label{eq:S_transformation}
    Z^{\alpha,\beta}(-1/\tau;\Lambda(C)) &= \frac{1}{\eta(-1/\tau)^n}\, W_C(\{f^{\alpha,\beta}_{a,p}(-1/\tau)\})\,, \\
    &= (-1)^{\alpha\beta\frac{n}{2}} \,\frac{1}{\eta(\tau)^n}\, W_C\left(\left\{\frac{1}{\sqrt{p}} \sum_{b=0}^{p-1} e^{-2\pi\i \frac{ab}{p}} f^{\beta,\alpha}_{b,p}(\tau)\right\}\right)\,, \\
    &= (-1)^{\alpha\beta\frac{n}{2}} \frac{1}{\eta(\tau)^n} W_{C}(\{f^{\beta,\alpha}_{a,p}(\tau)\})\,, \\
    &= (-1)^{\alpha\beta\frac{n}{2}}\, Z^{\beta,\alpha}(\tau;\Lambda(C))\,.
    \end{aligned}
\end{align}
For the third equality we used the MacWilliams identity \eqref{eq:MacWilliams_id} for a self-dual code $C=C^\perp$.

Let us couple holomorphic fermionic code CFTs with anti-holomorphic Majorana-Weyl fermions and probe a gravitational anomaly of our fermionic code CFTs by implementing the modular transformations.
The torus partition functions of a single Majorana-Weyl fermion with $(c_L,c_R) = \left(\frac{1}{2},0\right)$ are
\begin{align}
\label{eq:part_weyl}
    Z_\mathrm{NS}(\tau;\psi) = \sqrt{\frac{\theta_3(\tau)}{\eta(\tau)}}\,,\quad
    Z_{\widetilde{\mathrm{NS}}}(\tau;\psi) = \sqrt{\frac{\theta_4(\tau)}{\eta(\tau)}}\,,\quad
    Z_\mathrm{R}(\tau;\psi) = \sqrt{\frac{\theta_2(\tau)}{\eta(\tau)}}\,,\quad
    Z_{\widetilde{\mathrm{R}}}(\tau;\psi) = 0\,,
\end{align}
where $\theta_i(\tau)$ ($i=2,3,4$) are the Jacobi theta functions.
Following the notation \eqref{eq:sector_notation}, we denote the partition functions by $Z^{\alpha,\beta}(\tau;\psi)$.
The modular transformations of these partition functions are well known (see for example \cite{Ginsparg:1988ui})
\begin{align}
\label{eq:majoranaweyl_T}
    Z^{\alpha,\beta}(\tau+1;\psi) &= e^{\i\pi\frac{3\alpha^2-1}{24}}\,Z^{\alpha,\alpha+\beta+1}(\tau;\psi)\,,\\
    Z^{\alpha,\beta}(-1/\tau;\psi) &= Z^{\beta,\alpha}(\tau;\psi)\,,
\end{align}
where we formally write down the modular transformations for $(\alpha,\beta)=(1,1)$, which is the $\widetilde{\mathrm{R}}$ sector, because $Z_{\widetilde{\mathrm{R}}}(\tau;\psi)$ is vanishing.

For a fermionic code CFT with $(c_L,c_R) = \left(n,0\right)$, we couple $N=2n$ anti-holomorphic Majorana-Weyl fermions with $(c_L,c_R) = (0,n)$.
The resulting non-chiral CFT with $(c_L,c_R) = (n,n)$ has partition functions
\begin{align}
    Z^{\alpha,\beta}(\tau,\bar{\tau}) = Z^{\alpha,\beta}(\tau;\Lambda(C)) \overline{Z^{\alpha,\beta}(\tau;\psi^{2n})}\,.
\end{align}
Note that since $Z^{1,1}(\tau;\psi)=0$, the whole partition function $Z^{1,1}(\tau,\bar{\tau})$ is also vanishing. 

The modular $T$ transformation of the partition functions is
\begin{align}
    Z^{\alpha,\beta}(\tau+1,\bar{\tau}+1) = Z^{\alpha,\alpha+\beta+1}(\tau,\bar{\tau})\,,
\end{align}
because the modular $T$ transformation of $2n$ Majorana-Weyl fermions from \eqref{eq:majoranaweyl_T} is exactly same as \eqref{eq:T_transformation}.
For the modular $S$ transformation, we have
\begin{align}
    Z^{\alpha,\beta}(-1/\tau,-1/\bar{\tau}) = Z^{\beta,\alpha}(\tau,\bar{\tau})\,,
\end{align}
where we use $Z_{1,1}(\tau,\bar{\tau})=0$.
These modular transformation laws exactly match modular covariance without a gravitational anomaly, which tells us that the gravitational anomaly of fermionic code CFTs can be canceled by one of the Majorana-Weyl fermions.
In this sense, our fermionic CFTs behave as properly as Majorana-Weyl fermions under modular transformation.

Note that the above discussion does not allow us to understand the modular property of $Z_{\widetilde{\mathrm{R}}}(\tau;\Lambda(C))$ because $Z_{\widetilde{\mathrm{R}}}(\tau;\psi)=0$.
We need an additional discussion for the $\widetilde{\mathrm{R}}$ sector.
According to \cite{Grigoletto:2021zyv}, the modular $S$ transformation of fermionic CFTs with $\nu = 2(c_R-c_L)$ mod $8$ should behave as
\begin{align}
    Z_{\widetilde{\mathrm{R}}}(-1/\tau,-1/\bar{\tau}) = e^{-\i\pi\frac{\nu}{4}}\,Z_{\widetilde{\mathrm{R}}}(\tau,\bar{\tau})\,.
\end{align}
For our fermionic CFTs with $(c_L,c_R) = (n,0)$, $\nu = -2n$ mod $8$. Then we have $e^{-\i\pi\frac{\nu}{4}} = (-1)^{\frac{n}{2}}$, which agrees with the transformation law \eqref{eq:S_transformation}.
This suggests that while our fermionic CFTs have a gravitational anomaly, their partition functions show the expected modular transformation laws.

For an odd prime $p$, we have formulated the construction of fermionic CFTs from self-dual codes over $\BF_p$, which relates linear codes, lattices, and CFTs. In Table \ref{table:dictionary_odd_prime}, we show a dictionary between them. This tells us that, for example, the complete weight enumerator $C$ determines the lattice theta function and the partition functions of fermionic code CFTs.

\begin{table}[t]
  \caption{A dictionary between codes, lattices, and CFTs for odd prime $p\neq2$}
  \label{table:dictionary_odd_prime}
  \centering
  {\small
  \begin{tabular*}{13.5cm}{@{\extracolsep{\fill}}ccc}
    \toprule
    Linear code & Lattice & Fermionic CFT  \\
    \midrule \\[-0.5cm]
    length $n$  & rank  & central charge \\[0.1cm] \\[-0.5cm]
    codeword $c$ & lattice vector $\lambda$ & momentum \\[0.1cm] \\[-0.5cm]
    & inner product $\chi\cdot\lambda$ & fermion parity $(-1)^F$ \\[0.1cm] \\[-0.5cm]
    linear code $C$ & Construction A lattice $\Lambda(C)$ & NS sector\\[0.1cm] \\[-0.5cm]
    & shadow of $\Lambda(C)$ & R sector \\[0.1cm] \\[-0.5cm]
    complete weight enumerator & lattice theta function & partition function \\
    \bottomrule
  \end{tabular*}
  }
\end{table}

\subsubsection{For $p=2$} \label{sec:partiton_function_p2}
In the previous section, we have canonically chosen a characteristic vector $\chi = \sqrt{p}\,(1,1,\cdots,1)$ for an odd prime $p$.
For $p=2$, however, the vector $\sqrt{2}\,(1,1,\cdots,1)$ is not characteristic for any Construction A lattice. 
To make a clear choice of a characteristic vector and define the shadow of $\Lambda(C)$, it is useful to introduce the shadow of a binary self-dual code \cite{conway1990newcode,brualdi1991weight}. 

Let $C\subset \BF_2^n$ be a singly-even self-dual code. 
We divide $C$ into two subsets $C_0$ and $C_2$. The subset $C_0$ ($C_2$) consists of doubly-even (singly-even) codewords of $C$:
\begin{align}
\begin{aligned}
\label{eq:def_C0_C2}
    C_0 &= \{ c\in C \mid \mathrm{wt}_1(c)\in4\BZ \}\,, \\
    C_2 &= \{ c\in C \mid \mathrm{wt}_1(c)\in4\BZ+2 \}\,,
\end{aligned}
\end{align}
where $C=C_0\cup C_2$. Note that $C_0$ is a linear code, in particular, doubly-even self-orthogonal. This is because for $c$, $c'\in C_0$, $\mathrm{wt}_1(c+c')=\mathrm{wt}_1(c)+\mathrm{wt}_1(c')-2c\cdot c'\in4\BZ$ and thus $c+c'\in C_0$.
In addition, we have $C_0\subset C\subset C_0^\perp$ from the self-duality of $C$. 
Then the shadow $S(C)$ of a self-dual code $C$ is defined by
\begin{align}
\label{eq:def_shadow_code}
    S(C) = C_0^\perp \setminus C\,.
\end{align}

From linearity and the dimension of the codes, any $t\in C_2$ and $s\in S(C)$ satisfy
\begin{align}
    C = C_0\cup(C_0+t)\,,\quad C_0^\perp = C\cup(C+s)\,,\quad S(C)=C+s\,.
\end{align}
Moreover, their inner products are
\begin{align}
    t\cdot s &= 1\,, \label{eq:ts} \\
    s\cdot s &= 
    \begin{dcases}
    0 & (n\in4\BZ)\,,\\
    1 & (n\in4\BZ+2)\,,
    \end{dcases}
    \label{eq:ss}
\end{align}
where we have done the mod $2$ computation.
The first equation \eqref{eq:ts} follows from the fact that if $t\cdot s=0$, then $s$ is orthogonal to $C_0\cup(C_0+t)=C$ and thus $s\in C^\perp=C$, which is contradict with $s\in S(C)$. For \eqref{eq:ss}, we consider $a=(1,1,\cdots,1)\in \BF_2^n$, which all self-dual codes contain. If $n\in4\BZ$, $s\cdot s=s\cdot a=0$ from  $a\in C_0$ and if $n\in4\BZ+2$, $s\cdot s=s\cdot a=1$ from $a\in C_2$.

Then, the elements $s\in S(C)$ in the shadow automatically satisfies
\begin{align}
\label{eq:parity_code}
    s\cdot c =
    \begin{dcases}
    0  & (c\in C_0)\\
    1  & (c\in C_2)
    \end{dcases}
    \quad \mod 2\,,
\end{align}
where we have used $s\in C_0^\perp$ and \eqref{eq:ts}.
This is analogous to the shadow of a lattice, where a characteristic vector $\chi$ divides a lattice $\Lambda$ by the mod $2$ value of $\chi\cdot\lambda$ for $\lambda\in\Lambda$ as in \eqref{eq:lattice_NS}.
The following proposition tells us that $s\in S(C)$ gives the canonical choice of a characteristic vector $\chi\in\Lambda(C)$ for a singly-even self-dual code $C\subset\BF_2^n$.

\begin{proposition}
Let $C\subset \BF_2^n$ be a singly-even self-dual code. Then, for any $s\in S(C)$, $\chi=\sqrt{2}s\in\BR^n$ is a characteristic vector of $\Lambda(C)$.
\end{proposition}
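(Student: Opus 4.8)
The plan is to verify the two defining properties of a characteristic vector separately: first that $\chi=\sqrt{2}\,s$ genuinely lies in $\Lambda(C)$, and then that it satisfies the congruence $\lambda\cdot\lambda\equiv\chi\cdot\lambda \pmod 2$ for every $\lambda\in\Lambda(C)$, as required by \eqref{eq:def_characteristic}. Membership is the quick half: since $\sqrt{2}\,s=(2s)/\sqrt{2}$ with $2s\in\BZ^n$ and $2s\equiv 0\pmod 2$, and $0\in C$, the definition \eqref{eq:def_constA_lattice} places $\chi$ in $\Lambda(C)$ at once. (Here we lift $s\in C_0^\perp\subset\BF_2^n$ to the $\{0,1\}$-vector in $\BZ^n$.)

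For the congruence I would parametrize a general lattice vector as $\lambda=v/\sqrt{2}$ with $v\in\BZ^n$ and $v\equiv c\pmod 2$ for some $c\in C$, so that the two sides read $\lambda\cdot\lambda=\tfrac12\,v\cdot v$ and $\chi\cdot\lambda=s\cdot v$. The heart of the argument is a mod-$4$ refinement of $v\cdot v$: writing $v_i=c_i+2m_i$ with $c_i\in\{0,1\}$ gives $v_i^2=c_i+4(c_im_i+m_i^2)\equiv c_i\pmod 4$, hence $v\cdot v\equiv\mathrm{wt}_1(c)\pmod 4$. Because $C$ is self-dual, $\mathrm{wt}_1(c)$ is even, so dividing by $2$ yields $\tfrac12\,v\cdot v\equiv 0\pmod 2$ when $c\in C_0$ and $\equiv 1\pmod 2$ when $c\in C_2$, exactly the doubly/singly-even splitting \eqref{eq:def_C0_C2}. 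On the other side, reducing mod $2$ and using $v\equiv c\pmod 2$ gives $s\cdot v\equiv s\cdot c\pmod 2$, and the parity relation \eqref{eq:parity_code} already established for the shadow says precisely that $s\cdot c\equiv 0$ for $c\in C_0$ and $\equiv 1$ for $c\in C_2$. Matching the two case analyses establishes $\lambda\cdot\lambda\equiv\chi\cdot\lambda\pmod 2$, completing the verification.

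The step I expect to require the most care is the mod-$4$ congruence $v\cdot v\equiv\mathrm{wt}_1(c)\pmod 4$: a naive reduction only controls $v\cdot v$ modulo $2$, but $\lambda\cdot\lambda$ involves the factor $\tfrac12$, so one genuinely needs the finer information to read off the parity of $\tfrac12\,v\cdot v$. Once that refinement is in hand, the remainder is a bookkeeping exercise that dovetails the $C_0/C_2$ dichotomy \eqref{eq:def_C0_C2} with the shadow parity \eqref{eq:parity_code}, and the proposition follows.
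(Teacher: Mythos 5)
Your proof is correct and follows essentially the same route as the paper: both compute $\lambda\cdot\lambda=\tfrac12(c+2m)^2\equiv\tfrac12\,\mathrm{wt}_1(c)\pmod 2$ (your mod-$4$ refinement is exactly the paper's observation that the cross terms contribute multiples of $4$) and match this against $\chi\cdot\lambda\equiv s\cdot c\pmod 2$ via the shadow parity relation \eqref{eq:parity_code}. The only addition is your explicit check that $\sqrt{2}\,s\in\Lambda(C)$, which the paper leaves implicit.
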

\begin{proof}
From the construction of the lattice $\Lambda(C)$, any $\lambda\in\Lambda(C)$ can be written as $\lambda=\frac{1}{\sqrt{2}}(c+2m)$ where $c\in C,m\in\BZ^n$. The inner product is
\begin{equation}
    \lambda\cdot\lambda = \frac{1}{2} (c^2+4c\cdot m+4m^2) \equiv \frac{1}{2} c^2 \equiv \left\{ \begin{aligned} &0 &&(c\in C_0) \\ &1 &&(c\in C_2) \end{aligned} \right. \mod 2\,.
\end{equation}
On the other hand,
\begin{equation} \label{eq:lambda_chi_p2}
    \lambda\cdot\chi = (c+2m)\cdot s \equiv c\cdot s \equiv \left\{ \begin{aligned} &0 &&(c\in C_0) \\ &1 &&(c\in C_2) \end{aligned} \right. \mod 2\,,
\end{equation}
where we used \eqref{eq:parity_code} in the last equation.
Combining these, we get
\begin{equation}
    \lambda\cdot\lambda \equiv \chi\cdot\lambda \mod 2,
\end{equation}
which is the definition of a characteristic vector.
\end{proof}

Let us define the Construction A subsets for later convenience.
By applying Construction A to a subset $K\subset\BF_2^n$, we obtain the following subset $\Lambda(K)\subset\BR^n$ (rather than a lattice):
\begin{align}
\label{eq:constsubset}
    \Lambda(K) = 
    \left\{v/\sqrt{2}\mid v\,\in\,\BZ^n,\;\,v=s\;\;\mathrm{mod}\;\, 2\,,\;\,s\in K\right\}\,.
\end{align}
The following corollary guarantees the relationship between the shadow of a code and the shadow of the Construction A lattice.

\begin{corollary}[{\cite{elkies1999lattices}}]
Let $C\subset \BF_2^n$ be a singly-even self-dual code. Then the shadow of $\Lambda(C)$ is identical to a subset constructed from the shadow of $C$:
\begin{equation}
    S(\Lambda(C)) = \Lambda(S(C))\,.
\end{equation}
\end{corollary}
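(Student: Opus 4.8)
The plan is to prove both inclusions directly by writing elements of each side in terms of their integer representatives and reducing modulo $2$, using the characteristic vector $\chi=\sqrt{2}\,s$ with $s\in S(C)$ supplied by the preceding proposition. Recall that the lattice shadow is the coset $S(\Lambda(C))=\Lambda(C)+\frac{\chi}{2}=\Lambda(C)+\frac{s}{\sqrt{2}}$, while by \eqref{eq:constsubset} the right-hand side is $\Lambda(S(C))=\{\,v/\sqrt{2}\mid v\in\BZ^n,\ v\equiv s'\bmod 2,\ s'\in S(C)\,\}$. The one input I would carry in from the coding-theory discussion is the coset identity $S(C)=C+s$, namely that the code shadow is exactly the nontrivial $C$-coset inside $C_0^\perp$ represented by any fixed $s\in S(C)$.

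For $S(\Lambda(C))\subseteq\Lambda(S(C))$ I would take a generic $\lambda\in\Lambda(C)$, write it as $\lambda=\frac{c+2m}{\sqrt{2}}$ with $c\in C$ and $m\in\BZ^n$, and form $\lambda+\frac{s}{\sqrt{2}}=\frac{(c+s)+2m}{\sqrt{2}}$. The integer vector $(c+s)+2m$ reduces modulo $2$ to $c+s$, which lies in $C+s=S(C)$ because $c\in C$; hence $\lambda+\frac{\chi}{2}\in\Lambda(S(C))$.

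For the reverse inclusion I would start from $v/\sqrt{2}\in\Lambda(S(C))$ with $v\equiv s'\bmod 2$ for some $s'\in S(C)$. Using $S(C)=C+s$ I write $s'\equiv c_0+s\bmod 2$ for some $c_0\in C$, so that $v=c_0+s+2m'$ for some $m'\in\BZ^n$. Then $v/\sqrt{2}=\frac{c_0+2m'}{\sqrt{2}}+\frac{s}{\sqrt{2}}=\lambda_0+\frac{\chi}{2}$ with $\lambda_0\in\Lambda(C)$, exhibiting $v/\sqrt{2}$ as an element of the lattice shadow.

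Combining the two inclusions gives $S(\Lambda(C))=\Lambda(S(C))$. The computation is entirely routine; the only point requiring care — and effectively the crux — is the translation of the code-level coset structure $S(C)=C+s$ into the lattice picture, which is precisely what the choice $\chi=\sqrt{2}\,s$ renders transparent. In particular, independence of the representative $s\in S(C)$ is automatic: two choices of $s$ differ by an element of $C$, which merely relabels $c$ in the representative $\frac{c+2m}{\sqrt{2}}$, consistent with the earlier observation that the lattice shadow does not depend on the characteristic vector.
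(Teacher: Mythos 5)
Your proof is correct and follows essentially the same route as the paper's: both write a shadow element as $\frac{c+s+2m}{\sqrt{2}}$ with $\chi=\sqrt{2}\,s$ and invoke the coset identity $S(C)=C+s$ to identify the two sets. The paper compresses the two inclusions into a single "which is equivalent to" step, whereas you spell them out separately, but the content is identical.
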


\begin{proof}
Let $s\in S(C)$ and $\chi=\sqrt{2}\,s$ be a characteristic vector.
Suppose $k\in S(\Lambda(C))$, then it can be written as $k=\frac{1}{\sqrt{2}}(c+2m)+\frac{\chi}{2}=\frac{1}{\sqrt{2}}(c+s+2m)$ where $c\in C$, $m\in\BZ^n$, which is equivalent to $k\in\Lambda(S(C))$ since $c+s\in S(C)$.
\end{proof}

In what follows, we take a specific element $s\in S(C)$ and fix a characteristic vector by $\chi=\sqrt{2}s$. 
Let us define $C_1$, $C_3\subset \BF_2^n$ by
\begin{align}
\begin{aligned}
\label{eq:shadowcode_part}
    C_1 &= \begin{dcases}
        C_0 + s & (n\in4\BZ) \\
        C_2 + s & (n\in4\BZ+2)
    \end{dcases} \\
    C_3 &= \begin{dcases}
        C_2 + s & (n\in4\BZ) \\
        C_0 + s & (n\in4\BZ+2)
    \end{dcases} .
\end{aligned}
\end{align}
Note that $C_1$ and $C_3$ can be exchanged depending on the choice of $s\in S(C)$. For $\Lambda(C)$, the subsets $\Lambda_i\subset \BR^n$ corresponding to each sector can be written as
\begin{align}
\label{eq:coset_relat}
    \Lambda_i = \Lambda(C_i),\ i=0,1,2,3\,,
\end{align}
where we apply Construction A to the subsets $C_i$ ($i=0,1,2,3$).
We can prove this from \eqref{eq:parity_code} for $\Lambda_0$, $\Lambda_2$ and \eqref{eq:ss} for $\Lambda_1$, $\Lambda_3$.

In section \ref{ss:fermionCFT}, the $\BZ_2$ grading of the Hilbert space can be understood from underlying subsets $\Lambda_i$ as in Table \ref{table:sector}.
In particular, the binary case allows us to relate $\Lambda_i$ to a subset $C_i\subset\BF_2^n$ as in \eqref{eq:coset_relat}.
Therefore, we can go back further than lattices and give the $\BZ_2$ grading directly from classical codes $C$.
For example, we start with the subset $C_1$, which gives us $\Lambda(C_1) = \Lambda_1$ via Construction A. Then, from Table \ref{table:sector}, the associated vertex operator $:e^{\i\lambda\cdot X(z)}:$ where $\lambda\in\Lambda_1$ is in the R sector and even under the action of the fermion parity.
In this sense, we show the $\BZ_2$ grading of the Hilbert space in terms of the subsets $C_i\subset\BF_2^n$, in Table~\ref{table:sector_p=2}.

\begin{table}
  \centering
  \begin{tabular}{c|cc}
      & NS sector  &  R sector  \\
    \hline\\[-0.4cm] 
    even  & $C_0$  & $C_1$ \\[0.1cm]
    odd  & $C_2$   & $C_3$ 
  \end{tabular}
  \vspace{0.5cm}
  \caption{The $\BZ_2$ grading of the Hilbert space in terms of the subsets $C_i\subset\BF_2^n$. }
  \label{table:sector_p=2}
\end{table}

Let us check the modular property of the partition functions for $p=2$.
For notational convenience, we denote the partition function corresponding to a subset $K\subset \BF_2^n$ by
\begin{align}
\label{eq:part_subset}
    Z(\tau;K) = \frac{1}{\eta(\tau)^n} \sum_{\lambda\,\in\,\Lambda(K)} q^{\frac{1}{2}\lambda^2}\,, \quad q=e^{2\pi\i\tau}\,.
\end{align}
As we introduce the complete weight enumerator of a subset $K\subset\BF_2^n$
\begin{align}
\label{eq:subset_weight_enum}
    W_K(x_0,x_1) = \sum_{\kappa\,\in\,K} x_0^{\mathrm{wt}_0(\kappa)} x_1^{\mathrm{wt}_1(\kappa)}\,,
\end{align}
then the partition functions can be expressed as follows:
\begin{align}
\begin{aligned}
    Z(\tau;K) = \frac{1}{\eta(\tau)^n} W_K(\theta_{3}(2\tau),\theta_{2}(2\tau))\,,
\end{aligned}
\end{align}
where $\theta_3(\tau) := \sum_{n\in\BZ} q^{\frac{n^2}{2}}$ and $\theta_2(\tau):= \sum_{n\in\BZ} q^{\frac{1}{2}(n+\frac{1}{2})^2}$ are the Jacobi theta functions.
From \eqref{eq:sector_partition_functions}, the partition functions of each sector are 
\begin{align}
\label{eq:sector_partition_functions_p2}
\begin{aligned}
    Z_{\mathrm{NS}}(\tau;\Lambda(C)) &= Z(\tau;C_0) + Z(\tau;C_2) = Z(\tau;C)\,, \\
    Z_{\widetilde{\mathrm{NS}}}(\tau;\Lambda(C)) &= Z(\tau;C_0) - Z(\tau;C_2)\,, \\
    Z_{\mathrm{R}}(\tau;\Lambda(C)) &= Z(\tau;C_1) + Z(\tau;C_3) = Z(\tau;S(C))\,, \\
    Z_{\widetilde{\mathrm{R}}}(\tau;\Lambda(C)) &= Z(\tau;C_1) - Z(\tau;C_3)\, .
\end{aligned}
\end{align}

Under the modular $T$ transformation $\tau\to\tau+1$, the partition function of a subset $K\subset \BF_2^n$ behaves
\begin{align}
\begin{aligned}
    Z(\tau+1;K) &= \frac{1}{\eta(\tau+1)^n} W_K\left(\theta_3(2(\tau+1)),\theta_2(2(\tau+1))\right)\,, \\
    &= \frac{e^{-\i\pi\frac{n}{12}}}{\eta(\tau)^n} W_K(\theta_3(2\tau),\i\,\theta_2(2\tau))\,.
\end{aligned}
\end{align}
The modular $T$ transformation multiplies each term from $\kappa \in K$ by the phase $\i^{\mathrm{wt}_1(\kappa)}e^{-\i\pi\frac{n}{12}}$.
By definition, we know $\mathrm{wt}_1(\kappa)= 0$ mod $4$ for $\kappa\in C_0$ and  $\mathrm{wt}_1(\kappa) = 2$ mod $4$ for $\kappa\in C_2$.
Additionally, Corollary 1 and 2 of \cite{brualdi1991weight} tell us
$\mathrm{wt}_1(\kappa)= \frac{n}{2}$ mod $4$ for $\kappa\in S(C)$. Combining them, we conclude
\begin{align} \label{T_transformation_p2}
\begin{aligned}
    Z_{\mathrm{NS}}(\tau+1) &= (Z_{C_0}+Z_{C_2})(\tau+1) =  e^{-\i\pi\frac{n}{12}} (Z_{C_0}-Z_{C_2})(\tau) = e^{-\i\pi\frac{n}{12}}Z_{\widetilde{\mathrm{NS}}}(\tau)\,, \\
    Z_{\widetilde{\mathrm{NS}}}(\tau+1) &= (Z_{C_0}-Z_{C_2})(\tau+1) =  e^{-\i\pi\frac{n}{12}} (Z_{C_0}+Z_{C_2})(\tau) = e^{-\i\pi\frac{n}{12}}Z_{\mathrm{NS}}(\tau)\,, \\
    Z_{\mathrm{R}}(\tau+1) &= (Z_{C_1}+Z_{C_3})(\tau+1) = e^{\i\pi\frac{n}{6}} (Z_{C_1}+Z_{C_3})(\tau) = e^{\i\pi\frac{n}{6}}Z_{\mathrm{R}}(\tau)\,,\\
    Z_{\widetilde{\mathrm{R}}}(\tau+1) &= (Z_{C_1}-Z_{C_3})(\tau+1) = e^{\i\pi\frac{n}{6}} (Z_{C_1}-Z_{C_3})(\tau) = e^{\i\pi\frac{n}{6}} Z_{\widetilde{\mathrm{R}}}(\tau)\, ,
\end{aligned}
\end{align}
where we abbreviate $Z_{\ast}(\tau;\Lambda(C))$ as $Z_{\ast}(\tau)$ and $Z(\tau;K)$ as $Z_K(\tau)$.
This reproduces the same transformation law as \eqref{eq:T_transformation} for an odd prime $p$.

For the modular $S$ transformation, the following proposition is useful.

\begin{proposition}
Let $C$ be a binary linear code $C\subset\BF_2^n$ (not necessarily self-dual). Then, under the modular $S$ transformation: $\tau\to-1/\tau$, the partition function \eqref{eq:part_subset} of $C$ behaves
\begin{equation} \label{eq:Z_Strans_p2}
    Z(-1/\tau;C) = \frac{|C|}{2^{n/2}} Z(\tau;C^\perp) \,.
\end{equation}
\label{prop:modular_S}
\end{proposition}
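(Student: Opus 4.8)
The plan is to start from the factorized form $Z(\tau;C)=\eta(\tau)^{-n}\,W_C(\theta_3(2\tau),\theta_2(2\tau))$ already established for $p=2$, and to transform each of the three building blocks under $\tau\to-1/\tau$ separately: the Dedekind eta factor and the two Jacobi theta functions $\theta_3(2\tau)$ and $\theta_2(2\tau)$. The eta factor is standard, $\eta(-1/\tau)^n=(-\i\tau)^{n/2}\,\eta(\tau)^n$, so the entire computation reduces to understanding the $S$-transform of the pair $(\theta_3(2\tau),\theta_2(2\tau))$. I expect these two series to transform into each other precisely by the binary MacWilliams (discrete Fourier) substitution $\tilde x_a=\sum_{b\in\BF_2}(-1)^{ab}x_b$, up to a common automorphy factor; this is the structural reason the claimed identity holds, and it is what lets $C^\perp$ appear on the right.

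Concretely, writing $\theta_3(2\tau)=\sum_{m}q^{m^2}$ and $\theta_2(2\tau)=\sum_m q^{(m+1/2)^2}$, the first step is to apply the Poisson summation formula to $\theta_3(-2/\tau)=\sum_m e^{-2\pi\i m^2/\tau}$ and to $\theta_2(-2/\tau)=\sum_m e^{-2\pi\i(m+1/2)^2/\tau}$. The Gaussian integrals produce a common factor $\sqrt{-\i\tau}/\sqrt2$ together with the sums $\sum_k q^{k^2/4}$ and $\sum_k(-1)^k q^{k^2/4}$; splitting each over even and odd $k$ recombines them into $\theta_3(2\tau)\pm\theta_2(2\tau)$. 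I therefore anticipate
\[
\theta_3(-2/\tau)=\frac{\sqrt{-\i\tau}}{\sqrt2}\bigl(\theta_3(2\tau)+\theta_2(2\tau)\bigr),\qquad
\theta_2(-2/\tau)=\frac{\sqrt{-\i\tau}}{\sqrt2}\bigl(\theta_3(2\tau)-\theta_2(2\tau)\bigr),
\]
and the two combinations on the right are exactly $\tilde x_0$ and $\tilde x_1$ of the $p=2$ MacWilliams transform $\tilde x_a=\sum_b(-1)^{ab}x_b$.

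With this in hand the rest is bookkeeping. Since $W_C$ is homogeneous of degree $n$ — each codeword contributes a monomial $x_0^{\mathrm{wt}_0(c)}x_1^{\mathrm{wt}_1(c)}$ with $\mathrm{wt}_0(c)+\mathrm{wt}_1(c)=n$ — substituting the displayed transforms into $W_C$ pulls out $\bigl(\sqrt{-\i\tau}/\sqrt2\bigr)^n=(-\i\tau)^{n/2}/2^{n/2}$ and leaves $W_C(\tilde x_0,\tilde x_1)$. I then invoke the MacWilliams identity \eqref{eq:MacWilliams_id}, which gives $W_C(\{\tilde x_a\})=|C|\,W_{C^\perp}(\{x_a\})$ for a general binary code (self-duality is not needed here). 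Dividing by $\eta(-1/\tau)^n=(-\i\tau)^{n/2}\eta(\tau)^n$ cancels the $(-\i\tau)^{n/2}$ factor, leaving $Z(-1/\tau;C)=\frac{|C|}{2^{n/2}}\,\eta(\tau)^{-n}W_{C^\perp}(\theta_3(2\tau),\theta_2(2\tau))=\frac{|C|}{2^{n/2}}\,Z(\tau;C^\perp)$, which is the claim.

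The main obstacle is the careful Poisson-summation step: tracking the branch of $\sqrt{-\i\tau}$, the $1/\sqrt2$, and the even/odd resummation so that the theta pair maps onto the MacWilliams substitution with no stray constants. Once that identity is secured, homogeneity of $W_C$ and \eqref{eq:MacWilliams_id} make everything routine. As an independent cross-check on the normalization, one may avoid the theta manipulations altogether: the proof of Proposition \ref{prop:self-dual} already shows $\Lambda(C)^*=\Lambda(C^\perp)$ for any $C$, the lattice $\Lambda(C)$ has $\mathrm{covol}(\Lambda(C))=2^{n/2}/|C|$ (it contains $\sqrt2\,\BZ^n$ with index $|C|$), and the standard lattice theta $S$-transform $\Theta_{\Lambda}(-1/\tau)=(-\i\tau)^{n/2}\mathrm{covol}(\Lambda)^{-1}\Theta_{\Lambda^*}(\tau)$ reproduces exactly the prefactor $|C|/2^{n/2}$ after dividing by $\eta(-1/\tau)^n$.
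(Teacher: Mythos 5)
Your proposal is correct and follows essentially the same route as the paper: expand $Z(\tau;C)$ via the weight enumerator, apply the $S$-transforms $\theta_{3,2}(-2/\tau)=\tfrac{\sqrt{-\i\tau}}{\sqrt2}\bigl(\theta_3(2\tau)\pm\theta_2(2\tau)\bigr)$ together with $\eta(-1/\tau)=\sqrt{-\i\tau}\,\eta(\tau)$, and invoke the MacWilliams identity \eqref{eq:MacWilliams_id}. The paper simply quotes the theta transformation laws where you derive them by Poisson summation; your lattice-theta cross-check of the prefactor is a nice extra but not part of the paper's argument.
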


\begin{proof}
Using the transformation law of the Jacobi theta functions and MacWilliams identity \eqref{eq:MacWilliams_id}, we get
\begin{align}
\begin{aligned}
    Z(-1/\tau;C) &= \frac{1}{\eta(-1/\tau)^n} W_{C}(\theta_3(2(-1/\tau)),\theta_2(2(-1/\tau))) \\
    &= \frac{1}{\eta(\tau)^n} W_{C}\left(\frac{1}{\sqrt{2}}(\theta_3(2\tau)+\theta_2(2\tau)),\frac{1}{\sqrt{2}}(\theta_3(2\tau)-\theta_2(2\tau))\right) \\
    &= \frac{|C|}{2^{n/2}} Z(\tau;C^\perp) \,.
\end{aligned}
\end{align}
\end{proof}

For a singly-even self-dual code $C\subset\BF_2^n$, we already know $C^\perp=C$ and $C_0^\perp=C_0\cup C_1\cup C_2\cup C_3$. In addition, from \eqref{eq:ts} and \eqref{eq:ss},
\begin{equation}
    (C_0\cup C_1)^\perp = \left\{ \begin{aligned} &C_0\cup C_1 &&(n\in4\BZ) \\ &C_0\cup C_3 &&(n\in4\BZ+2) \end{aligned} \right. .
\end{equation}
From Proposition \ref{prop:modular_S}, the $S$ transformation acts for the partition function as
\begin{align}
\begin{aligned}
    Z_{C_0} &\leftrightarrow \tfrac{1}{2} \left( Z_{C_0} + Z_{C_2} + Z_{C_1} + Z_{C_3} \right) \\
    Z_{C_2} &\leftrightarrow \tfrac{1}{2} \left( Z_{C_0} + Z_{C_2} - Z_{C_1} - Z_{C_3} \right) \\
    Z_{C_1} &\leftrightarrow \tfrac{1}{2} \left( Z_{C_0} - Z_{C_2} \pm Z_{C_1} \mp Z_{C_3} \right) \\
    Z_{C_3} &\leftrightarrow \tfrac{1}{2} \left( Z_{C_0} - Z_{C_2} \mp Z_{C_1} \pm Z_{C_3} \right)
\end{aligned}
\end{align}
where we take the upper sign when $n\in4\BZ$ and the lower sign when $n\in4\BZ+2$. Rearranging these equations, we get
\begin{align}
\begin{aligned} \label{S_transformation_p2}
    Z_{\mathrm{NS}}(-1/\tau) &= (Z_{C_0}+Z_{C_2})(-1/\tau) = (Z_{C_0}+Z_{C_2})(\tau) = Z_{\mathrm{NS}}(\tau)  \\
    Z_{\widetilde{\mathrm{NS}}}(-1/\tau) &= (Z_{C_0}-Z_{C_2})(-1/\tau) = (Z_{C_1}+Z_{C_3})(\tau) = Z_{\mathrm{R}}(\tau) \\
    Z_{\mathrm{R}}(-1/\tau) &= (Z_{C_1}+Z_{C_3})(-1/\tau) = (Z_{C_0}-Z_{C_2})(\tau) = Z_{\widetilde{\mathrm{NS}}}(\tau) \\
    Z_{\widetilde{\mathrm{R}}}(-1/\tau) &= (Z_{C_1}-Z_{C_3})(-1/\tau) = (-1)^{n/2} (Z_{C_1}-Z_{C_3})(\tau) = (-1)^{n/2} Z_{\widetilde{\mathrm{R}}}(\tau) .
\end{aligned}
\end{align}
In similar to the modular $T$ transformation, these obey the same transformation law as the case where $p$ is an odd prime. 

Therefore, the modular $T$ and $S$ transformations of the partition functions for $p=2$ are the same as an odd prime $p\neq2$.
Along the lines of an odd prime $p$, this suggests that our fermionic code CFTs exhibit the expected modular transformation while they have a gravitational anomaly.
A dictionary between codes, lattices, and CFTs is shown in Table~\ref{table:dictionary_2}.
Unlike the case with an odd prime $p\neq2$, the shadow of the Construction A lattice $\Lambda(C)$ is related to the shadow of a linear code $C$, which gives the interpretation of the Ramond sector from a linear code.

\begin{table}[t]
  \caption{A dictionary between codes, lattices, and CFTs for $p=2$}
  \label{table:dictionary_2}
  \centering
  {\small
  \begin{tabular*}{13.5cm}{@{\extracolsep{\fill}}ccc}
    \toprule
    Linear code & Lattice & Fermionic CFT  \\
    \midrule \\[-0.5cm]
    length $n$  & rank  & central charge \\[0.1cm] \\[-0.5cm]
    codeword $c$ & lattice vector $\lambda$ & momentum \\[0.1cm] \\[-0.5cm]
    inner product $s\cdot c$ & inner product $\chi\cdot\lambda$ & fermion parity $(-1)^F$ \\[0.1cm] \\[-0.5cm]
    linear code $C$ & Construction A lattice $\Lambda(C)$ & NS sector\\[0.1cm] \\[-0.5cm]
    shadow of $C$ & shadow of $\Lambda(C)$ & R sector \\[0.1cm] \\[-0.5cm]
    complete weight enumerator & lattice theta function & partition function \\
    \bottomrule
  \end{tabular*}
  }
\end{table}

\subsection{Spectral gap}
\label{ss:spectral}
In the construction of chiral bosonic CFTs from linear codes \cite{Dolan:1994st}, a distance of codewords is related to a conformal weight of the corresponding states.
For CFTs from linear codes with a large minimum distance, a spectral gap, which is the energy gap between the vacuum and the first excited state, tends to be large.
In this section, we explore a similar relation between the spectral gap of fermionic code CFTs and the distance of linear codes.

As we have seen, the operators that appear in our fermionic code CFTs are the vertex operators $:e^{\i k\cdot X(z)}:$ and their descendants, which have the conformal weights $h=\frac{1}{2}k^2+N$ where $N$ is the level for the Virasoro algebra.

Let us consider the minimum weight of the vertex operators in the NS sector, other than the identity operator. The minimum squared norm of the lattice $\Lambda(C)$ is
\begin{align}
\begin{aligned}
    \min_{\lambda\,\in\,\Lambda(C),\,\lambda\,\neq\,0}\lambda^2 &= \min\left\{ \min_{\substack{c\,\in\, C,\, c\,\neq\,0,\\ m\,\in\,\BZ^n}} \left( \frac{c+p\,m}{\sqrt{p}} \right)^2,\, \min_{m\,\in\,\BZ^n,\,m\,\neq\,0} (\sqrt{p}\,m)^2 \right\}\,, \\
    &= \min\left\{ \min_{\substack{c\,\in\, C,\, c\,\neq\,0,\\ m\,\in\,\{0,-1\}^n}} \frac{1}{p}\, (c+p\,m)^2,\; p \right\}\,, \\
    &= \min\left\{ \min_{c\,\in\, C,\, c\,\neq\,0} \frac{1}{p}\, \mathrm{Norm}(c), \;p \right\}\,,
\end{aligned}
\end{align}
where we have used the definition of $\mathrm{Norm}(c)$ for a codeword $c\in C$ in \eqref{eq:def_norm_c}.
Therefore, the minimum conformal weight of the vertex operators is
\begin{equation}
    \min_{\lambda\,\in\,\Lambda(C),\,\lambda\,\neq\,0} \frac{\lambda^2}{2} = \frac{1}{2} \min\left\{ \min_{c\,\in\, C,\, c\,\neq\,0} \frac{1}{p}\, \mathrm{Norm}(c), \;p \right\}\,.
\end{equation}
If we include the descendants, there are always states $\partial X\cong\alpha_{-1}\ket{0}$ with the conformal weight $h=1$ and thus the energy gap between the vacuum and the first excited state becomes
\begin{equation}
\label{eq:spectral_NS}
    \Delta_{\mathrm{NS}} = \min\left\{ \min_{c\,\in\, C,\, c\,\neq\,0} \frac{1}{2p}\, \mathrm{Norm}(c),\; 1 \right\}\,.
\end{equation}
Since $\frac{1}{p}\,\mathrm{Norm}(c)\in\BZ$ for a self-dual code, $\Delta_\mathrm{NS}$ only takes $\frac{1}{2}$ or $1$.
Note that, for $p=2$, the Euclidean norm reduces to the Hamming weight: $\mathrm{Norm}(c) = \mathrm{wt}(c)$.
Using the minimum distance $d(C)$, we can rewrite the spectral gap for $p=2$ as
\begin{align}
\label{eq:spectral_p=2}
    \Delta_{\mathrm{NS}} = \min\left\{\frac{d(C)}{4},\; 1 \right\}\,.
\end{align}

The R sector has no identity operator, thus the minimum weight in the R sector does not depend on whether we include descendants or not. For an odd prime $p$, a characteristic vector is chosen to be $\chi =\sqrt{p}\,(1,1,\cdots,1)$ and the minimum weight $h_{\min}$ of the R sector can be calculated by
\begin{equation}
\label{eq:min_weight_in_R}
\begin{aligned}
    h_{\min} = \min_{k\,\in\, S(\Lambda(C))} \frac{1}{2}\,k^2
    &= \min_{c\,\in \,C,\, m\,\in\,\BZ^n} \frac{1}{2}\, \left( \frac{c+p\,m}{\sqrt{p}}+\frac{\chi}{2} \right)^2 \\
    &= \min_{c\,\in\, C, \,m\,\in\,\BZ^n} \frac{1}{2p}\, \sum_{i=1}^n \,\left( c_i+p\,m_i+\frac{p}{2} \right)^2 \\
    &= \min_{c\,\in\, C} \frac{1}{2p}\, \sum_{i=1}^n\, \left( c_i-\frac{p}{2} \right)^2.
\end{aligned}
\end{equation}
In the last line, we used that $m_i=-1$ always gives the minimum value for $0\leq c_i<p$.

For the binary case ($p=2$),
\begin{equation}
\label{eq:min_weight_in_R_p2}
\begin{aligned}
    h_{\min} = \min_{k\,\in\, S(\Lambda(C))} \frac{1}{2}\,k^2
    &= \min_{s\,\in \,S(C),\, m\,\in\,\BZ^n} \frac{1}{2}\, \left( \frac{s+2m}{\sqrt{2}} \right)^2 \\
    &= \min_{s\,\in\, S(C)} \frac{1}{4}\,\mathrm{wt}_1(s)\,.
\end{aligned}
\end{equation}
Thus, the minimum conformal weight $h_{\min}$ in the R sector is proportional to the minimum Hamming weight in the shadow of the underlying code.

\subsection{Examples}
\label{ss:sec3example}

We have shown the construction of fermionic CFTs from self-dual codes over $\BF_p$ in general. In this section, we take some examples of self-dual codes and demonstrate the construction of fermionic code CFTs in detail. A large list of self-dual codes can be found in \cite{database}.

\subsubsection{$n=2\,,\,$ $p=5$} \label{sec:code_n2p5}

Let us consider a simple self-dual code $C\subset\BF_5^2$ generated by
\begin{align}
    G = \left[
    \begin{array}{cc}
        1 & 2
    \end{array}
    \right].
\end{align}
The linear code consists of only $5$ codewords: $C = \{(0,0),(1,2),(2,4),(3,1),(4,3)\}$.
The complete weight enumerator of $C$ is
\begin{align}
    W_C(\{x_a\}) = x_0^2 + x_1 x_2 + x_2 x_4 + x_3 x_1 + x_4 x_3\,.
\end{align}

The Construction A lattice of the linear code $C\subset\BF_5^2$ is given by
\begin{align}
    \Lambda(C) = \left\{\frac{c+5m}{\sqrt{5}}\in\BR^2 \;\middle|\;c\in C\,,\,m\in\BZ^2\right\}\,.
\end{align}
We show the Construction A lattice in Fig.\ref{fig:n=2p=5} as the black dots, which is just the $2$-dimensional square lattice~$\Lambda_{\mathrm{square}}=\BZ^2$ rotated by the angle $-\arctan(1/2)$.
Choosing a characteristic vector by $\chi = \sqrt{5}\,(1,1)$, the shadow of $\Lambda(C)$ is
\begin{align}
    S(\Lambda(C)) = \Lambda(C) +\frac{\chi}{2} = \left\{\frac{c+5m}{\sqrt{5}}\in\BR^2 \;\middle|\;c\in C\,,\,m\in\left(\BZ+\frac{1}{2}\right)^2\right\}\,.
\end{align}
In Fig.\ref{fig:n=2p=5}, the shadow of $\Lambda(C)$ is depicted with the red dots, which can be obtained by a half shift of the black dots.
It is easy to see the addition of shadow elements (red dots) returns to a black dot representing an element of the original Construction A lattice.

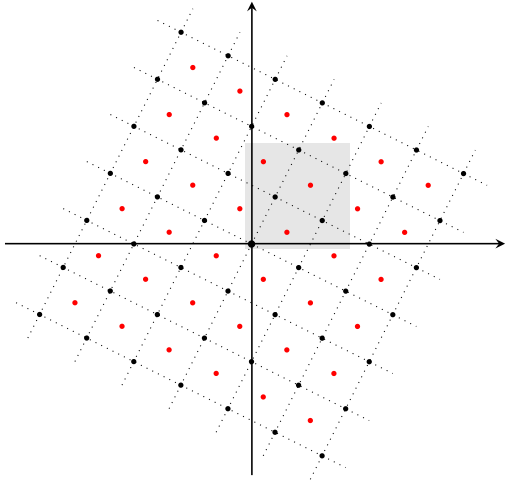
\begin{figure}[t]
    \centering
    \begin{tikzpicture}
    \begin{scope}[rotate around={-26.6:(2,2)}, scale=0.7, very thick]
    
    \begin{scope}[rotate=26.6]
    \fill[gray!20] (2.57,0.8) -- (4.55,0.8) -- (4.55,2.8) -- (2.57,2.8) -- cycle;
    \draw[semithick,->,>=stealth](-2,0.9)--(7.5,0.9);
    \draw[semithick,->,>=stealth](2.69,-3.5)--(2.69,5.5);
    \end{scope}
    
    \fill(0,0) circle[radius=0.05cm];
    \fill(1,0) circle[radius=0.05cm];
    \fill(0,1) circle[radius=0.05cm];
    \fill(-1,0) circle[radius=0.05cm];
    \fill(0,-1) circle[radius=0.05cm];
    \fill(1,1) circle[radius=0.05cm];
    \fill(-1,1) circle[radius=0.05cm];
    \fill(1,-1) circle[radius=0.05cm];
    \fill(-1,-1) circle[radius=0.05cm];
    
    \begin{scope}[xshift=3cm]
    \fill(0,0) circle[radius=0.05cm];
    \fill(1,0) circle[radius=0.05cm];
    \fill(0,1) circle[radius=0.05cm];
    \fill(-1,0) circle[radius=0.05cm];
    \fill(0,-1) circle[radius=0.05cm];
    \fill(1,1) circle[radius=0.05cm];
    \fill(-1,1) circle[radius=0.05cm];
    \fill(1,-1) circle[radius=0.05cm];
    \fill(-1,-1) circle[radius=0.05cm];
    \end{scope}
    
    \begin{scope}[yshift=3cm]
    \fill(0,0) circle[radius=0.05cm];
    \fill(1,0) circle[radius=0.05cm];
    \fill(0,1) circle[radius=0.05cm];
    \fill(-1,0) circle[radius=0.05cm];
    \fill(0,-1) circle[radius=0.05cm];
    \fill(1,1) circle[radius=0.05cm];
    \fill(-1,1) circle[radius=0.05cm];
    \fill(1,-1) circle[radius=0.05cm];
    \fill(-1,-1) circle[radius=0.05cm];
    \end{scope}
    
    \begin{scope}[xshift=3cm, yshift=3cm]
    \fill(0,0) circle[radius=0.05cm];
    \fill(1,0) circle[radius=0.05cm];
    \fill(0,1) circle[radius=0.05cm];
    \fill(-1,0) circle[radius=0.05cm];
    \fill(0,-1) circle[radius=0.05cm];
    \fill(1,1) circle[radius=0.05cm];
    \fill(-1,1) circle[radius=0.05cm];
    \fill(1,-1) circle[radius=0.05cm];
    \fill(-1,-1) node (a) {} circle[radius=0.07cm];
    \end{scope}
    
    \begin{scope}[thin]
    \draw[dotted] (-1.5,-1)--(5.5,-1);
    \draw[dotted] (-1.5,0)--(5.5,0);
    \draw[dotted] (-1.5,1)--(5.5,1);
    \draw[dotted] (-1.5,2)--(5.5,2);
    \draw[dotted] (-1.5,3)--(5.5,3);
    \draw[dotted] (-1.5,4)--(5.5,4);
    \draw[dotted] (-1.5,5)--(5.5,5);
    
    \draw[dotted] (-1,-1.5)--(-1,5.5);
    \draw[dotted] (0,-1.5)--(0,5.5);
    \draw[dotted] (1,-1.5)--(1,5.5);
    \draw[dotted] (2,-1.5)--(2,5.5);
    \draw[dotted] (3,-1.5)--(3,5.5);
    \draw[dotted] (4,-1.5)--(4,5.5);
    \draw[dotted] (5,-1.5)--(5,5.5);
    \end{scope}
    
    \fill(5,-1) circle[radius=0.05cm];
    \fill(5,0) circle[radius=0.05cm];
    \fill(5,1) circle[radius=0.05cm];
    \fill(5,2) circle[radius=0.05cm];
    \fill(5,3) circle[radius=0.05cm];
    \fill(5,4) circle[radius=0.05cm];
    \fill(5,5) circle[radius=0.05cm];
    \fill(4,5) circle[radius=0.05cm];
    \fill(3,5) circle[radius=0.05cm];
    \fill(2,5) circle[radius=0.05cm];
    \fill(1,5) circle[radius=0.05cm];
    \fill(0,5) circle[radius=0.05cm];
    \fill(-1,5) circle[radius=0.05cm];
    
    \begin{scope}[xshift=0.5cm, yshift=0.5cm, fill=red]
    \fill(0,0) circle[radius=0.05cm];
    \fill(1,0) circle[radius=0.05cm];
    \fill(0,1) circle[radius=0.05cm];
    \fill(-1,0) circle[radius=0.05cm];
    \fill(0,-1) circle[radius=0.05cm];
    \fill(1,1) circle[radius=0.05cm];
    \fill(-1,1) circle[radius=0.05cm];
    \fill(1,-1) circle[radius=0.05cm];
    \fill(-1,-1) circle[radius=0.05cm];
    
    \begin{scope}[xshift=3cm]
    \fill(0,0) circle[radius=0.05cm];
    \fill(1,0) circle[radius=0.05cm];
    \fill(0,1) circle[radius=0.05cm];
    \fill(-1,0) circle[radius=0.05cm];
    \fill(0,-1) circle[radius=0.05cm];
    \fill(1,1) circle[radius=0.05cm];
    \fill(-1,1) circle[radius=0.05cm];
    \fill(1,-1) circle[radius=0.05cm];
    \fill(-1,-1) circle[radius=0.05cm];
    \end{scope}
    
    \begin{scope}[yshift=3cm]
    \fill(0,0) circle[radius=0.05cm];
    \fill(1,0) circle[radius=0.05cm];
    \fill(0,1) circle[radius=0.05cm];
    \fill(-1,0) circle[radius=0.05cm];
    \fill(0,-1) circle[radius=0.05cm];
    \fill(1,1) circle[radius=0.05cm];
    \fill(-1,1) circle[radius=0.05cm];
    \fill(1,-1) circle[radius=0.05cm];
    \fill(-1,-1) circle[radius=0.05cm];
    \end{scope}
    
    \begin{scope}[xshift=3cm, yshift=3cm]
    \fill(0,0) circle[radius=0.05cm];
    \fill(1,0) circle[radius=0.05cm];
    \fill(0,1) circle[radius=0.05cm];
    \fill(-1,0) circle[radius=0.05cm];
    \fill(0,-1) circle[radius=0.05cm];
    \fill(1,1) circle[radius=0.05cm];
    \fill(-1,1) circle[radius=0.05cm];
    \fill(1,-1) circle[radius=0.05cm];
    \fill(-1,-1) circle[radius=0.05cm];
    \end{scope}
    
    \end{scope}
    \end{scope}
    \end{tikzpicture}
    \caption{The Construction A lattice (the black dots) and its shadow (the red dots), where we depict the origin with the thick black dot. The Construction A lattice turns out to be the $2$-dimensional square lattice rotated by $-\arctan(1/2)$. On the other hand, the shadow elements lie at the center of the squares. The shaded region gives the subspace $\BF_5^2$.}
    \label{fig:n=2p=5}
\end{figure}

Following Section \ref{ss:fermionCFT}, we construct a fermionic code CFT from $C$. Although the resulting partition functions are determined by the complete weight enumerator of $C$, it is useful to choose an appropriate basis of $\Lambda(C)$. In this case, we take $v_1 = \frac{1}{\sqrt{5}}(1,2)$ and $v_2 = \frac{1}{\sqrt{5}}(2,-1)$, which provide an orthonormal basis of $\Lambda(C)$. As suggested in Fig.\ref{fig:n=2p=5}, the Construction A lattice can be decomposed into $\BZ \oplus \BZ$. Then a characteristic vector is $\chi = \sqrt{5}\,(1,1) = 3v_1 + v_2$, so the shadow of $\Lambda(C)$ consists of $\widetilde{\lambda} = m_1'v_1 +m_2' v_2$ where $m_1'$, $m_2'\in \BZ+\frac{1}{2}$. Finally, we obtain the partition functions of each sector
\begin{align}
\begin{aligned}
\label{eq:p=5n=2_partition}
    Z_\mathrm{NS}(\tau;\Lambda(C)) &= \frac{\theta_3(\tau)^2}{\eta(\tau)^2}\,,\quad
    Z_{\widetilde{\mathrm{NS}}}(\tau;\Lambda(C)) = \frac{\theta_4(\tau)^2}{\eta(\tau)^2}\,,\\
    Z_\mathrm{R}(\tau;\Lambda(C)) &= \frac{\theta_2(\tau)^2}{\eta(\tau)^2}\,,\quad
    Z_{\widetilde{\mathrm{R}}}(\tau;\Lambda(C)) = 0\,.
\end{aligned}
\end{align}
The above partition functions agree with $\left(Z^{\alpha,\beta}(\tau;\psi)\right)^4$ where $Z^{\alpha,\beta}(\tau;\psi)$ is the torus partition functions of a single Majorana-Weyl fermion in \eqref{eq:part_weyl}. This implies that the fermionic CFT consists of four chiral fermions.
Of course, the spectral gap of the NS sector is $\Delta_\mathrm{NS} = \frac{1}{2}$ by \eqref{eq:spectral_NS}. On the other hand, the conformal weight of the R sector ground states is $h_{\min}= \frac{1}{4}$.

\subsubsection{$n=2\,,\,$ $p=2$}
Let us take the simplest self-dual code $C\subset\BF_2^2$, which is generated by
\begin{align}
    G = \left[
    \begin{array}{cc}
        1 & 1
    \end{array}
    \right].
\end{align}
The codewords are $C=\{(0,0),(1,1)\}$, then the complete weight enumerator of $C$ is
\begin{align}
    W_C(\{x_a\}) = x_0^2 + x_1^2\,.
\end{align}

To construct the shadow of $C$, we divide $C$ into two subsets with respect to the weight.
Since codewords $(0,0)$ and $(1,1)$ are doubly-even and singly-even, respectively, we have $C_0=\{(0,0)\}$ and $C_2 = \{(1,1)\}$. Then $C_0^\perp = \BF_2^2$. The shadow of $C$ is given by
\begin{align}
    S(C) = C_0^\perp\backslash C = \{(0,1),(1,0)\}\,.
\end{align}
We choose an element of the shadow $S(C)$ by $s=(0,1)\in S(C)$. Following \eqref{eq:shadowcode_part}, for $n=2$, we set
\begin{align}
    \begin{aligned}
        C_1 = C_2 + s = \{(1,0)\}\,,\quad
        C_3 = C_0 + s = \{(0,1)\}\,.
    \end{aligned}
\end{align}

The Construction A lattice of $C\subset\BF_2^2$ is 
\begin{align}
    \Lambda(C) = \left\{\frac{c+2m}{\sqrt{2}}\in\BR^2 \;\middle|\;c\in C\,,\,m\in\BZ^2\right\}\,.
\end{align}
On the other hand, the shadow of $\Lambda(C)$ is
\begin{align}
    S(\Lambda(C)) = \Lambda(S(C)) = \left\{\frac{s+2m}{\sqrt{2}}\in\BR^2 \;\middle|\;s\in S(C)\,,\,m\in\BZ^2\right\}\,.
\end{align}
As shown in Fig.\ref{fig:n=2p=2}, the Construction A lattice forms the square lattice $\BZ\oplus\BZ$ rotated by the angle $\pi/4$, while the elements of the shadow $S(\Lambda(C))$ take place in each center of the squares. 

Let us compute the partition functions depending on the choice of spin structures. Each sector is given in terms of the subsets $C_i\subset\BF_2^n$ $(i=0,1,2,3)$ in \eqref{eq:sector_partition_functions_p2}.
In this example, we have $C_0=\{(0,0)\}$, $C_2=\{(1,1)\}$, $C_1=\{(1,0)\}$, and $C_3=\{(0,1)\}$. Then the corresponding complete weight enumerators \eqref{eq:subset_weight_enum} are
\begin{align}
    W_{C_0}(\{x_a\}) = x_0^2\,,\quad W_{C_2}(\{x_a\}) = x_1^2\,,\quad
    W_{C_1}(\{x_a\}) = W_{C_3}(\{x_a\}) =x_0 x_1\,.
\end{align}
Replacing $x_0$ and $x_1$ with $\theta_3(2\tau)$ and $\theta_2(2\tau)$, respectively, we obtain the partition functions for each subset $C_i$ $(i=0,1,2,3)$
\begin{align}
\begin{aligned}
    Z(\tau;C_0) = \frac{\theta_3(2\tau)^2}{\eta(\tau)^2}\,,\quad Z(\tau;C_2) = \frac{\theta_2(2\tau)^2}{\eta(\tau)^2}\,,\quad
    Z(\tau;C_1) = Z(\tau;C_3) = \frac{\theta_2(2\tau)\theta_3(2\tau)}{\eta(\tau)^2}\,.
\end{aligned}
\end{align}
From \eqref{eq:sector_partition_functions_p2}, the torus partition functions of the fermionic code CFT are
\begin{align}
    \begin{aligned}
        Z_{\mathrm{NS}}(\tau;\Lambda(C)) &= \frac{\theta_3(2\tau)^2 + \theta_2(2\tau)^2}{\eta(\tau)^2} = \frac{\theta_3(\tau)^2}{\eta(\tau)^2}\,,\\
        Z_{\widetilde{\mathrm{NS}}}(\tau;\Lambda(C)) &= \frac{\theta_3(2\tau)^2 - \theta_2(2\tau)^2}{\eta(\tau)^2} = \frac{\theta_4(\tau)^2}{\eta(\tau)^2}\,,\\
        Z_{\mathrm{R}}(\tau;\Lambda(C)) &= \frac{2\theta_3(2\tau)\theta_2(2\tau)}{\eta(\tau)^2} = \frac{\theta_2(\tau)^2}{\eta(\tau)^2}\,,
    \end{aligned}
\end{align}
and $Z_{\widetilde{\mathrm{R}}}(\tau;\Lambda(C))=0$.
Here we used the identities $\theta_2(2\tau) = \sqrt{\frac{1}{2}(\theta_3(\tau)^2-\theta_4(\tau)^2)}$ and $\theta_3(2\tau) = \sqrt{\frac{1}{2}(\theta_3(\tau)^2+\theta_4(\tau)^2)}$.
In the third equality, we also applied the Jacobi identity $\theta_3(\tau)^4-\theta_4(\tau)^4 = \theta_2(\tau)^4$.
These partition functions are the same ones as the previous example \eqref{eq:p=5n=2_partition}, which is expected because lattices in Fig.\ref{fig:n=2p=5} and Fig.\ref{fig:n=2p=2} are related by the rotation around the origin. Additionally, the spectral gap $\Delta_\mathrm{NS}=\frac{1}{2}$ and the minimum weight $h=\frac{1}{4}$ of the R sector can be obtained from \eqref{eq:spectral_p=2} and \eqref{eq:min_weight_in_R_p2}, respectively, and agree with the previous example.

\begin{figure}[t]
    \centering
    \begin{tikzpicture}
    \begin{scope}[rotate around={-45:(2,2)}, scale=0.7, very thick]
    
    \begin{scope}[rotate=45]
    \fill[gray!20] (2.7,-0.2)--(3.8,-0.2)--(3.8,0.9)--(2.7,0.9)--cycle;
    \draw[semithick,->,>=stealth](-2,0)--(8,0);
    \draw[semithick,->,>=stealth](2.83,-4.8)--(2.83,5);
    \end{scope}
    
    \fill(0,0) circle[radius=0.05cm];
    \fill(1,0) circle[radius=0.05cm];
    \fill(0,1) circle[radius=0.05cm];
    \fill(-1,0) circle[radius=0.05cm];
    \fill(0,-1) circle[radius=0.05cm];
    \fill(1,1) circle[radius=0.05cm];
    \fill(-1,1) circle[radius=0.05cm];
    \fill(1,-1) circle[radius=0.05cm];
    \fill(-1,-1) circle[radius=0.05cm];
    
    \begin{scope}[xshift=3cm]
    \fill(0,0) circle[radius=0.05cm];
    \fill(1,0) circle[radius=0.05cm];
    \fill(0,1) circle[radius=0.05cm];
    \fill(-1,0) circle[radius=0.05cm];
    \fill(0,-1) circle[radius=0.05cm];
    \fill(1,1) circle[radius=0.05cm];
    \fill(-1,1) circle[radius=0.05cm];
    \fill(1,-1) circle[radius=0.05cm];
    \fill(-1,-1) circle[radius=0.05cm];
    \end{scope}
    
    \begin{scope}[yshift=3cm]
    \fill(0,0) circle[radius=0.05cm];
    \fill(1,0) circle[radius=0.05cm];
    \fill(0,1) circle[radius=0.05cm];
    \fill(-1,0) circle[radius=0.05cm];
    \fill(0,-1) circle[radius=0.05cm];
    \fill(1,1) circle[radius=0.05cm];
    \fill(-1,1) circle[radius=0.05cm];
    \fill(1,-1) circle[radius=0.05cm];
    \fill(-1,-1) circle[radius=0.05cm];
    \end{scope}
    
    \begin{scope}[xshift=3cm, yshift=3cm]
    \fill(0,0) circle[radius=0.05cm];
    \fill(1,0) circle[radius=0.05cm];
    \fill(0,1) circle[radius=0.05cm];
    \fill(-1,0) circle[radius=0.05cm];
    \fill(0,-1) circle[radius=0.05cm];
    \fill(1,1) circle[radius=0.05cm];
    \fill(-1,1) circle[radius=0.05cm];
    \fill(1,-1) circle[radius=0.05cm];
    \fill(-1,-1) node (a) {} circle[radius=0.07cm];
    \end{scope}
    
    \begin{scope}[thin]
    \draw[dotted] (-1.5,-1)--(5.5,-1);
    \draw[dotted] (-1.5,0)--(5.5,0);
    \draw[dotted] (-1.5,1)--(5.5,1);
    \draw[dotted] (-1.5,2)--(5.5,2);
    \draw[dotted] (-1.5,3)--(5.5,3);
    \draw[dotted] (-1.5,4)--(5.5,4);
    \draw[dotted] (-1.5,5)--(5.5,5);
    
    \draw[dotted] (-1,-1.5)--(-1,5.5);
    \draw[dotted] (0,-1.5)--(0,5.5);
    \draw[dotted] (1,-1.5)--(1,5.5);
    \draw[dotted] (2,-1.5)--(2,5.5);
    \draw[dotted] (3,-1.5)--(3,5.5);
    \draw[dotted] (4,-1.5)--(4,5.5);
    \draw[dotted] (5,-1.5)--(5,5.5);
    \end{scope}
    
    \fill(5,-1) circle[radius=0.05cm];
    \fill(5,0) circle[radius=0.05cm];
    \fill(5,1) circle[radius=0.05cm];
    \fill(5,2) circle[radius=0.05cm];
    \fill(5,3) circle[radius=0.05cm];
    \fill(5,4) circle[radius=0.05cm];
    \fill(5,5) circle[radius=0.05cm];
    \fill(4,5) circle[radius=0.05cm];
    \fill(3,5) circle[radius=0.05cm];
    \fill(2,5) circle[radius=0.05cm];
    \fill(1,5) circle[radius=0.05cm];
    \fill(0,5) circle[radius=0.05cm];
    \fill(-1,5) circle[radius=0.05cm];
    
    \begin{scope}[xshift=0.5cm, yshift=0.5cm, fill=red]
    \fill(0,0) circle[radius=0.05cm];
    \fill(1,0) circle[radius=0.05cm];
    \fill(0,1) circle[radius=0.05cm];
    \fill(-1,0) circle[radius=0.05cm];
    \fill(0,-1) circle[radius=0.05cm];
    \fill(1,1) circle[radius=0.05cm];
    \fill(-1,1) circle[radius=0.05cm];
    \fill(1,-1) circle[radius=0.05cm];
    \fill(-1,-1) circle[radius=0.05cm];
    
    \begin{scope}[xshift=3cm]
    \fill(0,0) circle[radius=0.05cm];
    \fill(1,0) circle[radius=0.05cm];
    \fill(0,1) circle[radius=0.05cm];
    \fill(-1,0) circle[radius=0.05cm];
    \fill(0,-1) circle[radius=0.05cm];
    \fill(1,1) circle[radius=0.05cm];
    \fill(-1,1) circle[radius=0.05cm];
    \fill(1,-1) circle[radius=0.05cm];
    \fill(-1,-1) circle[radius=0.05cm];
    \end{scope}
    
    \begin{scope}[yshift=3cm]
    \fill(0,0) circle[radius=0.05cm];
    \fill(1,0) circle[radius=0.05cm];
    \fill(0,1) circle[radius=0.05cm];
    \fill(-1,0) circle[radius=0.05cm];
    \fill(0,-1) circle[radius=0.05cm];
    \fill(1,1) circle[radius=0.05cm];
    \fill(-1,1) circle[radius=0.05cm];
    \fill(1,-1) circle[radius=0.05cm];
    \fill(-1,-1) circle[radius=0.05cm];
    \end{scope}
    
    \begin{scope}[xshift=3cm, yshift=3cm]
    \fill(0,0) circle[radius=0.05cm];
    \fill(1,0) circle[radius=0.05cm];
    \fill(0,1) circle[radius=0.05cm];
    \fill(-1,0) circle[radius=0.05cm];
    \fill(0,-1) circle[radius=0.05cm];
    \fill(1,1) circle[radius=0.05cm];
    \fill(-1,1) circle[radius=0.05cm];
    \fill(1,-1) circle[radius=0.05cm];
    \fill(-1,-1) circle[radius=0.05cm];
    \end{scope}
    
    \end{scope}
    \end{scope}
    \end{tikzpicture}
    \caption{The Construction A lattice (the black dots) and its shadow (the red dots). The Construction A lattice is the square lattice rotated by $\pi/4$, and the shadow elements lie at the center of the squares. The shaded region gives the subspace $\BF_2^2$.}
    \label{fig:n=2p=2}
\end{figure}
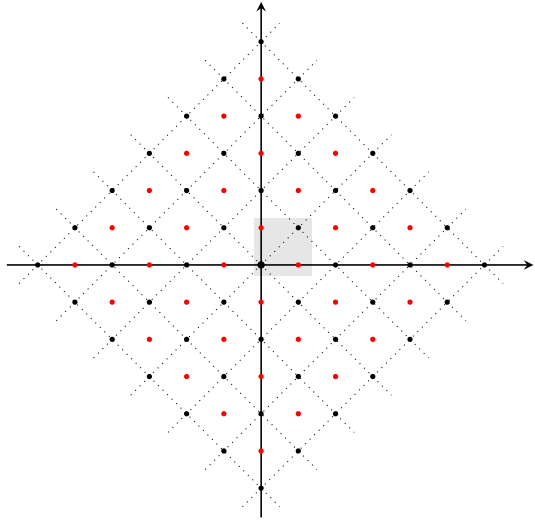

\subsubsection{$n=4\,,\,$ $p=7$}
For an odd prime $p\in4\BZ+3$, a self-dual code $C\subset\BF_p^n$ exists only when $n\in4\BZ$.
For $p=7$, 
the unique self-dual code $C$ of length $4$ is generated by
\begin{align}
    G = \left[
    \begin{array}{cccc}
        1 & 0 & 2 & 3 \\
        0 & 1 & 4 & 2
    \end{array}
    \right].
\end{align}
Then the Construction A lattice $\Lambda(C)$ is isomorphic to $\BZ^4$, which is consistent with the fact that an $n$-dimensional odd self-dual lattice is isomorphic to $\BZ^n$ for $n\leq8$ \cite{conway2013sphere}.
The partition functions of each sector are
\begin{align}
    \begin{aligned}
        Z_{\mathrm{NS}}(\tau;\Lambda(C)) &= \frac{\theta_3(\tau)^4}{\eta(\tau)^4}\,,\qquad
        Z_{\widetilde{\mathrm{NS}}}(\tau;\Lambda(C)) = \frac{\theta_4(\tau)^4}{\eta(\tau)^4}\,,\\
        Z_{\mathrm{R}}(\tau;\Lambda(C)) &= \frac{\theta_2(\tau)^4}{\eta(\tau)^4}\,,\qquad
        Z_{\widetilde{\mathrm{R}}}(\tau;\Lambda(C)) = 0\,,
    \end{aligned}
\end{align}
which implies that the fermionic code CFT consists of 8 Majorana-Weyl fermions.
The above examples with a relatively small length demonstrate our construction for $p=2$ and an odd prime $p\neq2$. In the next examples, our construction provides non-trivial fermionic CFTs.

\subsubsection{$n=12\,,\,$ $p=5$} \label{sec:code_n12p5}
Let us consider a linear code $C\subset (\BF_5)^{12}$ generated by (\cite{database})
\begin{equation}
    G =
    \begin{bmatrix}
    1&0&0&2&0&3&0&2&0&3&3&3 \\
    0&1&0&4&0&1&0&4&0&1&4&2 \\
    0&0&1&2&0&0&0&0&0&0&4&3 \\
    0&0&0&0&1&2&0&0&0&0&1&2 \\
    0&0&0&0&0&0&1&2&0&0&4&3 \\
    0&0&0&0&0&0&0&0&1&2&1&2
    \end{bmatrix}.
\end{equation}
Using the complete enumerator polynomial $W_C(\{x_a\})$, the partition functions of the corresponding CFT are
\begin{align}
\begin{aligned}
\label{eq:n=12p=5_partition}
    Z_{\mathrm{NS}}(\tau) &= q^{-\frac{1}{2}} + 276q^{\frac{1}{2}} + 2048q + 11202q^{\frac{3}{2}} + 49152q^2 + 184024q^{\frac{5}{2}} + 614400q^3 + \cdots\,, \\
    Z_{\widetilde{\mathrm{NS}}}(\tau) &= q^{-\frac{1}{2}} + 276q^{\frac{1}{2}} - 2048q + 11202q^{\frac{3}{2}} - 49152q^2 + 184024q^{\frac{5}{2}} - 614400q^3 + \cdots\,, \\
    Z_{\mathrm{R}}(\tau) &= 24 + 4096q + 98304q^2 + 1228800q^3 + 10747904q^4 + 74244096q^5 + \cdots \,,\\
    Z_{\widetilde{\mathrm{R}}}(\tau) &= -24\,.
\end{aligned}
\end{align}
Note that $Z_\mathrm{NS}(\tau)$ does not contain the $q^0$ term, which implies the absence of NS primary fields of the conformal weight $1/2$.
This can be thought of as a fermionic analog of the Monster CFT~\cite{frenkel1984natural,frenkel1989vertex}, which does not have any bosonic excitation of weight $1$.

There is a notion generalizing the Monster CFT called the extremal CFT.
For a bosonic CFT with the central charge $c=24k$ ($k\in\BZ_{>0}$),
it is called extremal if it does not include primary fields other than the identity of weight less than $k$ \cite{hohn2008conformal,Jankiewicz:2006mv}.
The Monster CFT is a prominent example of bosonic extremal CFT with $k=1$.

For a fermionic CFT with the central charge $c=12k^*$ ($k^*\in\BZ_{>0}$), an extremal CFT is defined to be one with no primary fields other than the identity of weight less than $k^*/2$~\cite{Witten:2007kt}.
An extremal fermionic CFT of $k^*=1$ is known to admit an action of a large discrete group related to the Conway group and have an $\CN=1$ supersymmetry. Furthermore, there are results about uniqueness \cite{hohn2007selbstduale,duncan2007super}.

From \eqref{eq:n=12p=5_partition}, the fermionic code CFT with $n=12$ has no primary fields other than the identity of weight less than $1/2$.
In particular, $Z_{\mathrm{NS}}(\tau)$ and $Z_{\mathrm{R}}(\tau)$ coincide with (3.35) and (3.48) in \cite{Witten:2007kt} respectively, which means the extremal CFT with $k^*=1$.

\subsubsection{$n=24\,,\,$ $p=5$}
For a self-dual code $C\subset (\BF_5)^{24}$ ($Q_{24}$ in \cite{LEON1982178}) generated by 
\begin{equation}
    G = 
    \begin{bmatrix}
    2&0&0&0&0&0&0&0&0&0&0&0&0&1&1&1&1&1&1&1&1&1&1&1\\
    0&2&0&0&0&0&0&0&0&0&0&0&4&0&4&1&4&4&4&1&1&1&4&1\\
    0&0&2&0&0&0&0&0&0&0&0&0&4&1&0&4&1&4&4&4&1&1&1&4\\
    0&0&0&2&0&0&0&0&0&0&0&0&4&4&1&0&4&1&4&4&4&1&1&1\\
    0&0&0&0&2&0&0&0&0&0&0&0&4&1&4&1&0&4&1&4&4&4&1&1\\
    0&0&0&0&0&2&0&0&0&0&0&0&4&1&1&4&1&0&4&1&4&4&4&1\\
    0&0&0&0&0&0&2&0&0&0&0&0&4&1&1&1&4&1&0&4&1&4&4&4\\
    0&0&0&0&0&0&0&2&0&0&0&0&4&4&1&1&1&4&1&0&4&1&4&4\\
    0&0&0&0&0&0&0&0&2&0&0&0&4&4&4&1&1&1&4&1&0&4&1&4\\
    0&0&0&0&0&0&0&0&0&2&0&0&4&4&4&4&1&1&1&4&1&0&4&1\\
    0&0&0&0&0&0&0&0&0&0&2&0&4&1&4&4&4&1&1&1&4&1&0&4\\
    0&0&0&0&0&0&0&0&0&0&0&2&4&4&1&4&4&4&1&1&1&4&1&0
    \end{bmatrix},
\end{equation}
the partition functions of the corresponding CFT are
\begin{align}
\begin{aligned}
    Z_{\mathrm{NS}}(\tau) &= q^{-1} + 24 + 4096q^{\frac{1}{2}} + 98580q + 1228800q^{\frac{3}{2}} + 10745856q^2 + \cdots\,, \\
    Z_{\widetilde{\mathrm{NS}}}(\tau) &= q^{-1} + 24 - 4096q^{\frac{1}{2}} + 98580q - 1228800q^{\frac{3}{2}} + 10745856q^2 + \cdots\,, \\
    Z_{\mathrm{R}}(\tau) &= 48 + 196608q + 21495808q^2 + 864288768q^3 + \cdots\,, \\
    Z_{\widetilde{\mathrm{R}}}(\tau) &= 48\,.
\end{aligned}
\end{align}
Note that the descendants $\alpha_{-1}^\mu\ket{0}$ contribute as $24 q^0$ to $Z_{\mathrm{NS}}(\tau)$, although there are no $\lambda\in\Lambda(C)$ such that $\lambda^2=1,2$ (corresponding to $q^{-\frac{1}{2}},q^0$),
Therefore, the minimum norm of the Construction A lattice $\Lambda(C)$ is $3$.

In $24$ dimensions, an odd self-dual lattice of minimum norm $3$ is known to be unique and called the odd Leech lattice \cite{o1944construction,conway2013sphere} with the lattice theta function
\begin{align}
    \Theta_{\Lambda_{\rm odd\, Leech}}(\tau) = 1 + 4096 q^{3/2} + 98256 q^2 + 1130496 q^{5/2} + 8384512 q^3 + \cdots\,.
\end{align}
Thus, we can conclude that $\Lambda(C)$ is the odd Leech lattice. This is consistent with the fact that the odd Leech lattice can be constructed using some self-dual codes over $\BZ_k$ by Construction A for any $k\geq3$ \cite{miezaki2012frames}.
For a binary self-dual code $C\subset \BF_2^{24}$, the Construction A lattice always contains an element $\lambda=(2,0,\cdots,0)/\sqrt{2}$ with $\lambda^2=2$. Then, any binary code cannot generate the odd Leech lattice since it does not have any element with $\lambda^2=1,2$.

\section{Finding supersymmetric CFTs}
\label{sec:SUSY}

We have constructed chiral fermionic CFTs from self-dual codes over $\BF_p$ and illustrated the construction using various examples.
For the ternary case $(p=3)$, all fermionic code CFTs have been known to possess supersymmetry \cite{Gaiotto:2018ypj}. On the other hand, supersymmetry does not necessarily emerge for the other cases $(p\neq3)$.
To figure out profiles of our class of fermionic CFTs, we search for supersymmetric  CFTs for a prime number $p$.
Instead of giving an explicit form of supercurrent, we require some necessary (not sufficient) conditions that strongly suggest the existence of supersymmetry following \cite{Bae:2021lvk,Bae:2020xzl}.
Recasting the conditions in terms of classical codes, we apply them to examples of self-dual codes and construct CFTs likely to be supersymmetric.

\subsection{Supersymmetry conditions}
Let us start with reviewing the fundamentals of supersymmetric CFTs following \cite{Polchinski:1998rr}.
In this section, we only care about the existence of supersymmetry.
Then we focus on $\CN=1$ supersymmetry, in particular, $\left(\CN,\bar{\CN}\right) = (1,0)$ since our fermionic CFTs are chiral.
The energy-momentum tensor $T(z)$ and the supercurrent $G(z)$ satisfy the following OPEs:
\begin{align}
\begin{aligned}
\label{eq:scft_ope}
    T(z)\,T(0) &\sim \frac{n}{2z^4} + \frac{2}{z^2}\,T(0) + \frac{1}{z}\,\partial T(0)\,, \\
    T(z)\,G(0) &\sim \frac{3}{2z^2}\,G(0) + \frac{1}{z} \,\partial G(0)\,, \\
    G(z)\,G(0) &\sim \frac{2n}{3z^2} + \frac{2}{z}\,T(0)\,,
\end{aligned}
\end{align}
where $n\in \BZ_{>0}$ is a central charge.
From this relation, $G(z)$ has conformal weight $h=\frac{3}{2}$. The Laurent expansions are
\begin{equation}
    T(z) = \sum_{m\,\in\,\BZ} \frac{L_m}{z^{m+2}}\,,\qquad
    G(z) = \sum_{r\,\in\,\BZ+\nu} \frac{G_r}{z^{r+3/2}}\,, 
\end{equation}
where $\nu=0$ for the NS sector and $\nu=1/2$ for the R sector.
The operators $\{L_m\}$ and $\{G_r\}$ form the superconformal algebra:
\begin{align}
\begin{aligned}
    [L_m,L_k] &= (m-k)L_{m+k} + \frac{n}{12}(m^3-m)\,\delta_{m+k,0}\,, \\
    [L_m,G_r] &= \left(\frac{1}{2}\,m-r\right)G_{m+r}\,, \\
    \{G_r,G_s\} &= 2L_{r+s} + \frac{n}{3}\left(r^2-\frac{1}{4}\right)\delta_{r+s,0}\,.
\end{aligned}
\end{align}
Since $[L_0,G_0]=0$, for a state $\ket{k}$ in the R sector such that $L_0\ket{k}=h\ket{k}$, $G_0\ket{k}$ has the same eigenvalue $h$ of $L_0$ and the opposite statistics unless $G_0\ket{k}=0$. The norm is
\begin{equation}
    \|\,G_0\ket{k}\|^2=\braket{k|\left(L_0-\frac{n}{24}\right)|k}=\left(h-\frac{n}{24}\right)\braket{k|k}\,,
\end{equation}
thus $G_0\ket{k}=0$ if and only if $h=\frac{n}{24}$ and the condition $h\geq \frac{n}{24}$ is necessary for the representation to be unitary.

For a unitary supersymmetric theory, we have the partition function of the $\widetilde{\mathrm{R}}$ sector: $Z_{\widetilde{\mathrm{R}}}(\tau)=\Tr[(-1)^F q^{L_0-\frac{n}{24}}]$ where $F$ is the fermion parity. 
\begin{itemize}
    \item For $h>\frac{n}{24}$, $\ket{k}$ and $G_0\ket{k}$ contribute as the opposite signs and thus vanish.
    \item For $h=\frac{n}{24}$, each $\ket{k}$ contributes as $\pm 1$ since $q^{L_0-\frac{n}{24}}=1$.
\end{itemize}
Therefore, $Z_{\widetilde{\mathrm{R}}}(\tau)$ becomes constant and counts the signed number of states with $h=\frac{n}{24}$, which is called the Witten index \cite{Witten:1982im,Witten:1982df}.

In this section, we aim to find fermionic code CFTs with supersymmetry.
Our fermionic CFTs have been defined by a set of vertex operators associated with the underlying momentum lattices.
We can show the existence of supersymmetry by giving a supercurrent satisfying the OPEs \eqref{eq:scft_ope}.
However, in general, it is hard to construct an explicit supercurrent from vertex operators because it requires careful treatment of cocycle factors.\footnote{Recently, it has been pointed out that a fundamental property of quantum error-correcting codes can be used to construct supercurrents from vertex operators \cite{Harvey:2020jvu}.}
Instead, we check the following three necessary conditions derived from the above discussion: 
\begin{enumerate}
    \item The Neveu-Schwarz sector contains a spin-$\frac{3}{2}$ Virasoro primary field.
    \item Any primary operator in the Ramond sector satisfies $h\geq \frac{n}{24}$.
    \item The Ramond-Ramond partition function $Z_{\widetilde{\mathrm{R}}}(\tau)=\Tr[(-1)^F q^{L_0-\frac{n}{24}}]$ is constant.
\end{enumerate}
Recently, it has been proved that the second condition implies the third one \cite{Bae:2021jkc}.
The above three conditions do not guarantee supersymmetry, but they strongly suggest the existence of supersymmetry.
In what follows, we call them the supersymmetry (SUSY) conditions.

In particular, the third condition is highly nontrivial because bosonic and fermionic excitations cancel each other.
It is natural to require supersymmetry as a mechanism for the cancellation.
Note that $Z_{\widetilde{\mathrm{R}}}(\tau)=0$ if a fermionic CFT contains a free fermion. In that case, the third condition does not provide nontrivial evidence of supersymmetry.
In the rest of this section, we further comment on free fermions in our fermionic CFTs.

In general, it is known that a holomorphic CFT can be decomposed into free fermions (including $h=\frac{1}{2}$ operators) and a sector without $h=\frac{1}{2}$ operators \cite{Goddard:1988wv}. For fermionic code CFTs, we can rephrase it as the decomposition of momentum lattices.
From the following discussion, an odd self-dual lattice $\Lambda\subset\BR^n$ can be decomposed into an integer lattice $\BZ^r$ $(0\leq r\leq n)$ corresponding to $r$ pairs of free fermions and a lattice orthogonal to it.

Let $\Lambda\subset\BR^n$ be a self-dual lattice including a norm 1 vector $\lambda_1\in\Lambda$: $\lambda_1\cdot\lambda_1=1$. Then we can decompose any $\lambda\in\Lambda$ into $\lambda_1$ direction and its orthogonal component by $\lambda=m\lambda_1+\lambda_\perp$ where $m=\lambda_1\cdot\lambda\in\BZ$ from self-duality. Thus, $\Lambda$ can be decomposed into the 1-dimensional integer lattice in $\lambda_1$ direction and its orthogonal component: $\Lambda=\lambda_1\BZ\oplus\Lambda_\perp$ where $\Lambda_\perp=\{\lambda\in\Lambda \mid \lambda_1\cdot\lambda=0\}$.
In this case, the partition function of the NS sector can be written as
\begin{align}
\begin{aligned}
	Z_{\mathrm{NS}}(\tau;\Lambda) &= \frac{1}{\eta(\tau)^n}\, \sum_{\lambda\,\in\,\Lambda}\, q^{\frac{1}{2}\lambda^2} \,,\\
	&= \frac{1}{\eta(\tau)^n}\, \sum_{m\,\in\,\BZ}\, \sum_{\lambda_\perp\in\,\Lambda_\perp}\, q^{\frac{1}{2}(m^2+\lambda_\perp^2)}\,, \\
	&= \left( \frac{1}{\eta(\tau)} \,\sum_{m\,\in\,\BZ}\, q^{\frac{1}{2}m^2} \right) \left( \frac{1}{\eta(\tau)^{n-1}} \sum_{\lambda_\perp\in\,\Lambda_\perp}\, q^{\frac{1}{2}\lambda_\perp^2} \right)\,.
\end{aligned}
\end{align}
The equivalence between two Majorana-Weyl fermions $\psi^1,\psi^2$ and one chiral boson $X(z)$
\begin{equation}
    \psi(z)=\frac{1}{\sqrt{2}}(\psi^1+\i\psi^2)(z) \cong e^{\i X(z)}\,,\quad
    \bar{\psi}(z)=\frac{1}{\sqrt{2}}(\psi^1-\i\psi^2)(z) \cong e^{-\i X(z)},
\end{equation}
tells us that the left part of $Z_{\mathrm{NS}}(\tau;\Lambda)$ is the same as the partition function of a pair of free fermions~\cite{Polchinski:1998rr}. Therefore, we can conclude that a pair of free fermions corresponds to $\lambda_1$ direction in the lattice. Since $\Lambda_\perp$ can be regarded as a self-dual lattice in $\BR^{n-1}$, we can repeat this procedure until no $h=\frac{1}{2}$ operators are left.

\subsection{Constraints on codes}
In the previous section, we have declared the three necessary conditions for supersymmetry.
Our construction of fermionic CFTs enables us to rewrite them simply in terms of classical codes.
In this section, we reduce the supersymmetry conditions to some constraints on underlying $p$-ary codes for an odd prime $p$ and $p=2$.

\subsubsection{For odd prime $p\neq2$}
Let us start with the supersymmetry condition 1, which imposes the existence of a spin-$\frac{3}{2}$ primary field in the NS sector.
As we have seen, the Virasoro primaries in our fermionic CFTs are only $V_k(z) = \,:e^{ik\cdot X(z)}:$ with the conformal weight $h=\frac{1}{2}k^2$. Then the first condition is equivalent to the existence of $\lambda\in\Lambda(C)$ with $\lambda^2=3$. Since the Construction A lattice satisfies $\lambda^2=\frac{1}{p}(c+p\,m)^2$ for $c\in C$, $m\in\BZ^n$ and $0\leq c_i<p$, it is sufficient to calculate $\lambda^2$ explicitly for all $c\in C$ and $m\in\{-1,0\}^n$ when $p>3$. The contribution of $i$-th element to $\lambda^2$ is $c_i^2/p$ or $(p-c_i)^2/p$.
Thus, the condition can be reduced by substituting
\begin{align}
    x_{a}\to x^{a^2}+x^{(p-a)^2}\,,
\end{align}
into the complete weight enumerator $W_C(\{x_a\})$.

\noindent \textbf{SUSY condition 1}: 
$W_C(\{x^{a^2}+x^{(p-a)^2}\})$ contains $x^{3p}$.

The coefficient of $x^{3p}$ represents the number of spin-$3/2$ primaries. Note that for $p=3$ we have to consider
\begin{equation}
    W_C(1+2x^9,x+x^4,x+x^4)\,,
\end{equation}
where the term $2x^9$ comes from $m_i = \pm1$.
Thus the corresponding CFT always contains spin-$3/2$ primaries from $(0,\cdots,0)\in C$.
For $p>3$, we do not have to include $x^m$ s.t. $m>3p$. For example, when $p=5$ it is enough to consider
\begin{equation}
    W_C(1, x^1, x^4+x^9,  x^4+x^9, x^1).
\end{equation}

The second condition requires the positive energy of any operator in the R sector.
From \eqref{eq:min_weight_in_R}, the second condition becomes

\noindent \textbf{SUSY condition 2}: 
\begin{equation}
    h_{\min} = \min_{c\,\in\, C} \frac{1}{2p} \sum_{i=1}^n \left( c_i-\frac{p}{2} \right)^2 \geq \frac{n}{24}.
\end{equation}

To reduce the third condition, we prepare a refined version of the complete weight enumerator $W_C(\{x_a\})$.

\begin{corollary}
Let $p$ be an odd prime and $C \subset \BF_p^n$ a self-dual code. Then
\begin{align}
\begin{aligned}
\label{eq:rr_part_refine}
    Z_{\widetilde{\mathrm{R}}}(\tau;\Lambda(C)) &= \frac{1}{\eta(\tau)^n} W_C \left( f^{1,1}_{0,p}(\tau),f^{1,1}_{1,p}(\tau),\cdots,f^{1,1}_{p-1,p}(\tau) \right)\,, \\
    &= \frac{1}{\eta(\tau)^n} RW_C \left( f^{1,1}_{1,p}(\tau),\cdots,f^{1,1}_{(p-1)/2,p}(\tau) \right)\,.
\end{aligned}
\end{align}
where $RW_C(\{x_a\})$ is the refined weight enumerator
\begin{align}
\begin{aligned}
\label{eq:def_refined_enum}
    RW_C(x_1,\cdots,x_{(p-1)/2}) :=&\; W_C(0,x_1,\cdots,x_{(p-1)/2},-x_{(p-1)/2},\cdots,-x_1)\,, \\
    =&\; \sum_{c\,\in\, C} \prod_{a=1}^{(p-1)/2} (-1)^{\mathrm{wt}_{p-a}(c)} x_a^{\mathrm{wt}_a(c)+\mathrm{wt}_{p-a}(c)} \,.
\end{aligned}
\end{align}
\label{cor:Z_RR_by_cwe}
\end{corollary}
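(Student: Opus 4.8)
The first equality is immediate: it is just Proposition~\ref{prop:part_weight} specialized to the $\widetilde{\mathrm R}$ sector $(\alpha,\beta)=(1,1)$. All the content lies in the second equality, and the plan is to reduce it to two elementary properties of the building-block functions $f^{1,1}_{a,p}$ defined in \eqref{eq:def_f_function}, namely (i) $f^{1,1}_{0,p}(\tau)=0$, and (ii) $f^{1,1}_{p-a,p}(\tau)=-f^{1,1}_{a,p}(\tau)$ for each $a=1,\dots,(p-1)/2$. Granting these, substituting $x_0=f^{1,1}_{0,p}=0$, $x_a=f^{1,1}_{a,p}$, and $x_{p-a}=f^{1,1}_{p-a,p}=-f^{1,1}_{a,p}$ into $W_C(\{x_a\})$ reproduces precisely the substitution in the definition \eqref{eq:def_refined_enum} of $RW_C$, so the second equality and its monomial expansion follow at once.

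For (i), I would use the reflection $k\mapsto -1-k$ on the summation index: it fixes $(k+\tfrac12)^2$ while sending $(-1)^{k}\mapsto(-1)^{-1-k}=-(-1)^{k}$, so the sum defining $f^{1,1}_{0,p}$ is antisymmetric and therefore vanishes. For (ii), the key is the reindexing $k\mapsto k'=-k-2$, which is a bijection of $\BZ$. A short computation gives $k+\tfrac12+\tfrac{p-a}{p}=-(k'+\tfrac12+\tfrac{a}{p})$, so the Gaussian exponents of $f^{1,1}_{p-a,p}$ and $f^{1,1}_{a,p}$ agree term by term. The sign factor is where the odd prime enters: $(-1)^{k+(p-a)}=(-1)^p(-1)^{k-a}=-(-1)^{k+a}$ because $p$ is odd, whereas under $k\mapsto k'$ one has $(-1)^{k'+a}=(-1)^{k+a}$; the net extra $-1$ yields $f^{1,1}_{p-a,p}=-f^{1,1}_{a,p}$ after reindexing. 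The prefactor $\i^{\alpha\beta p}=\i^{p}$ is a harmless overall constant common to all $f^{1,1}_{a,p}$ and affects neither property.

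Finally I would assemble the substitution. Writing $W_C=\sum_{c\in C}\prod_{b\in\BF_p}x_b^{\mathrm{wt}_b(c)}$ from \eqref{eq:def_complete_weight} and plugging in the values above, the factor $x_0^{\mathrm{wt}_0(c)}=0^{\mathrm{wt}_0(c)}$ suppresses every codeword with a vanishing coordinate, while pairing the $a$ and $p-a$ contributions produces $(-1)^{\mathrm{wt}_{p-a}(c)}\,x_a^{\mathrm{wt}_a(c)+\mathrm{wt}_{p-a}(c)}$, which is exactly the expansion in \eqref{eq:def_refined_enum}. The only real obstacle is the bookkeeping in step (ii): one must align the half-integer shift $k+\tfrac12+\tfrac{a}{p}$ correctly under the reflection and track the interplay between the $(-1)^{\beta(k+a)}$ phase and the odd exponent $p$, since a sign error there would spuriously turn the required antisymmetry $f^{1,1}_{p-a,p}=-f^{1,1}_{a,p}$ into a symmetry and collapse $RW_C$ to the wrong object.
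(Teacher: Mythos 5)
Your proposal is correct and follows essentially the same route as the paper: the whole content is the antisymmetry $f^{1,1}_{p-a,p}(\tau)=-f^{1,1}_{a,p}(\tau)$, which the paper also establishes by the reindexing $k\to -k'-2$ together with the sign $(-1)^p=-1$ for odd $p$. Your only deviation is proving $f^{1,1}_{0,p}=0$ by the separate reflection $k\mapsto -1-k$ rather than as the $a=0$ special case of the antisymmetry, which is an immaterial variation.
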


\begin{proof}
The second equality in \eqref{eq:rr_part_refine} follows from
\begin{align}
\begin{aligned}
    f^{1,1}_{a,p}(\tau) &= \i^p\, \sum_{k\,\in\,\mathbb{Z}}\, (-1)^{(k+a)} q^{\frac{p}{2}\left(k+\frac{1}{2}+\frac{a}{p}\right)^2}\,, \\
    &= \i^p \,\sum_{k\,\in\,\mathbb{Z}}\, (-1)^{(-k+(p-a)-p)} q^{\frac{p}{2}\left(-k-\frac{3}{2}+\frac{p-a}{p}\right)^2}\,, \\
    &= \i^p \,(-1)^p\, \sum_{k'\,\in\,\mathbb{Z}}\, (-1)^{(k'+(p-a))} q^{\frac{p}{2}\left(k'+\frac{1}{2}+\frac{p-a}{p}\right)^2}\,, \\
    &= -f^{1,1}_{p-a,p}(\tau)\,.
\end{aligned}
\end{align}
where $k\rightarrow -k'-2$, and in particular $f^{1,1}_{0,p}(\tau)=0$.
\end{proof}

From Corollary \ref{cor:Z_RR_by_cwe}, the third condition becomes

\noindent \textbf{SUSY condition 3}:
\begin{equation}
    \frac{1}{\eta(\tau)^n} RW_C \left( f^{1,1}_{1,p}(\tau),\cdots,f^{1,1}_{(p-1)/2,p}(\tau) \right)
\end{equation}
is constant with respect to the torus modulus $\tau=\tau_1+\i\tau_2$.

We can check this by numerical computation or formulas for theta functions.
In some cases, $RW_C(x_1,\cdots,x_{(p-1)/2})=0$ before substituting the concrete form $f^{1,1}_{a,p}$.

Finally, we deal with the number of free fermions in a fermionic code CFT. As in the previous section, the number of free fermions is equal to that of norm 1 vectors in the Construction A lattice $\Lambda(C)$. Therefore, from a similar discussion for the first condition, the coefficient of $x^p$ in
\begin{equation}
	W_C(\{x^{a^2}+x^{(p-a)^2}\})
\end{equation}
is the number of free fermions.

\subsubsection{For $p=2$}
From the same discussion as when $p$ is an odd prime, the first condition becomes

\noindent \textbf{SUSY condition 1}: $W_C(1+2x^4,2x)$ contains $x^6$.

Equivalently, for $n\neq2$, the SUSY condition 1 requires the existence of a codeword $c\in C$ with $\mathrm{wt}_1(c)=2 \,\mathrm{or}\, 6$.
Note that, in the complete weight enumerator $W_C(1+2x^4,2x)$, the coefficient of $x^6$ counts the number of primary operators with the spin $s=\frac{3}{2}$.

From \eqref{eq:min_weight_in_R_p2}, the second condition becomes

\noindent \textbf{SUSY condition 2}: 
\begin{equation}
    \min_{s\,\in\, S(C)} \mathrm{wt}_1(s) \geq \frac{n}{6} \,.
\end{equation}

Regarding the third condition, to simplify the discussion, we define the complete weight enumerator corresponding to the $\widetilde{\mathrm{R}}$ sector by 
\begin{equation}
\label{eq:def_complete_weight_Rtilde}
    W_{\widetilde{\mathrm{R}}}(x_0,x_1):=W_{C_1}(x_0,x_1)-W_{C_3}(x_0,x_1) \,.
\end{equation}
From \eqref{eq:sector_partition_functions_p2}, the partition function of the $\widetilde{\mathrm{R}}$ sector can be written as
\begin{equation}
     Z_{\widetilde{\mathrm{R}}}(\tau;\Lambda(C)) = \frac{1}{\eta(\tau)^n} W_{\widetilde{\mathrm{R}}}(\theta_3(2\tau),\theta_2(2\tau)) \,.
\end{equation}
Then the third condition becomes

\noindent \textbf{SUSY condition 3}: 
\begin{equation}
    \frac{1}{\eta(\tau)^n} W_{\widetilde{\mathrm{R}}}(\theta_3(2\tau),\theta_2(2\tau))
\end{equation}
is constant with respect to $\tau$.

In some cases, $W_{\widetilde{\mathrm{R}}}(x_0,x_1)=0$ before substituting the Jacobi theta functions.

The number of free fermions is the coefficient of $x^2$ in $W_C(1,2x^1)$, which is equal to four times the number of codewords with $\mathrm{wt}_1(c)=2$.

\subsection{Examples}
In this section, we explore supersymmetric CFTs by applying the supersymmetry conditions to several examples of classical codes.
We summarize the properties of the self-dual codes related to the SUSY conditions in Table \ref{tab:susy_example}.
We demonstrate the SUSY conditions for six self-dual codes with various lengths and finite fields. The first column characterizes the type of self-dual code. The second column represents the number of spin-$\frac{3}{2}$ primary fields. The third column shows the minimum energy of operators in the R sector. The fourth column gives the RR partition function. Finally, the fifth column counts the number of free fermions.
Excluding fermionic CFTs with free fermions, two self-dual codes with $n=12$, $p=2$, and $n=36$, $p=2$ satisfy all the SUSY conditions.

\begin{table}[t]
    \centering
    \caption{The properties of self-dual codes associated with the SUSY conditions}
    \begin{tabular*}{14cm}{@{\extracolsep{\fill}}ccccc}
    \toprule
    $\;$ Self-dual code $C$ & spin-$\frac{3}{2}$ & $h_{\min}-\frac{n}{24}$ & $Z_{\widetilde{\mathrm{R}}}(\tau;\Lambda(C))$ & free fermions\\
    \midrule
    $\;$ $n=2\,,\,$ $p=5$ & $0$ & $\frac{1}{6}$ & $0$ & $4$\\[0.1cm]
    $\;$ $n=4\,,\,$ $p=3$ & $32$ & $\frac{1}{3}$ & $0$ & $8$\\[0.1cm]
    $\;$ $n=10\,,\,$ $p=5$ & $960$ & $-\frac{1}{6}$ & $0$ & $4$\\[0.1cm]
    $\;$ $n=12\,,\,$ $p=2$ & $2048$ & $0$ & $24$ & $0$\\[0.1cm]
    $\;$ $n=20\,,\,$ $p=5$ & $4608$ & $-\frac{1}{3}$ & $\Theta_{E_8}(q)/\eta(q)^8$ & $0$\\[0.1cm]
    $\;$ $n=36\,,\,$ $p=2$ & $1536$ & $0$ & $384$ & $0$\\
    \bottomrule
    \end{tabular*}
    \label{tab:susy_example}
\end{table}

%% Line width in tables
\renewcommand{\arraystretch}{0.8}

\subsubsection{$n=2\,,\,$ $p=5$}
Let $C\subset\BF_5^2$ be a self-dual code generated by
\begin{equation}
    G = \begin{bmatrix} 1 & 2 \end{bmatrix}\,.
\end{equation}

As we have seen in section~\ref{sec:code_n2p5}, the complete weight enumerator of $C$ is
\begin{equation}
    W_C(x_0,x_1,x_2,x_3,x_4) = x_0^2+x_1x_2+x_2x_4+x_3x_1+x_4x_3 \,.
\end{equation}
Substituting $x^{a^2}+x^{(5-a)^2}$ for $x_a$, we obtain
\begin{equation}
    W_C(\{x^{a^2}+x^{(5-a)^2}\}) = 1 + 4x^5 + 4x^{10} + 4x^{20} + \cdots \,.
\end{equation}
Since there is no $x^{15}$ term, $C$ does not satisfy the condition 1.
Therefore, we can conclude that the fermionic CFT does not have supersymmetry.
In addition, the coefficient of $x^5$ is $4$, thus the corresponding CFT consists of 4 free fermions. This is consistent with the discussion about the partition functions in section~\ref{sec:code_n2p5}.

The minimum value of $\sum_{i=1}^2 \left( c_i-\frac{5}{2} \right)^2$ is $\frac{5}{2}$ for $c=(1,2)$ and then the SUSY condition 2 is satisfied:
\begin{equation}
    h_{\min} = \frac{1}{4} \geq \frac{2}{24}\,.
\end{equation}
Then, the resulting fermionic code CFT only contains operators with positive energy in the R sector.

For the self-dual code $C$, the refined weight enumerator is vanishing:
\begin{equation}
    RW_C(x_1,x_2) = W_C(0,x_1,x_2,-x_2,-x_1) = 0 \,.
\end{equation}
Then the RR partition function becomes $0$, and the SUSY condition 3 is immediately satisfied. The vanishing RR partition function follows from the inclusion of free fermions.

\subsubsection{$n=4\,,\,$ $p=3$}
Let $C\subset\BF_3^4$ be a self-dual code generated by
\begin{equation}
    G = \begin{bmatrix} 1 & 1 & 2 & 0 \\ 0 & 1 & 1 & 2 \end{bmatrix}\,.
\end{equation}
which is Example 2.2 in \cite{Gaiotto:2018ypj}.

The complete weight enumerator of $C$ is 
\begin{equation}
    W_C(x_0,x_1,x_2) = x_0^4 + x_0x_1^3 + 3x_0x_1^2x_2 + 3x_0x_1x_2^2 + x_0x_2^3 \,,
\end{equation}
which can be expanded as
\begin{equation}
    W_C(1+2x^9,x+x^4,x+x^4) = 1 + 8 x^3 + 24 x^6 + 32 x^9 + 24 x^{12} + \cdots \,.
\end{equation}
From the existence of the $x^9$ term, the fermionic code CFT contains 32 primaries with the spin $s=\frac{3}{2}$. Then the SUSY condition 1 is satisfied.
In addition, the CFT consists of 8 free fermions from the coefficient $8$ of the $x^3$ term, which agrees with the discussion in \cite{Gaiotto:2018ypj}.

The minimum value of $\sum_{i=1}^4 \left( c_i-\frac{3}{2} \right)^2$ is $3$ for $c=(1,1,2,0)$ and then the condition 2:
\begin{equation}
    h_{\min} = \frac{1}{2} \geq \frac{4}{24}
\end{equation}
is satisfied.

The refined weight enumerator is
\begin{equation}
    RW_C(x_1) = W_C(0,x_1,-x_1) = 0 \,,
\end{equation}
thus the condition 3 is immediately satisfied.

Therefore, the code $C$ satisfies all the SUSY conditions and indeed the CFT consisting of 8 free fermions is supersymmetric \cite{Goddard:1984hg}.

\subsubsection{$n=10\,,\,$ $p=5$}
Let $C\subset \BF_5^{10}$ be a self-dual code generated by
\begin{equation}
    G = \begin{bmatrix}
    1 & 0 & 0 & 4 & 3 & 0 & 0 & 2 & 4 & 2 \\
    0 & 1 & 0 & 2 & 3 & 0 & 0 & 4 & 3 & 4 \\
    0 & 0 & 1 & 2 & 1 & 0 & 0 & 2 & 4 & 2 \\
    0 & 0 & 0 & 0 & 0 & 1 & 0 & 4 & 2 & 2 \\
    0 & 0 & 0 & 0 & 0 & 0 & 1 & 3 & 3 & 1
    \end{bmatrix} \,,
\end{equation}
which is the bottom matrix for $\BF_5$ and $n=10$ in \cite{database}.

Since the complete weight enumerator consists of 165 terms, we do not write it here. If we substitute $x^{a^2}+x^{(5-a)^2}$ for $x_a$, it becomes
\begin{equation}
    W_C(\{x^{a^2}+x^{(5-a)^2}\}) = 1 + 4x^5 + 244x^{10} + 960x^{15} + 3120x^{20} + \cdots \,.
\end{equation}
Thus, the code $C$ satisfies the condition 1 and the corresponding CFT has 4 free fermions.

The minimum value of $\sum_{i=1}^{10} \left( c_i-\frac{5}{2} \right)^2$ is $\frac{5}{2}$ for $c=(2, 3, 2, 3, 2, 2, 3, 2, 3, 2)$ and then the condition 2 is not satisfied as
\begin{equation}
    h_{\min} = \frac{1}{4} < \frac{10}{24} \,.
\end{equation}
Therefore, the resulting fermionic code CFT does not satisfy the necessary condition for supersymmetry, which concludes that it is not supersymmetric.

The refined weight enumerator is
\begin{equation}
    RW_C(x_1,x_2) = W_C(0,x_1,x_2,-x_2,-x_1) = 0 \,,
\end{equation}
thus the condition 3 is immediately satisfied. This result is consistent with the existence of free fermions.

\subsubsection{$n=12\,,\,$ $p=2$}
Let $C\subset \BF_2^{12}$ be a self-dual code generated by
\begin{equation}
    G = \begin{bmatrix}
    1 & 0 & 0 & 0 & 1 & 0 & 0 & 0 & 1 & 1 & 1 & 1 \\
    0 & 1 & 0 & 0 & 1 & 0 & 0 & 1 & 1 & 1 & 1 & 0 \\
    0 & 0 & 1 & 0 & 1 & 0 & 0 & 1 & 0 & 0 & 0 & 1 \\
    0 & 0 & 0 & 1 & 0 & 0 & 0 & 1 & 1 & 0 & 0 & 1 \\
    0 & 0 & 0 & 0 & 0 & 1 & 0 & 1 & 0 & 1 & 0 & 1 \\
    0 & 0 & 0 & 0 & 0 & 0 & 1 & 1 & 0 & 0 & 1 & 1
    \end{bmatrix} \,,
\end{equation}
which is the bottom matrix for $\BF_2,\ n=12$ in \cite{database}.

The complete weight enumerator is
\begin{equation} \label{eq:cwe_n12p2}
    W_C(x_0,x_1) = x_0^{12} + 15x_0^8x_1^4 + 32x_0^6x_1^6 + 15x_0^4x_1^8 + x_1^{12} \,.
\end{equation}
Since
\begin{equation}
    W_C(1+2x^4,2x) = 1 + 264x^4 + 2048x^6 + 7944x^8 + \cdots \,,
\end{equation}
the condition 1 is satisfied from the existence of $x^6$ term and the CFT does not include free fermions from the absence of the $x^2$ term. We can reach the same conclusion based on whether or not the term with $x_1^2,x_1^6$ corresponding $\mathrm{wt}_1(c)=2,6$ exists in \eqref{eq:cwe_n12p2}.

The minimum weight of the shadow is $2$ for $s=(1,1,0,0,0,0,0,0,0,0,0,0)\in S(C)$ and the condition 2: $2\geq 12/6$ is satisfied.

The complete weight enumerator for the $\widetilde{\mathrm{R}}$ sector is
\begin{equation}
    W_{\widetilde{\mathrm{R}}}(x_0,x_1) = 6x_0^{10}x_1^2 - 12x_0^6x_1^6 + 6x_0^2x_1^{10} \,.
\end{equation}
It consists of the terms corresponding to $\mathrm{wt}_1(c)=2,6,10$, which reflects the fact that $\mathrm{wt}_1(\kappa)= \frac{n}{2}$ mod $4$ for $\kappa\in S(C)$ as discussed in \ref{sec:partiton_function_p2}. From the property of the Jacobi theta functions, the partition function of the $\widetilde{\mathrm{R}}$ sector is a nonzero constant as
\begin{equation}
    Z_{\widetilde{\mathrm{R}}}(\tau) = \frac{1}{\eta(\tau)^{12}} W_{\widetilde{\mathrm{R}}}(\theta_3(2\tau),\theta_2(2\tau)) = 24 \,.
\end{equation}
Therefore, the condition 3 is satisfied. This result is consistent with the fact that for a unitary supersymmetric CFT, $Z_{\widetilde{\mathrm{R}}}(\tau)$ can be a nonzero constant only if $h_{\min}=c/24$, i.e., the condition 2 is satisfied by equality.

Since there is only one self-dual lattice at $n=12$ that does not contain a vector $\lambda$ s.t. $\lambda^2=1$, which corresponds to a free fermion in the CFT, $\Lambda(C)$ is equivalent to the lattice in \ref{sec:code_n12p5}. Indeed, the partition function of the $\widetilde{\mathrm{R}}$ sector agrees with \eqref{eq:n=12p=5_partition}. (Although an overall factor $-1$ is multiplied, this is not significant because of the ambiguity in the $\widetilde{\mathrm{R}}$ sector as discussed in section \ref{ss:fermionCFT}.)

\subsubsection{$n=20\,,\,$ $p=5$}
\label{sss:n=20p=5}
Let $C\subset \BF_5^{20}$ be a self-dual code generated by
\begin{equation}
    G = \begin{bmatrix}
    2 & 2 & 2 & 2 & 2 & 2 & 2 & 2 & 2 & 2 & 2 & 2 & 2 & 2 & 2 & 2 & 2 & 2 & 2 & 2 \\
    0 & 1 & 0 & 0 & 0 & 0 & 0 & 0 & 0 & 0 & 3 & 0 & 1 & 4 & 1 & 1 & 2 & 1 & 1 & 0 \\
    0 & 0 & 1 & 0 & 0 & 0 & 0 & 0 & 0 & 0 & 4 & 1 & 1 & 0 & 1 & 3 & 0 & 1 & 2 & 1 \\
    0 & 0 & 0 & 1 & 0 & 0 & 0 & 0 & 0 & 0 & 4 & 0 & 3 & 2 & 2 & 1 & 0 & 0 & 4 & 3 \\
    0 & 0 & 0 & 0 & 1 & 0 & 0 & 0 & 0 & 0 & 4 & 4 & 2 & 2 & 2 & 0 & 2 & 4 & 3 & 1 \\
    0 & 0 & 0 & 0 & 0 & 1 & 0 & 0 & 0 & 0 & 3 & 3 & 4 & 2 & 0 & 0 & 1 & 2 & 0 & 4 \\
    0 & 0 & 0 & 0 & 0 & 0 & 1 & 0 & 0 & 0 & 1 & 1 & 0 & 1 & 3 & 2 & 1 & 0 & 1 & 4 \\
    0 & 0 & 0 & 0 & 0 & 0 & 0 & 1 & 0 & 0 & 4 & 2 & 0 & 4 & 1 & 3 & 1 & 3 & 3 & 3 \\
    0 & 0 & 0 & 0 & 0 & 0 & 0 & 0 & 1 & 0 & 3 & 3 & 3 & 2 & 3 & 0 & 2 & 2 & 1 & 0 \\
    0 & 0 & 0 & 0 & 0 & 0 & 0 & 0 & 0 & 1 & 1 & 0 & 2 & 1 & 2 & 4 & 2 & 3 & 3 & 1
    \end{bmatrix} \,,
\end{equation}
which we constructed as an appropriate example.

The complete weight enumerator substituted $x^{a^2}+x^{(5-a)^2}$ for $x_a$ is
\begin{equation}
    W_C(\{x^{a^2}+x^{(5-a)^2}\}) = 1 + 184x^{10} + 4608x^{15} + 68334x^{20} + \cdots \,,
\end{equation}
which shows the existence of spin-$\frac{3}{2}$ primary operators and the absence of primary operators with weight $h=\frac{1}{2}$.
Thus, the condition 1 is satisfied and there are no free fermions.

The second condition asks for the minimum energy of the R sector to be positive.
For the self-dual code $C$, the minimum value of $\sum_{i=1}^{20} \left( c_i-\frac{5}{2} \right)^2$ is $5$ for 
\begin{equation}
    c=(2,2,2,2,2,2,2,2,2,2,2,2,2,2,2,2,2,2,2,2) \in C\,,
\end{equation}
where the first row of the generator matrix guarantees the existence of the element in $C$.
In this case, the SUSY condition 2 is not satisfied:
\begin{equation}
    h_{\min} = \frac{1}{2} < \frac{20}{24} \,.
\end{equation}

The refined weight enumerator of $C$ is
\begin{equation}
    RW_C(x_1,x_2) = 4x_1^{20} + 912x_1^{15}x_2^5 + 1976x_1^{10}x_2^{10} - 912x_1^5x_2^{15} + 4x_2^{20}\,.
\end{equation}
We can easily verify the small $q$ expansion of the Ramond-Ramond partition function
\begin{align}
    \begin{aligned}
    \frac{1}{\eta(\tau)^{20}} RW_C \left( f^{1,1}_{1,5}(\tau),f^{1,1}_{2,5}(\tau) \right) &= 4q^{-\frac{1}{3}} + 992q^{\frac{2}{3}} + 16496q^{\frac{5}{3}} + 139008q^{\frac{8}{3}} + \cdots\,,\\
    &= 4(q^{-\frac{1}{3}} + 248 q^{\frac{2}{3}} + 4124 q^{\frac{5}{3}} + 34752 q^{\frac{8}{3}} + \cdots)\,.
    \end{aligned}
\end{align}
Therefore, the RR partition function is not constant, which means that the condition 3 is not satisfied. This is consistent with the fact that the partition function of the $\widetilde{\mathrm{R}}$ sector can be nonzero only if the CFT has no free fermions. In addition, the first term has a negative exponent, which corresponds to the fact that the condition 2 is not satisfied since the partition function is expressed as the sum of $q^{h-c/24}$.

Note that the above $q$ expansion exactly agrees with the one of $4\Theta_{E_8}(q)/\eta(q)^8$, where $\Theta_{E_8}(q)$ is the lattice theta function of the $E_8$ lattice.
Since the $E_8$ lattice is even self-dual, the corresponding theta function is invariant under the modular transformations up to phases. This is consistent with the modular property of the RR partition function. 
We further discuss fermionic code CFTs with non-constant RR partition functions in Appendix \ref{app:nonconstant}.
We examine several examples of non-constant RR partition functions and observe that they are related to particular modular forms called the Eisenstein series.

\subsubsection{$n=36\,,\,$ $p=2$}
Let $C\subset \BF_2^{36}$ be a self-dual code generated by
\begin{equation}
    G = \begin{bmatrix}
    1 & 0 & 0 & 0 & 0 & 0 & 0 & 0 & 0 & 0 & 0 & 0 & 0 & 0 & 0 & 0 & 0 & 0 & 0 & 0 & 1 & 1 & 1 & 0 & 1 & 1 & 0 & 0 & 0 & 0 & 0 & 1 & 0 & 0 & 0 & 1 \\
    0 & 1 & 0 & 0 & 0 & 0 & 0 & 0 & 0 & 0 & 0 & 0 & 0 & 0 & 0 & 0 & 0 & 0 & 0 & 0 & 1 & 1 & 1 & 0 & 1 & 1 & 0 & 0 & 0 & 0 & 0 & 1 & 1 & 1 & 1 & 0 \\
    0 & 0 & 1 & 0 & 0 & 0 & 0 & 0 & 0 & 0 & 0 & 0 & 0 & 1 & 0 & 0 & 0 & 0 & 0 & 0 & 1 & 1 & 0 & 1 & 0 & 1 & 1 & 0 & 0 & 0 & 1 & 0 & 1 & 1 & 1 & 1 \\
    0 & 0 & 0 & 1 & 0 & 0 & 0 & 0 & 0 & 0 & 0 & 0 & 0 & 1 & 0 & 0 & 0 & 0 & 0 & 0 & 1 & 1 & 0 & 1 & 0 & 1 & 1 & 0 & 0 & 0 & 0 & 1 & 1 & 1 & 0 & 0 \\
    0 & 0 & 0 & 0 & 1 & 0 & 0 & 0 & 0 & 0 & 0 & 0 & 0 & 1 & 0 & 0 & 0 & 1 & 0 & 0 & 1 & 1 & 1 & 0 & 1 & 0 & 1 & 1 & 0 & 0 & 0 & 1 & 0 & 1 & 0 & 1 \\
    0 & 0 & 0 & 0 & 0 & 1 & 0 & 0 & 0 & 0 & 0 & 0 & 0 & 1 & 0 & 0 & 0 & 1 & 0 & 0 & 1 & 1 & 1 & 0 & 1 & 0 & 1 & 1 & 1 & 1 & 0 & 1 & 1 & 0 & 0 & 1 \\
    0 & 0 & 0 & 0 & 0 & 0 & 1 & 0 & 0 & 0 & 0 & 0 & 0 & 1 & 0 & 0 & 0 & 0 & 0 & 0 & 0 & 1 & 0 & 0 & 0 & 0 & 0 & 1 & 0 & 0 & 0 & 0 & 0 & 0 & 1 & 1 \\
    0 & 0 & 0 & 0 & 0 & 0 & 0 & 1 & 0 & 0 & 0 & 0 & 0 & 1 & 0 & 0 & 0 & 0 & 0 & 0 & 0 & 1 & 0 & 0 & 0 & 0 & 1 & 0 & 0 & 0 & 1 & 1 & 0 & 0 & 0 & 0 \\
    0 & 0 & 0 & 0 & 0 & 0 & 0 & 0 & 1 & 0 & 0 & 0 & 0 & 0 & 0 & 0 & 0 & 0 & 0 & 0 & 1 & 1 & 1 & 1 & 0 & 1 & 0 & 0 & 0 & 1 & 0 & 1 & 0 & 1 & 0 & 1 \\
    0 & 0 & 0 & 0 & 0 & 0 & 0 & 0 & 0 & 1 & 0 & 0 & 0 & 0 & 0 & 0 & 0 & 0 & 0 & 0 & 1 & 1 & 1 & 1 & 1 & 0 & 0 & 0 & 1 & 0 & 1 & 0 & 1 & 0 & 1 & 0 \\
    0 & 0 & 0 & 0 & 0 & 0 & 0 & 0 & 0 & 0 & 1 & 0 & 0 & 1 & 0 & 0 & 0 & 0 & 0 & 0 & 1 & 0 & 1 & 0 & 1 & 1 & 0 & 0 & 0 & 1 & 1 & 0 & 1 & 0 & 1 & 0 \\
    0 & 0 & 0 & 0 & 0 & 0 & 0 & 0 & 0 & 0 & 0 & 1 & 0 & 1 & 0 & 0 & 0 & 0 & 0 & 0 & 1 & 0 & 0 & 1 & 1 & 1 & 0 & 0 & 1 & 0 & 0 & 1 & 1 & 0 & 1 & 0 \\
    0 & 0 & 0 & 0 & 0 & 0 & 0 & 0 & 0 & 0 & 0 & 0 & 1 & 1 & 0 & 0 & 0 & 0 & 0 & 0 & 0 & 0 & 1 & 1 & 0 & 0 & 1 & 1 & 0 & 0 & 1 & 1 & 1 & 1 & 0 & 0 \\
    0 & 0 & 0 & 0 & 0 & 0 & 0 & 0 & 0 & 0 & 0 & 0 & 0 & 0 & 1 & 0 & 0 & 1 & 0 & 0 & 0 & 0 & 1 & 1 & 0 & 0 & 1 & 1 & 1 & 0 & 0 & 1 & 0 & 1 & 1 & 0 \\
    0 & 0 & 0 & 0 & 0 & 0 & 0 & 0 & 0 & 0 & 0 & 0 & 0 & 0 & 0 & 1 & 0 & 1 & 0 & 0 & 1 & 1 & 1 & 1 & 1 & 1 & 0 & 0 & 1 & 0 & 0 & 1 & 1 & 0 & 1 & 0 \\
    0 & 0 & 0 & 0 & 0 & 0 & 0 & 0 & 0 & 0 & 0 & 0 & 0 & 0 & 0 & 0 & 1 & 1 & 0 & 0 & 1 & 1 & 0 & 0 & 1 & 1 & 1 & 1 & 1 & 1 & 0 & 0 & 0 & 0 & 0 & 0 \\
    0 & 0 & 0 & 0 & 0 & 0 & 0 & 0 & 0 & 0 & 0 & 0 & 0 & 0 & 0 & 0 & 0 & 0 & 1 & 0 & 0 & 1 & 1 & 1 & 1 & 1 & 0 & 0 & 0 & 1 & 1 & 0 & 0 & 1 & 1 & 0 \\
    0 & 0 & 0 & 0 & 0 & 0 & 0 & 0 & 0 & 0 & 0 & 0 & 0 & 0 & 0 & 0 & 0 & 0 & 0 & 1 & 1 & 0 & 0 & 0 & 1 & 1 & 1 & 1 & 1 & 0 & 1 & 0 & 0 & 1 & 0 & 1
    \end{bmatrix}
\end{equation}
which is the top matrix for $\BF_2,\ n=36,\ d=6$ in \cite{database}.

The complete weight enumerator of $C$ is
\begin{equation}
\begin{aligned}
    W_C(x_0,x_1) \;= x_0^{36}& + 24x_0^{30}x_1^6 + 225x_0^{28}x_1^8 + 1872x_0^{26}x_1^{10} + 9555 x_0^{24}x_1^{12} + 29160x_0^{22}x_1^{14} \\
    &+ 55755x_0^{20}x_1^{16} + 68960x_0^{18}x_1^{18} + 55755x_0^{16}x_1^{20} + 29160x_0^{14}x_1^{22} \\ &+ 9555x_0^{12}x_1^{24} 
    + 1872x_0^{10}x_1^{26} + 225x_0^8x_1^{28} + 24x_0^6x_1^{30} + x_1^{36} \,.
\end{aligned}
\end{equation}

The complete weight enumerator gives the expansion
\begin{equation}
    W_C(1+2x^4,2x) = 1 + 72 x^4 + 1536 x^6 + 60120 x^8 + \cdots \,.
\end{equation}
Then, the SUSY condition 1 is satisfied from the existence of the $x^6$ term and the fermionic code CFT does not contain free fermions because of the absence of the $x^2$ term.

The minimum weight of the shadow is $6$ for an element $s\in S(C)$:
\begin{equation}
    s=(0,1,0,0,0,1,1,0,0,0,0,1,0,0,0,0,0,0,0,0,0,0,0,0,1,0,0,0,0,0,0,1,0,0,0,0)\,.
\end{equation}
Therefore, the SUSY condition 2 is satisfied: $h_{\min} = 6\geq36/6$.

The complete weight enumerator for the $\widetilde{\mathrm{R}}$ sector is
\begin{equation}
    W_{\widetilde{\mathrm{R}}}(x_0,x_1) = 6x_0^{30}x_1^6 - 36x_0^{26}x_1^{10} + 90x_0^{22}x_1^{14} - 120x_0^{18}x_1^{18} + 90x_0^{14}x_1^{22} - 36x_0^{10}x_1^{26} + 6x_0^{6}x_1^{30}\,.
\end{equation}
The partition function of the $\widetilde{\mathrm{R}}$ sector is
\begin{equation}
    Z_{\widetilde{\mathrm{R}}}(\tau) = \frac{1}{\eta(\tau)^{36}}\, W_{\widetilde{\mathrm{R}}}(\theta_3(2\tau),\theta_2(2\tau)) = 384 \,,
\end{equation}
which means that the condition 3 is satisfied.
Therefore, the self-dual code $C$ satisfies all the SUSY conditions. Moreover, it does not contain free fermions. This strongly suggests the existence of supersymmetry in the fermionic code CFT.
To clarify the spectrum of the CFT, the partition functions of the other sectors can be calculated similarly from \eqref{eq:sector_partition_functions_p2} as
\begin{align}
\begin{aligned}
    Z_{\mathrm{NS}}(\tau) &= q^{-\frac{3}{2}} + 108q^{-\frac{1}{2}} + 1536 + 63414q^{\frac{1}{2}} + 2064384 q + \cdots \,, \\
    Z_{\widetilde{\mathrm{NS}}}(\tau) &= q^{-\frac{3}{2}} + 108q^{-\frac{1}{2}} - 1536 + 63414q^{\frac{1}{2}} - 2064384q + \cdots \,, \\
   Z_{\mathrm{R}}(\tau) &=  1152 + 4128768 q + 1307049984 q^2 + 127940165632 q^3 + \cdots \,.
 \end{aligned}
\end{align}

\section{Discussion}
\label{sec:discussion}

We have constructed chiral fermionic CFTs from linear self-dual codes over finite fields $\BF_p$ where $p$ is a prime number including $p=2$. The key ingredient of our construction is the relationship between self-dual codes and odd self-dual lattices via Construction A.
This relationship can be generalized to classical codes over $\BZ_k$, which is the ring of integers modulo $k$.
It is known that Construction A can be applied to self-dual codes over $\BZ_k$ and endows with self-dual lattices \cite{bannai1999type}.
Thus, we expect the generalization of our construction into such a class of classical codes.

In section \ref{sss:n=20p=5}, we found an example of fermionic CFTs with a non-constant RR partition function. Notably, the RR partition function is associated with the unique $8$-dimensional even self-dual lattice $E_8$. From the constraint of the modular property, a non-constant RR partition function should be written in a modular form.
In Appendix \ref{app:nonconstant}, we discuss non-constant RR partition functions focusing on binary self-dual codes. From several examples of self-dual codes, we observe that the RR partition functions are related to particular modular forms called the Eisenstein series.
Moreover, some examples show the RR partition functions written by the theta functions of $8$- and $16$-dimensional even self-dual lattices.
As suggested in Question 0.2 of \cite{Gaiotto:2018ypj}, it could be interesting to find a systematic construction of fermionic CFTs with the RR partition function related to even self-dual lattices.
    
This paper has mainly studied the torus partition functions with four sectors $\mathrm{NS}$, $\widetilde{\mathrm{NS}}$, $\mathrm{R}$, and $\widetilde{\mathrm{R}}$ depending on the choice of spin structures on the torus.
More generally, we can consider the higher-genus partition functions corresponding to fermionic CFTs living on higher-genus Riemann surfaces.
The higher-genus partition functions have been discussed for bosonic CFTs constructed from classical binary codes in \cite{Henriksson:2021qkt}. It tells us that higher-genus partition functions can be expressed as the higher-genus weight enumerator \cite{macwilliams1972generalizations}, which is a natural generalization of the complete weight enumerator $W_C(\{x_a\})$.
It could be extended to our fermionic CFTs constructed from classical $p$-ary codes. More recently, the average of higher-genus weight enumerators over $p$-ary self-dual codes has been computed in \cite{Kawabata:2022jxt}, which may enable us to study the averaged higher-genus partition functions of our fermionic CFTs.

Our fermionic CFTs have been defined by a set of vertex operators associated with the Construction A lattice.
In section \ref{sec:SUSY}, we have examined the three necessary conditions for fermionic CFTs to be supersymmetric.
To show that they are in fact supersymmetric, we have to give a supercurrent satisfying the OPEs \eqref{eq:scft_ope} from vertex operators. Recently, it has been pointed out that a fundamental property of quantum error-correcting codes can be used to construct supercurrents from vertex operators \cite{Harvey:2020jvu}.
If the construction of supercurrents can be generalized to our theories, it can be helpful to show the existence of supersymmetry.

\bigskip
\acknowledgments
We are grateful to T.\,Nishioka, K.\,Ohmori, and T.\,Okuda for valuable discussions.
The work of K.\,K. and S.\,Y. is supported by Forefront Physics and Mathematics Program to Drive Transformation (FoPM), a World-leading Innovative Graduate Study (WINGS) Program, the University of Tokyo. The work of K.\,K. is also supported by JSPS
KAKENHI Grant-in-Aid for JSPS fellows Grant No. 	23KJ0436.

\appendix

\section{List of notations}
\label{sec:list}

%% Line width in tables
\renewcommand{\arraystretch}{1.1}

\begin{center}
\begin{longtable}{c p{11cm} l}
    \toprule
    Symbol & Definition & See \\
    \midrule
    $p$ & prime number & \\
    $\BF_q$ & finite field of order $q$ & \\
    $C$ & linear code over $\BF_p$ & \\
    $n$ & length of $C$ & \\
    $c$ & codeword $c\in C$ & \\
    $G$ & generator matrix of a linear code & \\
    $H$ & parity check matrix of a linear code & \\
    $C^\perp$ & dual code of $C$ & \\
    $d(C)$ & minimum distance of $C$ & Eq.\eqref{def:minimum_dis} \\
    $\mathrm{wt}(c)$ & Hamming weight of $c$ & Eq.\eqref{eq:def_hamming_wt} \\
    $\mathrm{wt}_a(c)$ & number of components of $c$ that are equal to $a$ & Eq.\eqref{eq:composition} \\
    $\mathrm{Lee}(c)$ & Lee weight of $c$ & Eq.\eqref{eq:def_Lee_wt} \\
    $\mathrm{Norm}(c)$ & Euclidean norm of $c$ & Eq.\eqref{eq:def_norm_c} \\
    $W_C$ & complete weight enumerator of $C$ & Eq.\eqref{eq:def_complete_weight}\\
    $\Lambda(C)$ & Construction A lattice from a linear code $C$ & Eq.\eqref{eq:def_constA_lattice}\\
    $\Lambda^*$ & dual lattice of $\Lambda$ & Eq.\eqref{eq:dual_lattice}\\
    $\chi$ & characteristic vector & Eq.\eqref{eq:def_characteristic}\\
    $\Lambda_0$, $\Lambda_2$ & subset of $\Lambda(C)$ consisting of elements with even (odd) norm & Eq.\eqref{eq:lattice_NS}\\
    $\Lambda_1$, $\Lambda_3$ & shift of $\Lambda_0$ and $\Lambda_2$ by $\chi/2$ & Eq.\eqref{eq:def_lambda1_lambda3}\\
    $S(\Lambda)$ & shadow of a lattice $\Lambda$ & \\
    $Z_*(\tau)$ & partition functions of each sector ($*=\mathrm{NS},\,\widetilde{\mathrm{NS}},\,\mathrm{R},\,\widetilde{\mathrm{R}}$ in \eqref{eq:def_sector_label}) & Eq.\eqref{eq:def_partition_function_ff}\\
    $Z^{\alpha,\beta}(\tau)$ & partition functions associated with spin structures $(\alpha,\beta)$ & Eq.\eqref{eq:collective_part}\\
    $f_{a,p}^{\alpha,\beta}(\tau)$ & functions of $\tau$ depending on spin structures $(\alpha,\beta)$ ($a\in\BF_p$) & Eq.\eqref{eq:def_f_function}\\
    $C_0$, $C_2$ & subset of $C$ consisting of doubly-even (singly-even) codewords & Eq.\eqref{eq:def_C0_C2}\\
    $S(C)$ & shadow of a code $C$ & Eq.\eqref{eq:def_shadow_code}\\
    $s$ & element of $S(C)$ & \\
    $C_1$, $C_3$ & shift of $C_0$ and $C_2$ by $s\in S(C)$ & Eq.\eqref{eq:shadowcode_part}\\
    $\Delta_{\mathrm{NS}}$ & spectral gap of the NS sector & \\
    $h_{\min}$ & minimum conformal weight of the R sector & \\
    $RW_C$ & refined weight enumerator of $C$ & Eq.\eqref{eq:def_refined_enum}\\
    $W_{\widetilde{\mathrm{R}}}$ & complete weight enumerator corresponding to the $\widetilde{\mathrm{R}}$ sector & Eq.\eqref{eq:def_complete_weight_Rtilde}\\[0.3cm]
    \bottomrule
\end{longtable}
\end{center}

\section{Non-constant Ramond-Ramond partition functions}
\label{app:nonconstant}

%% Line width in tables
\renewcommand{\arraystretch}{0.75}

In section \ref{sec:SUSY}, we explored fermionic CFTs with constant RR partition functions. While most of the examples gave rise to 0 or a constant, we encountered the non-constant RR partition function written by the theta function of the $E_8$ lattice in section \ref{sss:n=20p=5}.
In this appendix, we further discuss non-constant RR partition functions of fermionic code CFTs and clearly express them in modular forms. In some cases, we find the RR partition function associated with even self-dual lattices. We focus on binary singly-even self-dual codes in \cite{database}.

Let $C\subset \BF_2^n$ be a singly-even self-dual code. We define $l_{\min}\in\BN$ as the minimum value of $l$ such that $|\{s\in C_1 \mid \mathrm{wt}_1(s)=l \}| \neq |\{s\in C_3 \mid \mathrm{wt}_1(s)=l \}|$. 
If no such $l$ exists, $W_{\widetilde{\mathrm{R}}}(x_0,x_1)=0$ and thus $Z_{\widetilde{\mathrm{R}}}(\tau)=0$. In the following, we assume that such $l$ exists. Then, $l_{\min}$ is equal to the minimum integer such that $x_0^{n-l}x_1^l$ is included in $W_{\widetilde{\mathrm{R}}}(x_0,x_1)$. In addition, the lowest degree term in $Z_{\widetilde{\mathrm{R}}}(\tau)$ is $q^h$ where $h=\frac{1}{4}l_{\min}-\frac{n}{24}$. 

Let us introduce
\begin{equation}
    \Theta_{\widetilde{\mathrm{R}}}(\tau) := \eta(\tau)^{-24h}\, Z_{\widetilde{\mathrm{R}}}(\tau) \,,
\end{equation}
which satisfies $\Theta_{\widetilde{\mathrm{R}}}(\tau)=\CO(q^0)$ when we take $q\to0$ since $\eta(\tau)=\CO(q^{\frac{1}{24}})$. 
From \eqref{T_transformation_p2} and \eqref{S_transformation_p2}, the property of the Dedekind eta $\eta(\tau)$ and the fact that $\mathrm{wt}_1(s)=\frac{n}{2} \mod 4$ for $s\in S(C)$, the modular transformation law is
\begin{align}
    \Theta_{\widetilde{\mathrm{R}}}(\tau+1) &= \exp\left[i2\pi\left(-\frac{l_{\min}}{4}+\frac{n}{24}+\frac{n}{12}\right)\right]  \Theta_{\widetilde{\mathrm{R}}}(\tau) = \Theta_{\widetilde{\mathrm{R}}}(\tau) \,, \\
    \Theta_{\widetilde{\mathrm{R}}}(-1/\tau) &= \tau^{-12h} (-i)^{-3 l_{\min}+\frac{n}{2}} (-1)^{\frac{n}{2}} \Theta_{\widetilde{\mathrm{R}}}(\tau) = \tau^{-12h}\, \Theta_{\widetilde{\mathrm{R}}}(\tau) \,.
\end{align}
Thus, $\Theta_{\widetilde{\mathrm{R}}}(\tau)$ is a modular form of weight $-12h=\frac{n}{2}-3l_{\min}$.

Since the dimensions of spaces of modular forms of weight $4,6,8,10,14$ are $1$, if $-12h$ is such a value, then the RR partition function must be a constant multiple of $E_{-12h}(\tau)$ where $E_{2k}(\tau)$ is the Eisenstein series of weight $2k$: ($\zeta(s)$: the Riemann zeta function)
\begin{align}
    E_{2k}(\tau) = \frac{1}{2\zeta(2k)}\sum_{(m,n)\,\in\,\BZ^2\backslash(0,0)} \frac{1}{(m+n\tau)^{2k}}\,.
\end{align}

We can directly calculate that the RR partition functions of the binary codes with length $n<18$ or $n=22,24$ are all $0$ or constant. Those of codes with $p=5$, $n\leq 16$ or $p=7$, $n\leq 12$ are also $0$ or constant. Several examples of the non-constant RR partition functions are presented below.

\subsection*{$n=18\,,\,$ $p=2$}
A self-dual code generated by the bottom matrix for $\BF_2, n=18$ in \cite{database}:
\begin{equation}
    G = \begin{bmatrix}
    1 & 0 & 0 & 0 & 0 & 0 & 0 & 1 & 1 & 0 & 0 & 0 & 0 & 0 & 0 & 1 & 1 & 1 \\
    0 & 1 & 0 & 0 & 0 & 0 & 0 & 0 & 0 & 0 & 0 & 0 & 0 & 1 & 0 & 1 & 1 & 0 \\
    0 & 0 & 1 & 0 & 0 & 0 & 0 & 1 & 1 & 0 & 0 & 0 & 0 & 0 & 1 & 0 & 0 & 0 \\
    0 & 0 & 0 & 1 & 0 & 0 & 0 & 1 & 0 & 1 & 1 & 0 & 0 & 0 & 1 & 1 & 1 & 1 \\
    0 & 0 & 0 & 0 & 1 & 0 & 0 & 0 & 1 & 1 & 1 & 0 & 0 & 0 & 1 & 1 & 1 & 1 \\
    0 & 0 & 0 & 0 & 0 & 1 & 0 & 1 & 1 & 0 & 1 & 0 & 0 & 0 & 0 & 0 & 0 & 0 \\
    0 & 0 & 0 & 0 & 0 & 0 & 1 & 1 & 1 & 1 & 0 & 0 & 0 & 0 & 0 & 0 & 0 & 0 \\
    0 & 0 & 0 & 0 & 0 & 0 & 0 & 0 & 0 & 0 & 0 & 1 & 0 & 1 & 0 & 1 & 0 & 1 \\
    0 & 0 & 0 & 0 & 0 & 0 & 0 & 0 & 0 & 0 & 0 & 0 & 1 & 1 & 0 & 0 & 1 & 1
    \end{bmatrix}\,.
\end{equation}

The complete weight enumerator for the $\widetilde{\mathrm{R}}$ sector is
\begin{equation}
    W_{\widetilde{\mathrm{R}}}(x_0,x_1) = x_0^{17} x_1 - 34 x_0^{13} x_1^5 + 34 x_0^5 x_1^{13} - x_0 x_1^{17}\,.
\end{equation}

The partition function of the $\widetilde{\mathrm{R}}$ sector is
\begin{equation}
    Z_{\widetilde{\mathrm{R}}}(\tau) = \frac{1}{\eta(\tau)^{18}} W_{\widetilde{\mathrm{R}}}(\theta_3(2\tau),\theta_2(2\tau))
    = \frac{2}{\eta(\tau)^{12}} E_6(\tau)\,.
\end{equation}

This is the only code with the non-constant RR partition function at $n=18,\, p=2$.

\subsection*{$n=20\,,\,$ $p=2$}
The bottom matrix for $\BF_2, n=20$ in \cite{database}:
\begin{equation}
    G = \begin{bmatrix}
    1 & 0 & 0 & 0 & 0 & 0 & 0 & 0 & 0 & 1 & 1 & 0 & 0 & 0 & 0 & 0 & 0 & 1 & 1 & 1 \\
    0 & 1 & 0 & 0 & 0 & 0 & 0 & 0 & 0 & 0 & 1 & 1 & 1 & 0 & 0 & 0 & 1 & 1 & 1 & 1 \\
    0 & 0 & 1 & 0 & 0 & 1 & 0 & 0 & 0 & 1 & 1 & 0 & 0 & 0 & 0 & 1 & 1 & 1 & 1 & 0 \\
    0 & 0 & 0 & 1 & 0 & 1 & 0 & 0 & 0 & 1 & 0 & 1 & 1 & 0 & 0 & 1 & 1 & 0 & 0 & 1 \\
    0 & 0 & 0 & 0 & 1 & 1 & 0 & 0 & 0 & 0 & 1 & 1 & 1 & 0 & 0 & 0 & 0 & 1 & 1 & 1 \\
    0 & 0 & 0 & 0 & 0 & 0 & 1 & 0 & 0 & 0 & 1 & 1 & 1 & 0 & 0 & 1 & 0 & 0 & 0 & 1 \\
    0 & 0 & 0 & 0 & 0 & 0 & 0 & 1 & 0 & 1 & 1 & 0 & 1 & 0 & 0 & 0 & 0 & 0 & 0 & 0 \\
    0 & 0 & 0 & 0 & 0 & 0 & 0 & 0 & 1 & 1 & 1 & 1 & 0 & 0 & 0 & 0 & 0 & 0 & 0 & 0 \\
    0 & 0 & 0 & 0 & 0 & 0 & 0 & 0 & 0 & 0 & 0 & 0 & 0 & 1 & 0 & 1 & 0 & 1 & 0 & 1 \\
    0 & 0 & 0 & 0 & 0 & 0 & 0 & 0 & 0 & 0 & 0 & 0 & 0 & 0 & 1 & 1 & 0 & 0 & 1 & 1
    \end{bmatrix}\,.
\end{equation}

The complete weight enumerator for the $\widetilde{\mathrm{R}}$ sector is
\begin{equation}
    W_{\widetilde{\mathrm{R}}}(x_0,x_1) = x_0^{18} x_1^2 + 12 x_0^{14} x_1^6 - 26 x_0^{10} x_1^{10} + 12 x_0^6 x_1^{14} + 
 x_0^2 x_1^{18}\,.
\end{equation}

The partition function of the $\widetilde{\mathrm{R}}$ sector is
\begin{equation}
\label{eq:nonconst_n20}
    Z_{\widetilde{\mathrm{R}}}(\tau) = \frac{1}{\eta(\tau)^{20}} W_{\widetilde{\mathrm{R}}}(\theta_3(2\tau),\theta_2(2\tau))
    = \frac{4}{\eta(\tau)^{8}} E_4(\tau)\,.
\end{equation}
Note that the Eisenstein series $E_4(\tau)$ can be written as the lattice theta function of the $E_8$ lattice, which is the unique even self-dual lattice in 8 dimensions.
The RR partition functions of all other codes at $n=20,\, p=2$ are $0$ or $1,2,3,4,6,10$ times \eqref{eq:nonconst_n20}.

\subsection*{$n=26\,,\,$ $p=2$}
The seventh matrix from the bottom for $\BF_2, n=26$ in \cite{database}:
\begin{equation}
    G = \begin{bmatrix}
    1 & 0 & 0 & 0 & 0 & 0 & 0 & 0 & 0 & 0 & 0 & 0 & 0 & 0 & 1 & 1 & 1 & 1 & 1 & 0 & 0 & 1 & 0 & 0 & 0 & 1 \\
    0 & 1 & 0 & 0 & 0 & 0 & 0 & 0 & 0 & 0 & 0 & 0 & 0 & 0 & 0 & 1 & 1 & 0 & 1 & 1 & 0 & 1 & 0 & 1 & 1 & 0 \\
    0 & 0 & 1 & 0 & 0 & 0 & 0 & 0 & 0 & 0 & 0 & 0 & 0 & 0 & 0 & 1 & 0 & 0 & 1 & 0 & 1 & 0 & 0 & 0 & 0 & 0 \\
    0 & 0 & 0 & 1 & 0 & 0 & 0 & 0 & 0 & 0 & 0 & 0 & 0 & 0 & 1 & 1 & 1 & 1 & 1 & 0 & 0 & 0 & 0 & 1 & 0 & 1 \\
    0 & 0 & 0 & 0 & 1 & 0 & 0 & 0 & 0 & 0 & 0 & 0 & 0 & 0 & 1 & 1 & 0 & 0 & 0 & 0 & 1 & 1 & 0 & 1 & 1 & 1 \\
    0 & 0 & 0 & 0 & 0 & 1 & 0 & 0 & 0 & 0 & 0 & 0 & 0 & 0 & 1 & 0 & 1 & 0 & 0 & 1 & 0 & 1 & 1 & 1 & 0 & 1 \\
    0 & 0 & 0 & 0 & 0 & 0 & 1 & 0 & 0 & 0 & 0 & 0 & 1 & 0 & 0 & 0 & 1 & 0 & 1 & 0 & 1 & 1 & 0 & 1 & 0 & 1 \\
    0 & 0 & 0 & 0 & 0 & 0 & 0 & 1 & 0 & 0 & 0 & 0 & 1 & 0 & 1 & 0 & 0 & 1 & 1 & 0 & 1 & 0 & 0 & 0 & 1 & 1 \\
    0 & 0 & 0 & 0 & 0 & 0 & 0 & 0 & 1 & 0 & 0 & 0 & 1 & 0 & 0 & 1 & 0 & 0 & 0 & 1 & 1 & 1 & 1 & 1 & 0 & 0 \\
    0 & 0 & 0 & 0 & 0 & 0 & 0 & 0 & 0 & 1 & 0 & 0 & 1 & 0 & 1 & 1 & 1 & 0 & 0 & 0 & 1 & 0 & 1 & 0 & 0 & 1 \\
    0 & 0 & 0 & 0 & 0 & 0 & 0 & 0 & 0 & 0 & 1 & 0 & 1 & 0 & 0 & 0 & 0 & 1 & 0 & 1 & 0 & 0 & 0 & 0 & 1 & 1 \\
    0 & 0 & 0 & 0 & 0 & 0 & 0 & 0 & 0 & 0 & 0 & 1 & 0 & 0 & 1 & 0 & 0 & 1 & 1 & 1 & 1 & 1 & 0 & 1 & 0 & 0 \\
    0 & 0 & 0 & 0 & 0 & 0 & 0 & 0 & 0 & 0 & 0 & 0 & 0 & 1 & 0 & 0 & 1 & 0 & 0 & 0 & 0 & 0 & 0 & 0 & 1 & 1
    \end{bmatrix}\,.
\end{equation}

The complete weight enumerator for the $\widetilde{\mathrm{R}}$ sector is
\begin{equation}
    W_{\widetilde{\mathrm{R}}}(x_0,x_1) = x_0^{25} x_1^1 - 20 x_0^{21} x_1^5 - 475 x_0^{17} x_1^9 + 475 x_0^9 x_1^{17} + 20 x_0^5 x_1^{21} - x_0 ^1 x_1^{25}\,.
\end{equation}

The partition function of the $\widetilde{\mathrm{R}}$ sector is
\begin{equation}
    Z_{\widetilde{\mathrm{R}}}(\tau) = \frac{1}{\eta(\tau)^{26}} W_{\widetilde{\mathrm{R}}}(\theta_3(2\tau),\theta_2(2\tau))
    = \frac{2}{\eta(\tau)^{20}} E_{10}(\tau)\,.
\end{equation}

The RR partition functions of all other codes at $n=26,\, p=2$ are $0$ or this.

\subsection*{$n=28\,,\,$ $p=2$}
The second matrix from the bottom for $\BF_2, n=28$ in \cite{database}:
\begin{equation}
    G = \begin{bmatrix}
    1 & 0 & 0 & 0 & 0 & 0 & 0 & 0 & 0 & 0 & 0 & 0 & 0 & 0 & 0 & 1 & 1 & 1 & 1 & 1 & 1 & 1 & 1 & 1 & 1 & 1 & 1 & 1 \\
    0 & 1 & 0 & 0 & 0 & 0 & 0 & 0 & 0 & 0 & 0 & 0 & 0 & 0 & 1 & 0 & 0 & 0 & 1 & 0 & 0 & 0 & 0 & 0 & 1 & 0 & 1 & 1 \\
    0 & 0 & 1 & 0 & 0 & 0 & 0 & 0 & 0 & 0 & 0 & 0 & 0 & 0 & 1 & 1 & 0 & 0 & 1 & 1 & 0 & 0 & 0 & 0 & 1 & 1 & 1 & 0 \\
    0 & 0 & 0 & 1 & 0 & 0 & 0 & 0 & 0 & 0 & 0 & 0 & 0 & 0 & 1 & 0 & 0 & 1 & 0 & 1 & 1 & 1 & 1 & 1 & 0 & 1 & 1 & 0 \\
    0 & 0 & 0 & 0 & 1 & 0 & 0 & 0 & 0 & 0 & 0 & 0 & 0 & 0 & 1 & 0 & 1 & 1 & 1 & 0 & 0 & 1 & 0 & 0 & 1 & 0 & 1 & 0 \\
    0 & 0 & 0 & 0 & 0 & 1 & 0 & 0 & 0 & 0 & 0 & 0 & 0 & 0 & 1 & 1 & 0 & 1 & 0 & 0 & 0 & 0 & 1 & 0 & 1 & 0 & 1 & 1 \\
    0 & 0 & 0 & 0 & 0 & 0 & 1 & 0 & 0 & 0 & 0 & 0 & 0 & 0 & 1 & 0 & 1 & 0 & 0 & 1 & 0 & 0 & 0 & 1 & 1 & 0 & 1 & 1 \\
    0 & 0 & 0 & 0 & 0 & 0 & 0 & 1 & 0 & 0 & 0 & 0 & 0 & 0 & 1 & 1 & 1 & 0 & 0 & 0 & 1 & 1 & 1 & 1 & 1 & 1 & 0 & 0 \\
    0 & 0 & 0 & 0 & 0 & 0 & 0 & 0 & 1 & 0 & 0 & 0 & 0 & 0 & 1 & 1 & 1 & 1 & 1 & 1 & 1 & 0 & 1 & 1 & 0 & 0 & 0 & 0 \\
    0 & 0 & 0 & 0 & 0 & 0 & 0 & 0 & 0 & 1 & 0 & 0 & 0 & 0 & 1 & 0 & 0 & 0 & 1 & 1 & 0 & 1 & 1 & 0 & 1 & 0 & 0 & 1 \\
    0 & 0 & 0 & 0 & 0 & 0 & 0 & 0 & 0 & 0 & 1 & 0 & 0 & 0 & 1 & 1 & 1 & 1 & 0 & 1 & 1 & 1 & 0 & 0 & 0 & 1 & 0 & 1 \\
    0 & 0 & 0 & 0 & 0 & 0 & 0 & 0 & 0 & 0 & 0 & 1 & 0 & 0 & 1 & 1 & 0 & 0 & 1 & 0 & 0 & 1 & 0 & 1 & 0 & 0 & 1 & 1 \\
    0 & 0 & 0 & 0 & 0 & 0 & 0 & 0 & 0 & 0 & 0 & 0 & 1 & 0 & 1 & 0 & 1 & 0 & 1 & 0 & 0 & 0 & 1 & 0 & 0 & 1 & 1 & 1 \\
    0 & 0 & 0 & 0 & 0 & 0 & 0 & 0 & 0 & 0 & 0 & 0 & 0 & 1 & 1 & 0 & 0 & 1 & 1 & 0 & 0 & 0 & 0 & 1 & 1 & 1 & 0 & 1
    \end{bmatrix}\,.
\end{equation}

The complete weight enumerator for the $\widetilde{\mathrm{R}}$ sector is
\begin{equation}
    W_{\widetilde{\mathrm{R}}}(x_0,x_1) = x_0^{26} x_1^2 + 26 x_0^{22} x_1^6 + 143 x_0^{18} x_1^{10} - 340 x_0^{14} x_1^{14} + 143 x_0^{10} x_1^{18} + 26 x_0^6 x_1^{22} + x_0^2 x_1^{26}\,.
\end{equation}

The partition function of the $\widetilde{\mathrm{R}}$ sector is
\begin{equation}
    Z_{\widetilde{\mathrm{R}}}(\tau) = \frac{1}{\eta(\tau)^{28}} W_{\widetilde{\mathrm{R}}}(\theta_3(2\tau),\theta_2(2\tau))
    = \frac{4}{\eta(\tau)^{16}} E_8(\tau)\,.
\end{equation}
Note that $E_8(\tau)$ can be expressed by the theta function associated with the 16-dimensional even self-dual lattices.
The RR partition functions of all other codes at $n=28,\, p=2$ are $0$ or constant multiples of this.

\subsection*{$n=36\,,\,$ $p=2$}
The second matrix from the top for $\BF_2, n=36, d=8$ in \cite{database}:
\begin{equation}
    G = \begin{bmatrix}
    1&0&0&0&0&0&0&0&0&0&0&0&0&0&0&1&0&0&0&1&0&0&0&0&1&0&1&0&0&0&1&0&1&0&0&1 \\
    0&1&0&0&0&0&0&0&0&0&0&0&0&0&0&1&0&0&0&1&0&0&0&0&1&0&1&0&0&0&0&1&0&1&1&0 \\
    0&0&1&0&0&0&0&0&0&0&0&0&0&0&0&1&0&0&0&1&0&0&1&1&0&1&0&1&0&1&1&1&0&0&1&1 \\
    0&0&0&1&0&0&0&0&0&0&0&0&0&0&0&1&0&0&0&1&0&0&1&1&0&1&0&1&1&0&0&0&0&0&0&0 \\
    0&0&0&0&1&0&0&0&0&0&0&0&0&0&0&1&0&0&0&1&0&1&0&1&1&1&1&0&0&0&1&1&0&0&1&1 \\
    0&0&0&0&0&1&0&0&0&0&0&0&0&0&0&1&0&0&0&1&0&1&0&1&1&1&0&1&1&1&0&0&1&1&0&0 \\
    0&0&0&0&0&0&1&0&0&0&0&0&0&0&0&1&0&0&0&1&0&1&1&0&0&1&0&0&0&0&1&0&0&0&0&1 \\
    0&0&0&0&0&0&0&1&0&0&0&0&0&0&0&1&0&0&0&1&0&1&1&0&1&0&0&0&1&1&1&0&1&1&0&1 \\
    0&0&0&0&0&0&0&0&1&0&0&0&0&0&0&0&0&1&0&0&0&1&1&0&1&1&1&1&1&1&1&0&1&1&1&0 \\
    0&0&0&0&0&0&0&0&0&1&0&0&0&0&0&0&0&1&0&0&0&1&0&1&0&0&0&0&1&1&0&1&1&1&0&1 \\
    0&0&0&0&0&0&0&0&0&0&1&0&0&0&0&1&0&1&0&0&0&1&1&1&1&1&0&1&1&0&0&1&0&1&0&0 \\
    0&0&0&0&0&0&0&0&0&0&0&1&0&0&0&1&0&1&0&0&0&1&0&0&1&1&1&0&0&1&1&0&0&1&0&0 \\
    0&0&0&0&0&0&0&0&0&0&0&0&1&0&0&0&0&1&0&1&0&1&1&1&0&0&0&1&0&1&0&0&0&1&0&1 \\
    0&0&0&0&0&0&0&0&0&0&0&0&0&1&0&0&0&1&0&1&0&1&0&0&0&0&1&0&1&0&0&0&0&1&1&0 \\
    0&0&0&0&0&0&0&0&0&0&0&0&0&0&1&1&0&0&0&0&0&0&1&1&1&1&0&0&0&0&0&0&1&1&1&1 \\
    0&0&0&0&0&0&0&0&0&0&0&0&0&0&0&0&1&1&0&0&0&0&1&1&1&1&1&1&0&0&0&0&0&0&0&0 \\
    0&0&0&0&0&0&0&0&0&0&0&0&0&0&0&0&0&0&1&1&0&0&1&1&1&1&0&0&0&0&1&1&1&1&0&0 \\
    0&0&0&0&0&0&0&0&0&0&0&0&0&0&0&0&0&0&0&0&1&1&1&1&0&0&0&0&0&0&1&1&0&0&1&1
    \end{bmatrix}\,.
\end{equation}

The complete weight enumerator for the $\widetilde{\mathrm{R}}$ sector is
\begin{equation}
\begin{aligned}
    W_{\widetilde{\mathrm{R}}}(x_0,x_1) =&\; x_0^{34}x_1^2+34x_0^{30}x_1^6+544x_0^{26}x_1^{10}+1598x_0^{22}x_1^{14}-4354x_0^{18}x_1^{18} \\
    &+ 1598x_0^{14}x_1^{22}+544x_0^{10}x_1^{26}+34x_0^6x_1^{30}+x_0^2x_1^{34}\,.
\end{aligned}
\end{equation}

The partition function of the $\widetilde{\mathrm{R}}$ sector is
\begin{align}
\begin{aligned}
    Z_{\widetilde{\mathrm{R}}}(\tau) &= \frac{1}{\eta(\tau)^{36}} W_{\widetilde{\mathrm{R}}}(\theta_3(2\tau),\theta_2(2\tau))\\
    &= 4 j(\tau) - 384
    = \frac{4}{\eta(\tau)^{24}}\left(\frac{17}{18}E_4(\tau)^3+\frac{1}{18}E_6(\tau)^2\right)\,,
\end{aligned}
\end{align}
where $j(\tau)$ is the $j$ function (sometimes called Klein's $j$ function).

\bibliographystyle{JHEP}
\bibliography{SUSY_ECC}
\end{document}